\documentclass[12pt, a4paaper, oneside]{article}
\usepackage{lineno,hyperref}

\usepackage[margin=1in]{geometry}  
\usepackage{graphicx}              
\usepackage{epsfig}
\usepackage{amsmath}               
\usepackage{amssymb}
\usepackage{epstopdf}
\usepackage{verbatim}
\usepackage{mathptmx}
\usepackage{mathalpha}
\usepackage{mathtools}
\usepackage{graphicx, times}
\usepackage{amsfonts}              
\usepackage{amsthm}                
\usepackage{multicol}
\usepackage{algorithm}
\usepackage{algorithmic}
\usepackage{varwidth}
\usepackage{parskip}
\usepackage{hyperref}
\usepackage{rotating}
\usepackage{multirow}
\usepackage{pdflscape}
\usepackage[numbers]{natbib}
\usepackage{caption}
\usepackage{xcolor}
\usepackage{color}
\usepackage{comment}
\usepackage{subcaption}
\newcommand*{\rom}[1]{\expandafter\@\romannumeral #1}

\newcommand{\bea}{\begin{eqnarray}}
	\newcommand{\eea}{\end{eqnarray}}
\newcommand{\bee}{\begin{eqnarray*}}
	\newcommand{\eee}{\end{eqnarray*}}
\newcommand{\nn}{\nonumber}
\newcommand{\lb}{\label}

\newtheorem{thm}{Theorem}[section]

\begin{document}
\author{ Ashish Verma$^{1}$\footnote{ashishhaiverma@gmail.com}, Sourav Pradhan$^{1}$\footnote{spiitkgp11@gmail.com}
\vspace{.3cm}\\
${}^{1}$ Department of Mathematics,\\ Visvesvaraya National Institute of Technology, Nagpur, India.}
\date{}
\title{Queue occupancy and server size distribution of a queue length dependent vacation queue with an optional service}

\maketitle

\begin{abstract}
The discrete time queueing system is highly applicable to modern telecommunication systems, where it provides adaptive packet handling, congestion controlled security/inspection, energy efficient operation, and supports bursty traffic common in 5G, Internet of Things (IoT), and edge computing environments. In this article, we analyze an infinite-buffer discrete-time batch-arrival queue with single and multiple vacation policy where customers are served in batches, in two phases, namely first essential service (FES) and second optional service (SOS). In such systems, the FES corresponds to basic data processing or packet routing, while SOS represents secondary tasks such as encryption, error checking, data compression, or deep packet inspection that may not be necessary for every packet. Here, we derive the bivariate probability generating functions for the joint distribution of the number of packets waiting for transmission and the number are being processed immediately after the completion of both the FES and SOS. Furthermore, the complete joint distribution at arbitrary time slots, including vacation completion states, is established. Numerical illustrations demonstrate the applicability of the proposed framework, including an example with discrete phase type service time distribution. Finally, the sensitivity analysis of the key parameters on marginal system's probabilities and different performance measures have been investigated through several graphical representations.
\end{abstract}
{\bf Keywords:} Binomial distribution, First essential service, Second optional service, Joint distribution, Bivariate probability generating function, Batch-service.
\section{Introduction}\label{sec:1}
\hspace*{0.3 cm} Queueing models play a fundamental role in the performance evaluation of modern communication networks, wireless systems, and service platforms where traffic is random, resources are limited, and processing decisions depend on system conditions. In wireless telecommunication systems such as 5G/6G access points, Internet of Things (IoT) gateways, and edge computing servers, traffic often arrives in bursts, processing requirements vary across packets, and energy efficient operation is essential due to battery constraints and sustainability requirements. These practical considerations motivate the development of analytical frameworks that incorporate batch arrivals, adaptive service phases, and dynamic energy saving mechanisms.\\
\hspace*{0.3cm}
In many wireless applications, packets undergo essential processing operations such as routing, decoding, or minimal security verification. In addition, packets may require optional tasks such as deep packet inspection, strong encryption, additional error checking, or content filtering. Performing these optional tasks for every packet can significantly increase processing overhead and delay, especially during congestion. Hence, real world systems often execute such optional tasks selectively based on network load. This motivates the study of two phase service structures, where the First Essential Service (FES) is mandatory, while the Second Optional Service (SOS) is executed only under favorable queue conditions.\\
\hspace*{0.3cm}
Similarly, wireless access points, base stations, and network nodes commonly employ sleep or low power modes to reduce energy consumption during idle periods. These mechanisms can be modeled using queue length dependent vacations, where the server (or network device) enters a sleep mode when the system becomes empty and resumes operation only when the queue length crosses a specific threshold. Such vacation models help to capture the energy delay trade off inherent in green wireless communication.\\
\hspace*{0.3cm}
The present work introduces and analyzes a discrete time single server group arrival queueing system with FES as well as SOS, batch size dependent service time and queue length dependent vacation time, which is denoted by $Geo^{X}/G_{n,k}^{(a,b)}/1$. Arrivals follow a compound Bernoulli process, representing the bursty nature of wireless traffic. In FES, service times follow general distribution and depend on the number of customers in a batch taken by the server for a particular cycle of processing, enabling congestion-sensitive service control policy. The single server starts processing of SOS, after the completion of FES of a particular batch, by allowing a part of those customers (already got service in FES) upto a maximum pre defined threshold. When the system becomes empty, the server enters a vacation period and resumes the service only when the number reaches to a specified threshold, capturing energy efficient sleep mode behavior.
\subsection{Literature review}
Discrete and continuous-time batch-service queues have been extensively studied due to their wide-ranging applications in telecommunications, manufacturing, wireless sensor and IoT networks, healthcare, and laboratory systems etc.
A number of related works on finite and infinite buffer batch service queues, with or without batch size dependent service times, are summarized below. In a series of articles, Banerjee et al. (2015) \cite{banerjee2015analysis}; Gupta et al. (2020) \cite{gupta2020finite}; Tamrakar and Banerjee (2025) \cite{tamrakar2025state} examined both discrete and continuous-time batch-service models with finite waiting capacities and presented joint distributions of queue length and server content, along with several performance measures of practical significance. For systems with infinite waiting space, Pradhan and Gupta (2019) \cite{pradhan2019infinite}; Nandy and Pradhan (2023) \cite{nandy2023stationary}; Karan and Pradhan (2025)  \cite{karan2025queue}; derived bivariate probability generating functions for the joint queue and server size, from which complete joint distributions and various performance measures were subsequently obtained.\\
\hspace*{0.3cm}
In another line of work, Gupta et al. (2021) \cite{gupta2021complete} have investigated a discrete-time batch-service queue with a discrete Markovian arrival process (MAP) and have derived vector generating functions and detailed extraction procedures for the associated probability vectors. Bulk-service queues have also been explored in several other studies: for instance, Tadj et al. (2026) \cite{tadj2006quorum}; Tadj and Ke (2008) \cite{tadj2008hysteretic} analyzed the systems with setup time, N-policy, Bernoulli vacations, and hysteretic bulk quorum service with optional re-service; Claeys et al. (2013) \cite{claeys2013versatile} examined discrete-time batch service queues and obtained joint probability generating functions (pgfs) without explicitly extracted probabilities; Yu et al. (2015) \cite{yu2015algorithm}  proposed an algorithm to compute queue length distributions and Vadivu and Arumuganathan (2015) \cite{vadivu2015cost} performed a cost analysis for a bulk-service queue with MAP input and close-down time.
The detailed modeling, methodologies and numerical techniques are comprehensively discussed in the monograph by Chaudhry (2000) \cite{chaudhry2000numerical}, while an extensive treatment of bulk-service queues appears in the book by Chaudhry and Templeton (1983) \cite{chaudhry1983first}. Additional related literature may be found in the references of these research articles.\\
\hspace*{0.3cm}In Gupta and Sikdar (2004) \cite{gupta2004finite}, Sikdar (2024) \cite{sikdar2024geom}, Sikdar and Gupta (2008) \cite{sikdar2008batch},                            several vacation queueing models were investigated, and queue length distributions at various epochs were obtained using either embedded Markov chain techniques (EMCT) or the supplementary variable technique (SVT). In another line of research, Samanta et al. (2007) \cite{samanta2007dbmap}; Samanta and Zhang (2012) \cite{samanta2012gidmsp} analyzed discrete time vacation queues under single and multiple vacation policies derived queue length distributions at different time instances. Furthermore, Jain and Bhagat (2016) \cite{jain2016mx} studied an 
$M^X/G/1$ vacation system incorporating  multi-optional services and reneging. A queue length dependent vacation model with group arrivals has been analyzed by Pradhan et al. \cite{pradhan2024versatile}, where power consumption characteristics of the system is highlighted through numerical illustrations. Nevertheless, none of these above works considered batch size dependent service times, queue length dependent vacation, joint queue and server content distribution, FES and SOS simultaneously.\\
\hspace*{0.3cm}The pioneer contribution on SOS is due to Madan (2000) \cite{madan2000mg1}, who analyzed an $M/G/1$ queue under a two-stage service policy using SVT, assuming a general service time distribution for the FES and an exponential service time for SOS. Subsequent studies extended this framework in various directions. For instance, Medhi (2002) \cite{medhi2002single} considered an $M/G/1$ system with exponentially distributed SOS and obtained the pgf of the system size; Wang (2004) \cite{wang2004mg1} incorporated server breakdown in an $M/G/1$ queue with SOS; and Choudhury and Paul (2006) \cite{choudhury2006batch}; Choudhury et al. (2009) \cite{choudhury2009npolicy} examined group arrival systems with additional optional service under N-policy and delayed repair. Using SVT and the maximum entropy approach, Singh et al. (2011) \cite{singh2011state} derived queue length distributions for a system with optional service. Systems with unreliable servers, balking, or Bernoulli vacations were investigated in Jain and Kaur (2021) \cite{jain2021bernoulli}, whereas customer impatience and working vacations were incorporated by Vijaya Laxmi and Jyothsna (2022) \cite{vijaya2022cost}, Vijaya Laxmi et al. (2023) \cite{vijaya2023retention}. However, none of these studies incorporated batch-size-dependent service mechanisms, queue length dependent vacation which play an important role in reducing high delays in communication networks.\\
\hspace*{0.3cm} Few researchers have examined batch service mechanisms for both the FES and SOS. For instance, Ayyappan and Supraja (2018) \cite{ayyappan2018batch} and Ayyappan and Nirmala (2020) \cite{ayyappan2020mx} studied an optional service queue with server unreliability, close down, and setup under N-policy. They derived the pgf of the queue length distribution at arbitrary epoch. In another direction, Deepa and Azhagappan (2018) \cite{deepa2018analysis}; Deepa and Azhagappan (2022)  \cite{deepa2022state} have analyzed a group arrival batch service queue with SOS and state dependent arrival rates and have obtained the pgf of the queue size along with expressions for the mean queue length, mean waiting time, and expected busy/idle periods. Further, Vijaya Laxmi and George (2020)  \cite{vijaya2020transient}; Vijaya Laxmi and George (2022) \cite{vijaya2022steady} have provided transient results for batch-service queues with SOS and reneging. In all these studies, the served group in FES was assumed either to proceed entirely to the SOS stage or to leave the system with a certain probability. However, practical situations may arise in which only a portion of the served batch continues to SOS, a realistic feature missing in the above works. In addition, server content distribution, which plays a crucial role in evaluating productivity in batch service systems, was not considered in any of these articles.\\
\hspace*{0.3cm} Recently, Banerjee and Lata (2024) \cite{banerjee2024complete} have obtained the complete joint distribution of a continuous-time finite-buffer bulk-service queue with SOS, but without any vacation policies. Their methodology involves construction of a TPM for the joint queue–server content distribution at FES and SOS completion instants. Each element of this TPM is itself a matrix, making the construction computationally intensive. They have further used the SVT separately to determine the joint distribution at arbitrary epochs. In contrast, our present work investigates an infinite buffer discrete time analogue of this model, augmented with single and multiple vacation policies, batch arrival processes, queue length dependent vacation mechanisms, and characterized by significantly different modeling and numerical aspects. Incorporation of such enriched vacation strategies is particularly beneficial in shared communication environments, where adaptive server inactivation based on the queue length can substantially improve energy efficiency and resource utilization. Moreover, a discrete time framework is highly applicable to satellite communication, wireless networks, and IEEE 802.11n WLANs, where data transmission occurs in uniformly spaced time slots due to packet switching protocols and where bursty, batch-type traffic arrivals naturally arise. A notable advantage of our approach is that the use of SVT alone eliminates the need for a separate TPM construction and enables direct evaluation of state probabilities at arbitrary slots in a simple and efficient manner, even under batch arrival dynamics and queue-dependent vacation behavior.
\subsection{Motivation and major contribution}
In modern service systems such as cloud computing servers, smart manufacturing units, health-care centers, and telecommunication nodes, the demand generally arrives in bulk (group or batch form) instead of individual units. For example, in a smart factory, multiple items are collected for inspection and transmitted simultaneously to a central processing unit for analysis, or several patients may reach a hospital registration counter simultaneously. To model such practical scenarios, queue with bulk arrivals 
have gained increasing attention.
Moreover, a customer may require primary essential processing first, followed by additional optional services like data validation, quality inspection, or packaging. In many real systems, more than one optional service exists, and the secondary optional service may be offered only to a fraction of customers based on the need or choice. This makes the system more realistic and flexible when compared to single optional service frameworks available in the literature.\\\hspace*{0.3cm}
Additionally, modern service centers adopt energy saving policies. During low demand, servers either turn off or enter a reduce power mode rather than staying idle. Many organizations implement queue length dependent vacations, where the server decides to take a vacation depending on the current queue size. Such a dynamic vacation model not only reduces idle energy wastage but also helps maintain an optimal balance between service cost, system responsiveness and energy consumption. \\
\hspace*{0.3cm}
Keeping these aspects in mind, we develop the model in a discrete time framework with late arrival delayed system (LAS-DA) mechanism, with the assumption of infinite waiting space. Firstly, we construct the steady state governing equations of the system using the SVT, which allows us to avoid the complicated formulation of the TPM. An additional advantage of using SVT is that it naturally leads to the joint distribution of the system at an arbitrary slot.	Secondly, we obtain the bivariate pgf at both the FES and SOS completion epochs. This result represents one of the key contributions of the present work. In addition, we derive the pgf of the queue length distribution at vacation completion slots. From these pgf, the corresponding probability distributions are explicitly extracted, and a variety of significant performance measures such as energy saving factor, mean energy consumption, mean waiting delay, and system throughput are evaluated. Moreover, we present several numerical experiments, including cases involving discrete phase-type service distributions, to demonstrate the behavioral characteristics of the proposed model.
\subsection{Organization of the paper}
\noindent This paper is organized as follows. The next section provides a comprehensive description of the proposed model. Section \ref{PA} highlights its potential real-world applications. The governing system equations are formulated in Section \ref{SE}, and the bivariate pgf, joint distribution, and stability condition are derived in Section \ref{SC}. Special cases and their validation with existing literature are discussed in Section \ref{PC}. Section \ref{RS} presents the arbitrary-slot state probabilities along with the corresponding performance measures in Section \ref{PI}. Numerical experiments are reported in Section \ref{NI}. Finally, concluding remarks and future research directions are summarized at the end of the paper.
\section{Description of proposed mathematical model}\lb{MM}
\begin{itemize}
	
	\item \emph{\textbf{Discrete-time set-up:}} The time axis is partitioned into equal intervals, known as slots. Without loss of generality, each slot is taken to be of unit length and the time scale is indexed by $0,1,2,\ldots,k,\ldots$. The model is examined under the late arrival delayed access (LAS-DA) framework, where arrivals can occur during the interval $(k-,k)$ and departures may take place during $(k,k+)$. A distinguishing feature of discrete-time systems is that an arrival and a departure may simultaneously occur at the boundary of a slot. The different time epochs of the LAS-DA mechanism are illustrated in Figure 1.
	\begin{figure}[h!]
		\begin{center}
			\includegraphics[scale=0.4]{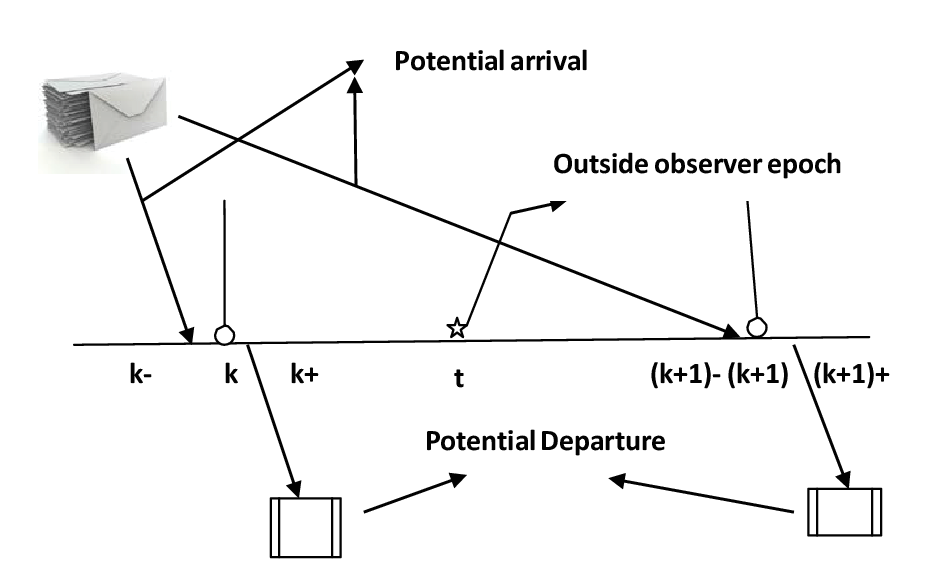}
		\end{center}
		\caption{Different epochs of LAS-DA system}\lb{figg1}
	\end{figure}
	\item \emph{\textbf{Arrival process:}} In communication network traffic modeling, it is essential to choose an arrival mechanism that is easy to analyze and free from fractal behavior such as self-similarity. One such suitable choice is the compound Bernoulli arrival process, widely used in statistical multiplexing of packet switched networks. On account of this, we consider a Bernoulli batch arrival process in which a bulk arrival occurs with
	rate $\lambda$, (no arrival with probability $\bar{\lambda}=1-\lambda$) which leads to geometric inter-arrival times. The probability mass function (pmf) of the group size distribution is characterized by the random variable $X$ and given by $\varphi_{n}=\lambda\bar{\lambda}^{n-1},~0\le \lambda\le 1,~n\ge1$. The size of arriving group of customers during a time slot consists of identically distributed random variables with $Pr(X=k)=g_{k}$, where $k={1,2,3..\dots}$ with corresponding probability generating function (pgf) $G(z)=\sum_{k=1}^{\infty}g_{k}z^k$ and hence, mean arrival group size $\bar{g}=\sum_{k=1}^{\infty}kg_{k}$. However, in most of the practical scenarios, group size is finite. Keeping this in mind, during numerical illustration, we consider the group size to be finite.
	
	\item \emph{\textbf{Service process and rule:}} In this model, a single server offers FES, which is compulsory for all customers, followed by an SOS that may be availed by only a subset of customers. Both FES and SOS are rendered in batches, and both service times are assumed to follow general distributions that depend on the batch size, thereby allowing the model to capture a wide range of practical scenarios.\\
	(i) \emph{\textbf{Service policy at FES:}} 
	In the FES stage, the single server operates under the general bulk-service $(a,b)$ rule. Specifically, when the queue length is fewer than `$a$',  the server remains idle, and service begins only when at least `$a$' customers are available, with a maximum of `$b$' customers served simultaneously. A batch of `$r$' customers, where $(a\leq r \leq b)$ is served according to a general service-time distribution with pmf $S_r(n),~n=1,2,\cdots$,  corresponding $z$-transform $S^*_r(z)=\sum_{\ell=1}^{\infty}S_{r}(\ell)z^{\ell}$ and the mean service time $\frac{1}{\mu_r}=\bar S_r=S_r^{*(1)}(1)$, $a\leq r \leq b$, where $S_r^{*(1)}(1)$ denotes the derivative of $S^*_r(z)$ evaluated at $z=1$.
	~\\
	(ii) \emph{\textbf{Service policy at SOS:}} After the completion of FES, a subset or all of the customers from the served batch may choose to undertake the SOS, which is also provided by the same server. From a batch of size $r$ $(a \le r \le b)$ that has just completed FES, a group of $y$ customers $(1 \le y \le r)$ enter the SOS stage, where the selection is assumed to follow a binomial distribution with the pmf given by:
	\begin{eqnarray}
		\chi_{r,y}=\binom{r}{y}p^y(1-p)^{r-y},~0\leq y \leq r, ~a\leq r \leq b \label{eq1}
	\end{eqnarray}
	where $p$ $(0<p<1)$ denotes the probability that an individual customer opts for the SOS, and $\sum_{y=0}^{r}\chi_{r,y}=1$. The service time during SOS is assumed to follow a general distribution with pmf $S_y^{O}(n)$, $n=1,2,\ldots$, corresponding $z$-transform $S_y^{*O}(z)=\sum_{\ell=1}^{\infty}S_{y}^{O}(\ell)z^{\ell}$, and mean service duration $\frac{1}{\mu_y^{O}} = \bar{S}_y^{O} = S_y^{*O(1)}(1)$ for $1 \le y \le b$, where $S_y^{*O(1)}(1)$ denotes the derivative of $S_y^{*O}(z)$ evaluated at $z=1$.
	
	\item \emph{\textbf{Vacation policy:}} Several vacation policies, including single vacation, multiple vacations, differentiated vacation, working vacation, $N$-limited vacation schemes etc., have been extensively investigated by various authors. Among these, single and multiple sleep/vacation strategies are particularly popular in improving energy efficiency in 5G cellular networks. In the performance evaluation of computer and manufacturing systems, server vacations deliver as an effective tool to enhance operational flexibility. Incorporating single or multiple vacations allows the queueing model to better reflect practical situations.
	Motivated by this, the present study analyzes the system under two vacation policies, namely single and multiple vacations. It is important to note that the server operates exclusively under one of these policies throughout the service process, and switching between them is not allowed once the policy is chosen. For convenience, the steady state equations and corresponding performance measures for both vacation policies are presented in a unified manner, and each can be separately derived using the indicator function defined as:
	\begin{align*}
		\delta_{p} =\left\{\begin{array}{r@{\mskip\thickmuskip}l}
			0, ~~& \text{if the system is working with single vacation,} \\
			1, ~~& \text{if the system is working with multiple vacations}.
		\end{array}\right.
	\end{align*}
	All the derived equations and performance measures can be obtained by setting $\delta_p=0$ for the single vacation case and $\delta_p=1$ for the multiple vacation case.\\
	(i) \emph{\textbf{Single vacation:}} After completing the service of a batch, the server takes a vacation if the number of customers in the queue is below the minimum service threshold $`a$'. At the end of the random vacation period, the server returns and resumes service by serving all $`r$', $(a \leq r \leq b)$ customers waiting in the queue (if any). If the queue still contains fewer than $`a$' customers, the server remains idle until at least $`a$' customers are accumulated.\\
	(ii) \emph{\textbf{Multiple vacations:}} After completing a batch service, the server examines the queue. If the number of waiting customers is fewer than the minimum required threshold $`a$', it immediately enters a vacation state; otherwise, it continues the service based on the $(a,b)$ policy. At the end of each vacation, if the queue size is still below $`a$', the server initiates another vacation and repeats this process until at least $`a$' customers are accumulated.\\
	We consider $V^{[k]},~(0 \le k\le a-1)$, as the random variable represents the vacation time when the server is on $k$'th type vacation. The pmf of $k$'th type vacation, which is of random length, is defined as $v^{[k]}(n)$, where $0 \le k \le a-1,~n=1, 2, 3,\dots$ and the corresponding $z$-transformation is given by $V^{[k]*}(z) = \sum_{n=1}^{\infty} v^{[k]}(n)z^{n},$ and the mean is expressed as $E(V^{[k]}) = \bar{V}^{[k]}=\frac{1}{\eta_{k}}=V^{[k]*(1)}(1)$.
	\end{itemize}
	\section{Applications}\lb{PA}
	\subsection{Application of the model in Wireless Networks / IoT / Cloud computing/ Manufacturing}
	This model presents a flexible analytical framework for a variety of modern engineering systems in which traffic arrives in bursts, service complexity varies with load, and idle periods must be managed efficiently. In wireless communication networks, the batch arrival structure naturally reflects packet bursts generated by users, multimedia services, and uplink traffic in 5G/6G systems, while the two-phase service mechanism corresponds to essential packet handling FES followed by optional tasks SOS such as enhanced error correction, adaptive modulation adjustments, encryption updates, or deep packet inspection that are selectively applied when the queue length lies within a moderate congestion region $(a,b)$. The incorporated vacation discipline mirrors energy-saving features such as Discontinuous Reception (DRX) in mobile devices and sleep modes in base stations, enabling reduced power consumption during idle periods. In IoT networks, where devices exhibit sporadic transmission, limited battery capacity, and variable processing needs, the model captures the interplay between mandatory sensing/communication tasks and optional analytic or validation steps, while vacations reflect low-power sleep states crucial for extending device lifetime. Similarly, in cloud and edge computing environments, the model accurately represents multi-stage service pipelines in which essential processing is performed for all tasks but computationally expensive functions such as compression, indexing, anomaly detection, or data sanitization are triggered only under moderate workload to maintain latency guarantees and avoid resource bottlenecks. During idle periods, virtual machines or edge servers may be suspended or migrated, effectively modeled through queue length dependent vacation periods. In manufacturing and industrial automation, the FES corresponds to primary processing or assembly, such as optional quality inspection, calibration, or finishing tasks are executed conditionally based on buffer occupancy, ensuring efficient throughput while maintaining product quality. Machine downtime, maintenance, or deliberate energy-saving shutdowns are naturally represented through the vacation component of the model. Overall, this queueing structure provides a unified representation of diverse systems that require adaptive service control, congestion-aware processing, and energy-efficient operational strategies.
	\subsection{Other areas of possible application of the model}
	\begin{itemize}
		\item Medical Facilities:	The proposed model is equally applicable to medical and healthcare facilities, where patient flow is highly variable and services often consist of multiple stages. The FES phase reflects essential diagnostic or treatment procedures that must be provided to all patients, while the SOS phase captures optional yet beneficial activities such as counseling, follow-up checking, secondary testing, or electronic record updates, executed only when patient load is manageable. Queue-length dependent vacations policy specially applicable when healthcare staff attend emergencies, shift between departments, or engage in administrative tasks, thereby influencing service availability.
		\item Banking and financial service sector: It captures the behaviour through its batch arrival structure and congestion-dependent service phases. The FES corresponds to the mandatory customer interactions such as cash withdrawal, deposits, cheque processing, and account inquiries, which must be completed for every individual. The SOS represents additional operations such as KYC verification, investment advisory, loan clarification, or document validation which are performed conditionally based on customers choice. This ensures that optional yet time-consuming tasks are completed only when customer load is moderate, thus reducing waiting time during busy periods. The queue-length dependent vacation policy appropriately models counter closures during staff breaks, low-traffic hours, or temporary unavailability due to administrative duties. This enables banks to optimize staffing levels, enhance service efficiency, and maintain acceptable customer waiting times even under varying traffic conditions.
	\end{itemize}
	\section{State of the system and difference equations}\lb{SE}
	To achieve the analysis of the model, we begin this section by formulating the governing equations in the steady state. Consequently, it is essential to define several important states of the system at time $t-$ as:
	\begin{itemize}
		
		\item $X_q(t-)$ $\equiv$ The measure of packets or customers accumulated in the queue awaiting service,
		
		\item $X_s(t-)$ $\equiv$ The quantity of packets or customers being served at the FES  by the server,
		
		\item $\widetilde{X}_s(t-)$ $\equiv$ The quantity of packets or customers presently receiving service from the server in SOS,
		
		\item $J_1(t-)$ $\equiv$ The pending service slot in FES, if any,
		
		\item $J_2(t-)$ $\equiv$ The pending service slot in SOS, if any,
		
		\item $\xi^{[k]}(t-)$ $\equiv$ The remaining vacation time of the server, excluding the current slot, corresponding to the $k^{\text{th}}$ vacation,
		
		\item $\eta(t-)$ $\equiv$ The server's state is specified as:
		\begin{align*}
			\hspace{1.4cm}\eta(t-) = \left\{\begin{array}{r@{\mskip\thickmuskip}l}
				&0\hspace{0.8cm}  \mbox{for ~dormant~state,}\\
				&1\hspace{0.8cm}  \mbox{for~ vacation~state,}\\
				& 2\hspace{0.8cm} \mbox{for~busy~state.}\\
			\end{array}\right.
		\end{align*}	
	\end{itemize}
	The joint probabilities are defined as:
	\begin{eqnarray}
		\vartheta_{n,0}(t-)&=&\mbox{Pr}\{ X_q(t-)=n,~ X_s(t-)=0, ~\eta(t-)=0  \}, \quad 0\leq n \leq a-1\nn\\
		\alpha_{n,r}(\ell,t-)&=&\mbox{Pr}\{ X_q(t-)=n, ~X_s(t-)=r,~ J_1(t-)=\ell,~ \eta(t-)=2 \}, \quad \ell\geq 1, \nn \\
		&&~~n\geq 0, ~~ a\leq r\leq b.\nn\\
		\beta_{n,j}(\ell,t-)&=&\mbox{Pr}\{ X_q(t-)=n, ~\widetilde{X}_s(t-)=j,~ J_2(t-)=\ell,~ \eta(t-)=2 \}, \quad \ell\geq 1, \nn \\
		&&~~n\geq 0 , ~~ 1\leq j\leq b.\nn\\
		\gamma^{[k]}_{n}(\ell,t-)&=&\mbox{Pr}\{ X_q(t-)=n, ~\xi^{[k]}(t-)=\ell,~ \eta(t-)=1 \}, \quad \ell\geq 1,~n\geq 0, ~0\le k \le a-1.\nn
	\end{eqnarray}
	Hence the limiting probabilities are:
	\begin{eqnarray}
		\vartheta_{n,0}&=&\lim_{t-\rightarrow \infty}\vartheta_{n,0}(t-), ~~ o\le n \le a-1.\nn\\
		\alpha_{n,r}(\ell)&=&\lim_{t-\rightarrow \infty}\alpha_{n,r}(\ell,t-), ~~ \ell\geq 1,~n\geq 0,~a\leq r\leq b.\nn \\
		\beta_{n,j}(\ell)&=&\lim_{t-\rightarrow \infty}\beta_{n,j}(\ell,t-), ~~ \ell\geq 1,~n\geq 0, ~1\le j \le b.\nn \\
		\gamma^{[k]}_{n}(\ell)&=&\lim_{t-\rightarrow \infty}\gamma^{[k]}_{n}(\ell,t-), ~~ \ell\geq 1,~n\geq 0,~0\le k\le a-1.\nn
	\end{eqnarray}
	The steady state difference differential equations listed below have been obtained through the use of SVT, which connects the system's states at the times $t-$ and $(t+1)-$:
	\begin{eqnarray}
		\vartheta_{0,0}&=&(1-\delta_{p})\bigg(\bar{\lambda}\vartheta_{0,0}+\bar{\lambda}\gamma_{0}^{[0]}(1)\bigg) ~~~\label{ch1.eq2}\\
		\vartheta_{n,0}&=&(1-\delta_{p})\bigg(\bar{\lambda}\vartheta_{n,0}+{\lambda}\sum_{i=1}^{n}g_{i}\vartheta_{n-i,0}+\bar{\lambda}\sum_{k=0}^{n}\gamma_{n}^{[k]}(1)+\lambda\sum_{i=1}^{n}\sum_{k=0}^{n-i}g_{i}\gamma_{n-i}^{[k]}(1)\bigg), \nonumber \\
		&&~ 1\le n \le a-1 \label{ch1.eq3} \\
		\alpha_{0,r}(u)&=&\bar{\lambda}\alpha_{0,r}(u+1)+\bigg[\bar{\lambda}\sum_{m=a}^{b}\alpha_{r,m}(1)\chi_{m,0}+\lambda\sum_{i=1}^{r}\sum_{m=a}^{b}g_{i}\alpha_{r-i,m}(1)\chi_{m,0}
		\nonumber \\
		&&~+\bar{\lambda}\sum_{j=1}^{b}\beta_{r,j}(1) +\lambda\sum_{i=1}^{r}\sum_{j=1}^{b}g_{i}\beta_{r-i,j}(1)+\bar{\lambda}\sum_{k=0}^{a-1}\gamma_{r}^{[k]}(1)+\lambda\sum_{i=1}^{r}\sum_{k=0}^{a-1}g_{i}\gamma_{r-i}^{[k]}(1) \nonumber \\
		&&~+(1-\delta_{p})\lambda\sum_{i=0}^{a-1}g_{r-i}\vartheta_{i,0}\bigg]S_{r}(u), \ \ \ \ a\le r \le b ~~~\label{ch1.eq4} \\
		\alpha_{n,r}(u)&=&\bar{\lambda}\alpha_{n,r}(u+1)+{\lambda}\sum_{i=1}^{n}g_{i}\alpha_{n-i,r}(u+1), \ \ \ \ n\ge 1, \ \ a\le r \le b-1 ~~~\label{ch1.eq5}\\ 
		\alpha_{n,b}(u)&=&\bar{\lambda}\alpha_{n,b}(u+1)+{\lambda}\sum_{i=1}^{n}g_{i}\alpha_{n-i,b}(u+1)+\bigg[\bar{\lambda}\sum_{m=a}^{b}\alpha_{n+b,m}(1)\chi_{m,0}+\bar{\lambda}\sum_{j=1}^{b}\beta_{n+b,j}(1) \nonumber \\
		&&~+\lambda\sum_{i=1}^{n+b}\sum_{m=a}^{b}g_{i}\alpha_{n+b-i,m}(1)\chi_{m,0}+\lambda\sum_{i=1}^{n+b}\sum_{j=1}^{b}g_{i}\beta_{n+b-i,j}(1)+\bar{\lambda}\sum_{k=0}^{a-1}\gamma_{n+b}^{[k]}(1) \nonumber \\
		&&~+\lambda\sum_{i=1}^{n+b}\sum_{k=0}^{a-1}g_{i}\gamma_{n+b-i}^{[k]}(1)+(1-\delta_{p})\lambda\sum_{i=0}^{a-1}g_{n+b-i}\vartheta_{i,0}\bigg]S_{b}(u), \ \ \ n\ge 1 ~~~\label{ch1.eq6} \\
		\beta_{0,j}(u)&=&\bar{\lambda}\beta_{0,j}(u+1)+\bar{\lambda}\sum_{i=max(a,j)}^{b}\alpha_{0,i}(1)\chi_{i,j}S_{j}^{O}(u),\ \ \ \ 1\le j \le b
		~~~\label{ch1.eq7} \\
		\beta_{n,j}(u)&=&\bar{\lambda}\beta_{n,j}(u+1)+{\lambda}\sum_{i=1}^{n}g_{i}\beta_{n-i,j}(u+1)+\bigg[\bar{\lambda}\sum_{i=max(a,j)}^{b}\alpha_{n,i}(1)\chi_{i,j}\nonumber \\
		&&~+\lambda\sum_{m=1}^{n}\sum_{i=max(a,j)}^{b}g_{m}\alpha_{n-m,i}(1)\chi_{i,j}\bigg]S_{j}^{O}(u),\ \ \ \ n\ge 1, \ \ 1\le j \le b ~~~\label{ch1.eq8}\\
		\gamma_{0}^{[0]}(u)&=&\bar{\lambda}\gamma_{0}^{[0]}(u+1)+\bigg[\bar{\lambda}\sum_{m=a}^{b}\alpha_{0,m}(1)\chi_{m,0}+\bar{\lambda}\sum_{j=1}^{b}\beta_{0,j}(1)+\delta_{p}\bar{\lambda}\gamma^{[0]}_{0}(1)\bigg]v^{[0]}(u) ~~~\label{ch1.eq9}\\
		\gamma_{k}^{[k]}(u)&=&\bar{\lambda}\gamma_{k}^{[k]}(u+1)+\bigg[\bar{\lambda}\sum_{m=a}^{b}\alpha_{k,m}(1)\chi_{m,0}+\lambda\sum_{i=1}^{k}\sum_{m=a}^{b}g_{i}\alpha_{k-i,m}(1)\chi_{m,0}+\bar{\lambda}\sum_{j=1}^{b}\beta_{k,j}(1)\nn\\
		&&+\lambda\sum_{i=1}^{k}\sum_{j=1}^{b}g_{i}\beta_{k-i,j}(1)+\delta_{p}\big(\bar{\lambda}\gamma^{[k]}_{n}(1)+\lambda\sum_{i=1}^{k}g_{i}\gamma^{[k]}_{n-i}(1)\big)\bigg]v^{[k]}(u),\nn\\
		&& ~~~~ 0 \le k \le a-1 ~~~\label{ch1.eq10} \\
		\gamma_{n}^{[k]}(u)&=&\bar{\lambda}\gamma_{n}^{[k]}(u+1)+\lambda\sum_{i=1}^{n}g_{i}\gamma_{n-i}^{[k]}(u+1), \ \ \ \ n \ge k+1, \ \ 0\le k \le a-1 ~~~\label{ch1.eq11}
	\end{eqnarray}
	In order to achieve our intended goals of obtaining the joint distribution from the equations mentioned above, we establish the following $z$-transforms.
	\begin{eqnarray*}
		\alpha_{n,r}^{*}(z)&=&\sum_{\ell=1}^{\infty} \alpha_{n,r}(\ell)z^{\ell}, \ \ n \ge 0, \ \ \  a\le r \le b \\
		\beta_{n,j}^{*}(z)&=&\sum_{\ell=1}^{\infty} \beta_{n,j}(\ell)z^{\ell}, \ \ n \ge 0, \ \ \ 1 \le j \le b \\
		\gamma_{n}^{[k]*}(z)&=&\sum_{\ell=1}^{\infty}\gamma_{n}^{[k]}(\ell)z^{\ell}, ~~ n \ge 0,\  \ \ 0 \le k \le a-1  
	\end{eqnarray*}
	Accordingly, we derive the following:
	\begin{eqnarray*}
		\alpha_{n,r} \equiv \alpha_{n,r}^{*}(1)&=&\sum_{\ell=1}^{\infty} \alpha_{n,r}(\ell), \ \ n \ge 0, \ \ \  a\le r \le b \\
		\beta_{n,j} \equiv \beta_{n,j}^{*}(1)&=&\sum_{\ell=1}^{\infty} \beta_{n,j}(\ell), \ \ n \ge 0, \ \ \ 1 \le j \le b \\
		\gamma_{n}^{[k]} \equiv \gamma_{n}^{[k]*}(1)&=&\sum_{\ell=1}^{\infty} \gamma_{n}^{[k]}(\ell), \ \ n \ge 0, \  \ \ 0 \le k \le a-1 
	\end{eqnarray*}
	Consequently, we multiply equations (\ref{ch1.eq4}) - (\ref{ch1.eq11}) by $z^{\ell}$ and summing over $\ell$ from $1$ to $\infty$, we get
	\begin{eqnarray}
		\left(\frac{z-\bar{\lambda}}{z}\right)\alpha_{0,r}^{*}(z)&=&\bigg[\bar{\lambda}\sum_{m=a}^{b}\alpha_{r,m}(1)\chi_{m,0}+\bar{\lambda}\sum_{j=1}^{b}\beta_{r,j}(1)+\lambda\sum_{i=1}^{r}\sum_{m=a}^{b}g_{i}\alpha_{r-i,m}(1)\chi_{m,0}
		\nonumber \\
		&&~+\lambda\sum_{i=1}^{r}\sum_{j=1}^{b}g_{i}\beta_{r-i,j}(1)+\bar{\lambda}\sum_{k=0}^{a-1}\gamma_{r}^{[k]}(1)+\lambda\sum_{i=1}^{r}\sum_{k=0}^{a-1}g_{i}\gamma^{[k]}_{r-i}(1) \nonumber \\
		&&~+(1-\delta_{p})\lambda\sum_{i=0}^{a-1}g_{r-i}\vartheta_{i,0}\bigg]S_{r}^{*}(z)-\bar{\lambda}\alpha_{0,r}(1),\ \ a \le r \le b ~~~\label{ch1.eq12} \\
		\left(\frac{z-\bar{\lambda}}{z}\right)\alpha_{n,r}^{*}(z)&=&\frac{\lambda}{z}\sum_{i=1}^{n}g_{i}\alpha_{n-i,r}^{*}(z)-\lambda\sum_{i=1}^{n}g_{i} \alpha_{n-i,r}(1)-\bar{\lambda}\alpha_{n,r}(1), \nn \\
		&&~ n\ge 1, ~~ a \le r \le b-1 ~~~\label{ch1.eq13} 
	\end{eqnarray}
	\begin{eqnarray}
		\left(\frac{z-\bar{\lambda}}{z}\right)\alpha_{n,b}^{*}(z)&=&\frac{\lambda}{z}\sum_{i=1}^{n}g_{i}\alpha_{n-i,b}^{*}(z)+\bigg[\bar{\lambda}\sum_{m=a}^{b}\alpha_{n+b,m}(1)\chi_{m,0}
		+\bar{\lambda}\sum_{j=1}^{b}\beta_{n+b,j}(1) \nonumber \\
		&&~+\lambda\sum_{i=1}^{n+b}\sum_{m=a}^{b}g_{i}\alpha_{n+b-i,m}(1)\chi_{m,0}+\lambda\sum_{i=1}^{n+b}\sum_{j=1}^{b}g_{i}\beta_{n+b-i,j}(1) \nonumber \\	                            
		&&~+\bar{\lambda}\sum_{k=0}^{a-1}\gamma_{n+b}^{[k]}(1)+\lambda\sum_{i=1}^{n+b}\sum_{k=0}^{a-1}g_{i}\gamma_{n+b-i}^{[k]}(1)+(1-\delta_{p})\lambda\sum_{i=1}^{a-1}g_{n+b-i}\vartheta_{i,0}\bigg] \nonumber \\
		&&~S_{b}^{*}(z)-\bar{\lambda}\alpha_{n,b}(1)-\lambda\sum_{i=1}^{n}g_{i}\alpha_{n-i,b}(1), \  n \ge 1 ~~~\label{ch1.eq14} \\
		\left(\frac{z-\bar{\lambda}}{z}\right)\beta_{0,j}^{*}(z)&=&\bar{\lambda}\sum_{m=max(a,j)}^{b}\alpha_{0,m}(1)\chi_{m,j}S_{j}^{*O}(z) -\bar{\lambda}\beta_{0,j}(1), ~~ 1\le j \le b ~~~\label{ch1.eq15} \\
		\left(\frac{z-\bar{\lambda}}{z}\right)\beta_{n,j}^{*}(z)&=&\frac{\lambda}{z}\sum_{i=1}^{n}g_{i}\beta_{n-1,j}^{*}(z)+\bigg[\lambda\sum_{m=1}^{n}\sum_{i=max(a,j)}^{b}g_{m}\alpha_{n-m,i}(1)\chi_{i,j}
		\nonumber \\
		&&~+\bar{\lambda}\sum_{i=max(a,j)}^{b}\alpha_{n,i}(1)\chi_{i,j}\bigg]S_{j}^{*O}(z)-\bar{\lambda}\beta_{n,j}(1)-\lambda \sum_{i=1}^{n} g_{i}\beta_{n-i,j}(1), \nn \\
		&&~ n\ge 1, ~ 1\le j \le b \label{ch1.eq16} \\
		\left(\frac{z-\bar{\lambda}}{z}\right)\gamma_{0}^{[0]*}(z)&=&\bigg[\bar{\lambda}\sum_{m=a}^{b}\alpha_{0,m}(1)\chi_{m,0}+\bar{\lambda}\sum_{j=1}^{b}\beta_{0,j}(1)+\delta_{p}\bar{\lambda}\gamma^{[0]}_{0}(1)\bigg]V^{[0]*}(z)\nn \\
		&&~-\bar{\lambda}\gamma^{[0]}_{0}(1) ~~~\label{ch1.eq17} \\
		\left(\frac{z-\bar{\lambda}}{z}\right)\gamma_{k}^{[k]*}(z)&=&\bigg[\bar{\lambda}\sum_{m=a}^{b}\alpha_{k,m}(1)\chi_{m,0}
		+\lambda\sum_{i=1}^{k}\sum_{m=a}^{b}g_{i}\alpha_{k-i,m}(1)\chi_{m,0}+\bar{\lambda}\sum_{j=1}^{b}\beta_{n,j}(1) \nonumber \\ &&~+\lambda\sum_{i=1}^{k}\sum_{j=1}^{b}g_{i}\beta_{k-i,j}(1)+\delta_{p}\bar{\lambda}\sum_{m=0}^{k}\gamma_{n}^{[m]}(1)+\delta_{p}\lambda\sum_{i=1}^{k}\sum_{m=0}^{k}g_{i}\gamma_{n-i}^{[k]}(1)\bigg]\nn\\
		&&V^{[k]*}(z)-\bar{\lambda}\gamma^{[k]}_{k}(1),
		~1\le k \le a-1\label{ch1.eq18} \\
		\left(\frac{z-\bar{\lambda}}{z}\right)\gamma_{n}^{[k]*}(z)&=&\frac{\lambda}{z}\sum_{i=1}^{n}g_{i}\gamma_{n-i}^{[k]*}(z)-\lambda\sum_{i=1}^{n}g_{i}\gamma_{n-i}^{[k]}(1)-\bar{\lambda}\gamma_{n}^{[k]}(1), \nn \\
		&&~ 0\le k \le a-1, ~~~ n \ge k+1. \label{ch1.eq19}
	\end{eqnarray}
	\section{Stationary joint distribution at service/vacation completion epoch}\lb{SC}
	Deriving the joint distribution at the completion epochs of FES and SOS requires the use of embedded points defined at these instants. Thus, the corresponding joint probabilities are expressed as follows:
	\begin{itemize}
		\item $\alpha_{n,r}^+$ = $Pr\{ X_q(t+)=n,~X_s(t+)=r ~\mbox{and FES has just finished}\}$,\quad $n\ge 0$,  $a \le r \le b$.
		\item $\alpha_{n}^+ \left(=\sum_{r=a}^{b}\alpha_{n,r}^+\right)$= $Pr\{ X_q(t+)=n ~\mbox{and FES has just ended}\}$,\quad $n\ge 0$.
		\item  $\beta_{n,j}^+ $= $Pr\{ X_q(t+)=n,~\widetilde{X}_s(t+)=j~ \mbox{and SOS has just finished}\}$,\quad $n \ge 0, \ 1\le j\le b.$
		\item  $\beta_{n}^+ \left(=\sum_{j=1}^{b}\beta_{n,j}^+\right)$= $Pr\{ X_q(t+)=n~ \mbox{and SOS has just ended}\}$,\quad $n\ge 0$.
		\item  $\gamma_{n}^{[k]+} \left(=\sum_{k=0}^{min(n,a-1)}\gamma_{n}^{[k]+}\right) ~\mbox{be the probability that $n$ customers in the queue}\\
		\mbox{during vacation completion epoch of the server}\},\quad n\ge 0, ~ 0\le k \le a-1$.
	\end{itemize}
	Based on equations (\ref{ch1.eq2}), (\ref{ch1.eq3}), and (\ref{ch1.eq12})–(\ref{ch1.eq19}), we proceed to establish two results which are presented as lemmas and will serve an important role in the continuation of our study. \\
	\textbf{Lemma 5.1}
	The probabilities $\left( \alpha_{n,r}^+, \alpha_{n,r}(1)\right),\left( \beta_{n,j}^+, \beta_{n,j}(1)\right) $ and  $(\gamma_{n}^{[k]+},\gamma_{n}^{[k]}(1)) $ are related by the following expression
	\begin{eqnarray}
		\alpha_{0,r}^+ &=& \tau^{-1} \{\bar{\lambda}\alpha_{0,r} (1)\},  \ \ \ \ a \le r \le b ~~~\label{ch1.eq20} \\
		\alpha_{n,r}^+ &=& \tau^{-1} \{\bar{\lambda}\alpha_{n,r} (1)+ \lambda \sum_{i=1}^{n}g_{i}\alpha_{n-i,r} (1) \},  \ \ \ \ n \ge 1, \ \ \ \ a \le r \le b ~~~\label{ch1.eq21}\\
		\beta_{0,j}^+ &=& \tau^{-1} \{\bar{\lambda}\beta_{0,j} (1)\},  \ \ \ \ 1 \le j \le b ~~~\label{ch1.eq22} \\
		\beta_{n,j}^+ &=& \tau^{-1} \{\bar{\lambda}\beta_{n,j} (1)+ \lambda \sum_{i=1}^{n}g_{i}\beta_{n-i,j} (1) \},  \ \ \ \ n \ge 1, \ \ \ \ 1 \le j \le b ~~~\label{ch1.eq23} \\
		\gamma_{k}^{[k]+} &=& \tau^{-1} \{\bar{\lambda}\gamma_{k} ^{[k]}(1)\}, \ \ \ \ 0\le k \le a-1 ~~~\label{ch1.eq24}\\
		\gamma_{n}^{[k]+} &=& \tau^{-1} \{\bar{\lambda}\gamma_{n}^{[k]} (1)+ \lambda \sum_{i=1}^{n}g_{i} \gamma_{n-i}^{[k]} (1) \},  \ \ \ \ n\ge k+1, \ \ \ \ 0\le k\le a-1 ~~~\label{ch1.eq25}
	\end{eqnarray}
	where $\tau=\sum_{n=0}^{\infty}\sum_{r=a}^{b} \alpha_{n,r}(1)+\sum_{n=0}^{\infty}\sum_{j=1}^{b}\beta_{n,j}(1)+\sum_{n=0}^{\infty} \sum_{k=0}^{min(n,a-1)}\gamma_{n}^{[k]}(1).$ \\
	
	\noindent   	\textbf{Lemma 5.2}
	The value of $\tau$ is given by
	\begin{eqnarray}
		\tau= \sum_{n=0}^{\infty}\sum_{r=a}^{b} \alpha_{n,r}(1)+\sum_{n=0}^{\infty}\sum_{j=1}^{b}\beta_{n,j}(1)+\sum_{n=0}^{\infty} \sum_{k=0}^{min(n,a-1)}\gamma_{n}^{[k]}(1)=\frac{1-(1-\delta_{p})\sum_{n=0}^{a-1}\vartheta_{n,0}}{\Lambda} \nonumber \\
		\label{ch1.eq26}
	\end{eqnarray}
	where 
	\begin{eqnarray}
		\Lambda&=&\sum_{n=0}^{a-1}\bigg[(1-\delta_{p})\sum_{k=0}^{n}\gamma_n^{[k]+}\bigg(\sum_{j=n}^{a-1}e_{j,n}\big(\sum_{\ell=a}^{b}g_{\ell-j}\bar{S_\ell}+\sum_{\ell=b+1}^{\infty}g_{\ell-j}\bar{S_b}\big)\bigg)+\sum_{\ell=a}^{b}\alpha_{n,\ell}^{+}\chi_{\ell,0}\bar{V}^{[n]} \nn \\
		&&~+\sum_{j=1}^{b}\beta_{n,j}^{+}\bar{V}^{[n]}+\sum_{\ell=a}^{b}\sum_{m=1}^{\ell}\alpha_{n,\ell}^{+}\chi_{\ell,m}\bar S_{m}^O+\delta_{p}\sum_{k=0}^{a-1}\gamma_{n}^{[k]+}\bar{V}^{[n]}\bigg] \nn\\
		&&~+\sum_{n=a}^{b}\bigg[\sum_{k=0}^{a-1}\gamma_n^{[k]+}\bar S_{n}+\sum_{r=a}^{b}\bigg(\chi_{r,0}\bar S_{n}+\sum_{m=1}^{r}\chi_{r,m}\bar S_{m}^O\bigg)\alpha_{n,r}^{+}
		+\sum_{j=1}^{b}\beta_{n,j}^{+}\bar S_{n}\bigg]  \nn\\
		&&~+\sum_{n=b+1}^{\infty}\bigg[\sum_{k=0}^{a-1}\gamma_n^{[k]+}\bar S_{b}+\sum_{r=a}^{b}\bigg(\chi_{r,0}\bar S_{b}+\sum_{m=1}^{r}\chi_{r,m}\bar S_{m}^O\bigg)\alpha_{n,r}^{+}
		+\sum_{j=1}^{b}\beta_{n,j}^{+}\bar S_{b}\bigg] ~~~\label{ch1.eq27}
	\end{eqnarray}
	and
	\begin{eqnarray}
		e_{n,i}=\sum_{j=i+1}^{n-1}e_{n,j}g_{j-i}+g_{n-i},~i=0,1,\dots,n-2 ~\mbox{with}~ e_{n,n-1}=g_1 ~\mbox{and}~ e_{n,n}=1.~~~\label{ch1.eq28}
	\end{eqnarray}
	\begin{proof}
		For single vacation, substituting $\delta_{p}=0$ in (\ref{ch1.eq2}) and (\ref{ch1.eq3}), we obtain
		\begin{eqnarray}
			\lambda \vartheta_{n,0}=\bar{\lambda}e_{n,0}\gamma_{0}^{[0]}(1)+\sum_{m=1}^{n}\sum_{k=0}^{m}e_{n,m}\bigg(\bar{\lambda}\gamma_{m}^{[k]}(1)+\lambda\sum_{j=1}^{m} g_{n}\gamma_{m-j}^{[k]}(1)\bigg),~~~0\leq n \leq a-1.\label{ch1.eq29}
		\end{eqnarray}
		Using (\ref{ch1.eq29}) in (\ref{ch1.eq12}) and then summing over $r$ from $a$ to $b$ and $n$ from $0$ to $\infty$ in the equations (\ref{ch1.eq12}) - (\ref{ch1.eq19}), after some simplifications, we get
		\begin{small}
			\begin{eqnarray*}
				&&~\bigg(\frac{z-1}{z}\bigg)\bigg\{\sum_{n=0}^{\infty}\sum_{r=a}^{b}\alpha_{n,r}^{*}(z)+\sum_{n=0}^{\infty}\sum_{j=1}^{b}\beta_{n,j}^{*}(z)+\sum_{k=0}^{a-1}\sum_{n=k}^{\infty}\gamma_{n}^{[k]*}(z) \bigg\} \\
				&=&\sum_{n=0}^{a-1}\bigg[\biggl\{\bar{\lambda}e_{n,0}\gamma_{0}^{[0]}(1)+\sum_{m=1}^{n}\sum_{k=0}^{m}e_{n,m}\bigg(\bar{\lambda}\gamma_{m}^{[k]}(1)+\lambda\sum_{j=1}^{m} g_{n}\gamma_{m-j}^{[k]}(1)\bigg)\biggr\} \{(1-\delta_{p})\bigg(\sum_{r=a}^{b}g_{r-n}S^{*}_r(z) \\
				&&~+\sum_{r=b+1}^{\infty}g_{r-n}S^{*}_b(z)\bigg)+\delta_{p}V^{[n]*}(z)\}+\bigg(\sum_{r=a}^{b}\bar{\lambda}\alpha_{0,r}(1)\chi_{r,0}+\sum_{j=1}^{b}\bar{\lambda}\beta_{0,r}(1)+\sum_{n=1}^{a-1}(\bar{\lambda}\alpha_{n,r}(1)\chi_{r,0}\\
				&&~+\lambda \alpha_{n-1,r}(1)\chi_{r,0}) +\sum_{j=1}^{b}(\bar{\lambda}\beta_{n,j}(1)+\sum_{i=1}^{n}\lambda \beta_{n-i,j}(1))\bigg)V^{[n]*}(z)\bigg]\\
				&&~+\sum_{n=a}^{b}\bigg[\sum_{r=a}^{b}\bigg(\bar{\lambda}\alpha_{n,r}(1)\chi_{r,0}+\lambda\sum_{i=1}^{n}g_{i} \alpha_{n-i,r}(1)\chi_{r,0}\bigg)+\sum_{j=1}^{b}\bigg(\bar{\lambda}\beta_{n,j}(1)+\lambda\sum_{i=1}^{n}g_{i} \beta_{n-i,j}(1)\bigg) \\
				&&~+\sum_{k=0}^{a-1}\bigg(\bar{\lambda}\gamma_{n}^{[k]}(1)+\sum_{i=1}^{n}\lambda g_{i} \gamma_{n-i}^{[k]}(1)\bigg)\bigg]S_{n}^{*}(z)+\sum_{n=b+1}^{\infty}\bigg[\sum_{r=a}^{b}\bigg(\bar{\lambda}\alpha_{n,r}(1)+\sum_{i=1}^{n}\lambda g_{i} \alpha_{n-i,r}(1)\bigg)\chi_{r,0} \\
				&&~+\sum_{j=1}^{b}(\bar{\lambda}\beta_{n,j}(1)+\sum_{i=1}^{n}\lambda g_{i} \beta_{n-i,j}(1))+\sum_{k=0}^{a-1}(\bar{\lambda}\gamma_{n}^{[k]}(1)+\sum_{i=1}^{n}\lambda g_{i}\gamma_{n-i}^{[k]}(1))\bigg]S_{b}^{*}(z)\\
				&&~+\sum_{n=0}^{\infty}\sum_{m=1}^{b}\sum_{r=\mbox{max}(a,m)}^{b}\bigg(\bar{\lambda}\alpha_{n,r}(1)+\sum_{i=1}^{n}\lambda g_{i} \alpha_{n-i,r}(1)\bigg)\chi_{r,m}S_{m}^{*O}(z) \\
				&&~-\sum_{n=0}^{\infty}\sum_{r=a}^{b}\alpha_{n,r}^{+}-\sum_{n=0}^{\infty}\sum_{m=1}^{b}\beta_{n,m}^{+}-\sum_{k=0}^{a-1}\sum_{n=k}^{\infty}\gamma_{n}^{[k]+}
			\end{eqnarray*}
		\end{small}
		Now taking limit $z\rightarrow1$ in both sides of the above expression, using L'Hospital's rule and the normalizing condition $(1-\delta_{p})\sum_{n=0}^{a-1}\vartheta_{n,0}+\sum_{n=0}^{\infty}\sum_{r=a}^{b}\alpha_{n,r}^{+}+\sum_{n=0}^{\infty}\sum_{j=1}^{b}\beta_{n,j}^{+}+\sum_{n=0}^{\infty} \sum_{k=0}^{min(n,a-1)}\gamma_{n}^{[k]+}=1$, we obtain the desired result.
	\end{proof}
	\subsection{Bivariate pgfs}
	In this subsection, we derive the bivariate pgfs for the joint distribution observed at the completion of each service batch. We also endeavor to present these distributions in a clear and accessible format. We begin by defining the following pgfs:
	\begin{eqnarray}
		\Pi(z,z_1,z_2)&=&\sum_{n=0}^{\infty}\sum_{r=a}^{b}\alpha_{n,r}^{*}(z){z_1}^{n}{z_2}^{r}, \ \ |z_1|\le 1,  \ |z_2|\le 1  ~~~\label{ch1.eq30}\\
		\Pi^{+}(z_1,z_2)&=&\sum_{n=0}^{\infty}\sum_{r=a}^{b}\alpha^{+}_{n,r}{z_1}^{n}{z_2}^{r}, \ \ |z_1|\le 1,  \ |z_2|\le 1  ~~~\label{ch1.eq31}\\
		\Pi^{+}(z_1,1)&=&\sum_{n=0}^{\infty}\sum_{r=a}^{b}\alpha^{+}_{n,r}{z_1}^{n}=\sum_{n=0}^{\infty}\alpha^{+}_{n}{z_1}^{n}=\Pi^{+}(z_1), \ \ |z_1|\le 1 ~~~\label{ch1.eq32}\\
		\Delta^{+}(z_1,z_2)&=&\sum_{n=0}^{\infty}\sum_{y=1}^{b}\beta^{+}_{n,y}{z_1}^{n}{z_2}^{y}, \ \ |z_1|\le 1,  \ |z_2|\le 1  ~~~\label{ch1.eq33}\\
		\Delta^{+}(z_1,1)&=&\sum_{n=0}^{\infty}\sum_{y=1}^{b}\beta^{+}_{n,y}{z_1}^{n}=\sum_{n=0}^{\infty}\beta^{+}_{n}{z_1}^{n} ~~~\label{ch1.eq34}\\
		\Upsilon^{+}(z_1)&=&\sum_{k=0}^{a-1}\sum_{n=0}^{\infty}\gamma^{[k]+}_{n}{z_1}^{n} \ \ |z_1|\le 1.~~~\label{ch1.eq35}
	\end{eqnarray}
	As our primary aim is to derive the bivariate pgf, we multiply (\ref{ch1.eq12})–(\ref{ch1.eq14}) by the appropriate powers of 
	$z_{1}$ and  $z_{2}$, perform the summation over $n \ge 0$ and 
	$a\le r \le b,$ and make use of (\ref{ch1.eq30}) and (\ref{ch1.eq31}). Consequently, we obtain: 
	\begin{small}
		\begin{eqnarray}
			&&\bigg\{\frac{z-\bar{\lambda}-\lambda G(z_1)}{z}\bigg\}\Pi(z,z_1,z_2)\nn\\
			&=&\sum_{n=a}^{b} \bigg[\sum_{m=a}^{b}({\bar{\lambda}\alpha_{n,m}(1)+\lambda\sum_{i=1}^{n}g_{i}\alpha_{n-i,m}(1)})\chi_{m,0}+\sum_{j=1}^{b}(\bar{\lambda}\beta_{n,j}(1)+\lambda\sum_{i=1}^{n}g_{i}\beta_{n-i,j}(1)) \nonumber \\
			&&+\sum_{k=0}^{a-1}({\bar{\lambda}\gamma_{n}^{[k]}(1)+\lambda\sum_{i=1}^{n}g_{i}\gamma_{n-i}^{[k]}(1)})\bigg]S^{*}_n(z){z_2}^n-\sum_{n=1}^{\infty}\sum_{m=a}^{b}\bigg({\bar{\lambda}\alpha_{n,m}(1)+\lambda\sum_{i=1}^{n}g_{i}\alpha_{n-i,m}(1)}\bigg){z_1}^n{z_2}^m \nonumber \\
			&&-\bar{\lambda}\sum_{m=a}^{b}\alpha_{0,m}(1){z_2}^m+(1-\delta_{p})\sum_{n=0}^{a-1}\bigg(\sum_{r=a}^{b}g_{r-n}S^{*}_r(z)z_2^{r}+\sum_{r=b+1}^{\infty}g_{r-n}S^{*}_b(z){z_2}^{b}{z_1}^{r-b}\bigg)  \nonumber \\
			&&\bigg[\bar{\lambda}e_{i,0}\gamma_{0}^{[0]}(1)+\sum_{m=1}^{n}\sum_{k=0}^{m}e_{n,m}\bigg(\bar{\lambda}\gamma_{m}^{[k]}(1)+\lambda \sum_{j=1}^{m} g_{j}\gamma_{m-j}^{[k]}(1)\bigg)\bigg]+\sum_{n=b+1}^{\infty}\bigg[\sum_{k=0}^{a-1}\bigg({\bar{\lambda}\gamma_{n}^{[k]}(1)+\lambda\sum_{i=1}^{n}g_{i}\gamma_{n-i}^{[k]}(1)}\bigg)\nonumber \\
			&&+\sum_{m=a}^{b}\bigg({\bar{\lambda}\alpha_{n,m}(1)+\lambda\sum_{i=1}^{n}g_i\alpha_{n-i,m}(1)}\bigg)\chi_{m,0}+\sum_{j=1}^{b}\bigg(\bar{\lambda}\beta_{n,j}(1)+\lambda\sum_{i=1}^{n}g_{i}\beta_{n-i,j}(1)\bigg)\bigg]{z_1}^{n-b}{z_2}^{b}S^{*}_b(z) ~~~\label{ch1.eq36}
		\end{eqnarray}
			\end{small}
	Now substituting $z=\bar{\lambda}+\lambda G(z_1)$ and using (\ref{ch1.eq20}) - (\ref{ch1.eq25}), and (\ref{ch1.eq31}), we obtain
	\begin{small}
		\begin{eqnarray}	 \Pi^{+}(z_1,z_2)&=&(1-\delta_{p})\sum_{n=0}^{a-1}\sum_{k=0}^{n}\gamma_{n}^{[k]+}\bigg[\sum_{j=n}^{a-1}e_{j,n}\bigg(\sum_{i=a}^{b}g_{i-j}K^{i}(z_1)z_2^i+\sum_{i=b+1-j}^{\infty}g_iK^{b}(z_1){z_1}^{i+j-b}z_2^b\bigg)\bigg]  \nonumber \\
			&&+\sum_{n=a}^{b}\bigg[\sum_{r=a}^{b}\alpha_{n,r}^{+}\chi_{r,0}+\sum_{j=1}^{b}\beta_{n,j}^{+}+\sum_{k=0}^{a-1}\gamma_{n}^{[k]+}\bigg]{z_2}^n K^{n}(z_1) \nonumber \\
			&&+\sum_{n=b+1}^{\infty}\bigg[\sum_{r=a}^{b}\alpha_{n,r}^{+}\chi_{r,0}+\sum_{j=1}^{b}\beta_{n,j}^{+}+\sum_{k=0}^{a-1}\gamma_{n}^{[k]+}\bigg]{z_2}^b{z_1}^{n-b} K^{b}(z_1) ~~~\label{ch1.eq37}
		\end{eqnarray}
	\end{small}
	where $K^{m}(z_1)=\sum_{i=0}^{\infty}k_{i}^{m}z_1^{i}=S_{m}^{*}(\bar{\lambda}+\lambda G(z_1)), ~~ a \le m \le b$ with $k_{i}^{m}=Pr\{i$ customers arrive during the FES time while serving $m$ customers$\}$.\\
	Now, multiplying (\ref{ch1.eq15}) - (\ref{ch1.eq16}) by appropriate powers of ${z_1}$ and ${z_2}$, summing over $n$ from $0$ to $\infty$ and $j$ from $1$ to $b$, we get
	\begin{eqnarray}
		&&\bigg\{\frac{z-\bar{\lambda}-\lambda G(z_1)}{z}\bigg\}\sum_{n=0}^{\infty}\sum_{j=1}^{b}\beta_{n,j}^{*}(z){z_1}^{n}{z_2}^{j}\nn\\
		&&=\sum_{j=1}^{b}\sum_{m=max{(a,j)}}^{b}\bigg({\bar{\lambda}\alpha_{0,m}(1)+\sum_{n=1}^{\infty}\bigg(\bar{\lambda}\alpha_{n,m}(1)+\lambda\sum_{i=1}^{n}g_{i}\alpha_{n-i,m}(1)\bigg){z_1}^{n}}\bigg){z_2}^{j}\chi_{m,j}S^{*O}_j(z)\nn \\
		&&-\sum_{j=1}^{b}\bigg({\bar{\lambda}\beta_{0,j}(1)+\sum_{n=1}^{\infty}\bigg(\bar{\lambda}\beta_{n,j}(1)+\lambda\sum_{i=1}^{n}g_{i}\beta_{n-i,j}(1)\bigg)}{z_1}^{n}{z_2}^{r}\bigg) \label{ch1.eq38}
	\end{eqnarray}
	Now substituting $z=\bar{\lambda}+\lambda G{(z_1)}$ and using (\ref{ch1.eq20}) - (\ref{ch1.eq25}), and (\ref{ch1.eq33}), we obtain
	\begin{eqnarray}
		\Delta^{+}(z_1,z_2)=\sum_{n=0}^{\infty}\sum_{m=a}^{b}\sum_{j=1}^{m}\alpha_{n,m}^{+}{z_1}^{n}{z_2}^{j}\chi_{m,j}T^{j}(z_1) \label{ch1.eq39}
	\end{eqnarray}
	where $T^{y}(z_1)=\sum_{i=0}^{\infty}t^{y}_{i}z_1^{i}=S_{y}^{*O}(\bar{\lambda}+\lambda G(z_1)), ~~ 1 \le y \le b$ with $t_{i}^{y}=Pr\{i$ customers arrive over the duration of SOS time while serving $y$ customers$\}$.\\
	Now, multiplying (\ref{ch1.eq17}) - (\ref{ch1.eq19}) by appropriate powers of ${z_1}$ and ${z_2}$, summing over $n$ from $k$ to $\infty$ and $k$ from $1$ to $a-1$, we get\\
	\begin{small}
		\begin{eqnarray}
			&&\bigg\{\frac{z-\bar{\lambda}-\lambda G(z_1)}{z}\bigg\}\sum_{k=0}^{a-1}\sum_{n=k}^{\infty}\gamma_{n}^{[k]*}(z){z_1}^{n}{z_2}^{k} \nonumber \\ &&=\sum_{k=1}^{a-1}\sum_{n=k}^{k}\bigg({\sum_{m=a}^{b}\big(\bar{\lambda}\alpha_{n,m}(1)
				+\lambda\sum_{i=1}^{n}g_{i}\alpha_{n-i,m}(1)\big)\chi_{m,0}}+\sum_{j=1}^{b}{\big(\bar{\lambda}\beta_{n,j}(1)
				+\lambda\sum_{i=1}^{n}g_{i}\beta_{n-i,j}(1)\big)}\bigg) \nonumber \\
			&&{z_1}^{n}{z_2}^{k}V^{[n]*}(z)+\bar{\lambda}\bigg(\sum_{m=a}^{b}\alpha_{0,m}(1)\chi_{m,0}+\sum_{j=1}^{b}\beta_{0,j}(1)\bigg)V^{[0]*}(z) \nonumber \\
			&&-\sum_{k=0}^{a-1}\bigg(\bar{\lambda}\gamma_{k}^{[k]}(1)+\sum_{n=k+1}^{\infty}\big(
			\bar{\lambda}\gamma_{n}^{[k]}(1)+\lambda\sum_{i=1}^{n}g_{i}\gamma_{n-i}^{[k]}(1)\big)\bigg){z_1}^{n}{z_2}^{k} \label{ch1.eq40}
		\end{eqnarray}
	\end{small}
	Now substituting $z=\bar{\lambda}+\lambda G{(z_1)}$ and using (\ref{ch1.eq20}) - (\ref{ch1.eq25}), we obtain
	\begin{eqnarray}
		\Upsilon^{+}(z_1,z_2)&=&\sum_{n=0}^{a-1}\bigg(\sum_{\ell=a}^{b}\alpha_{n,\ell}^{+}\chi_{\ell,0}+\sum_{j=1}^{b}\beta_{n,j}^{+}+\delta_{p}\sum_{k=0}^{n}\gamma_n^{[k]+}\bigg)z_1^{n}z_2^{n}H_n(z_1) \label{ch1.eq41}
	\end{eqnarray}
	where $H_{n}(z_1)=\sum_{i=0}^{\infty}h_{i}^{(k)}z^{i}_{1}=V^{[k]*}(\bar{\lambda}+\lambda G(z_1))$, $0\le k \le a-1$ with $h_{i}^{(k)}=Pr\{i$ customers come during the $k$'th type vacation of a server$\}$.\\
	Now substituting $z_2=1$ in (\ref{ch1.eq37}), we obtain the pgf of only queue content distribution at service completion, which is given by
	\begin{small}
		\begin{eqnarray}
			\Pi^{+}(z_1)&=&\frac{\begin{aligned}
					\sum_{n=0}^{a-1}\bigg[(1-\delta_{p})\sum_{k=0}^{n}\gamma_n^{[k]+}\bigg(\sum_{j=n}^{a-1}e_{j,n}\bigg(\sum_{i=a}^{b}g_{i-j}K^{i}{(z_1)}z_1^b+\sum_{i=b+1-j}^{\infty}g_{i}z_{1}^{i+j}K^{b}(z_1)\\
					-\sum_{i=a}^{b-1}g_{i-j}K^{i}(z_1)(\chi_{b,0}+F_{b-a+1}(z_1,1))K^b(z_1)\bigg)\bigg)+\bigg(\sum_{\ell=a}^{b}\alpha_{n,\ell}^{+}\chi_{\ell,0}+\sum_{j=1}^{b}\beta_{n,j}^{+}+\delta_{p}\sum_{k=0}^{a-1}\gamma_{n}^{[k]+}\bigg)z_1^{n}\\
					H_{n}(z_1)K^{b}(z_1)-\bigg(\sum_{\ell=a}^{b}\alpha_{n,\ell}^{+}\chi_{\ell,0}+\sum_{j=1}^{b}\beta_{n,j}^{+}+\sum_{k=0}^{a-1}\gamma_n^{[k]+}\bigg)K^{b}(z_1)z_1^{n}\bigg]
					\\+K^{b}(z_1)\sum_{n=a}^{b-1}\bigg[\bigg(\sum_{\ell=a}^{b}\alpha_{n,\ell}^{+}\chi_{\ell,0}+\sum_{j=1}^{b}\beta_{n,j}^{+}+\sum_{k=0}^{a-1}\gamma_n^{[k]+}\bigg)K^n(z_1)\bigg(z_1^{b}-\big(\chi_{b,0}+F_{b-a+1}(z_1,1)\big)\bigg)\\
					+K^{n}(z_1)\bigg(\sum_{\ell=a}^{b}\alpha_{n,\ell}^{+}\chi_{\ell,0}+\sum_{j=1}^{b}\beta_{n,j}^{+}+\sum_{k=0}^{a-1}\gamma_n^{[k]+}\bigg)\bigg(\chi_{n,0}+F_{n-a+1}(z_1,1)\bigg)\\
					+(1-\delta_{p})\sum_{\ell=0}^{a-1}\sum_{k=0}^{\ell}\gamma_{\ell}^{[k]+}\bigg(\sum_{j=\ell}^{a-1}e_{j,l}g_{n-j}\bigg)-\bigg(\sum_{\ell=a}^{b}\alpha_{n,\ell}^{+}\chi_{\ell,0}+\sum_{j=1}^{b}\beta_{n,j}^{+}+\sum_{k=0}^{a-1}\gamma_n^{[k]+}\bigg)z_1^{n}\bigg]
			\end{aligned}}{z_1^{b}-\big(\chi_{b,0}+F_{b-a+1}(z_1,1)\big)K^{b}(z_1)} \nonumber\\ \label{ch1.eq42}
		\end{eqnarray}
	\end{small}
	where
	\begin{eqnarray*}
		F_i(z_1,1)&=&\sum_{y=1}^{a+i-1}\chi_{a+i-1,y}~T^{y}(z_1); \ \ \  1 \le i \le b-a+1
	\end{eqnarray*}
	After using (\ref{ch1.eq41}) and (\ref{ch1.eq42}) in (\ref{ch1.eq37}) and (\ref{ch1.eq39}), and after some algebraic manipulation, we obtain the bivariate pgfs of queue and server size distribution at FES and SOS, which are exhibited below:
	\begin{small}
		\begin{eqnarray}
			\Pi^{+}(z_1,z_2)&=&\frac{\begin{aligned}
					\sum_{n=0}^{a-1}\bigg[(1-\delta_{p})\sum_{k=0}^{n}\gamma_n^{[k]+}\bigg(\sum_{j=n}^{a-1}e_{j,n}\bigg(\sum_{i=a}^{b}g_{i-j}K^{i}{(z_1)}z_2^{i}z_1^{b}+\sum_{i=b+1-j}^{\infty}g_{i}z_{1}^{i+j}K^{b}(z_1)z_2^{b}\\
					-\sum_{i=a}^{b-1}g_{i-j}K^{i}(z_1)z_2^{i}(\chi_{b,0}+F_{b-a+1}(z_1,1))K^b(z_1)\bigg)\bigg)+\bigg(\sum_{\ell=a}^{b}\alpha_{n,\ell}^{+}\chi_{\ell,0}+\sum_{j=1}^{b}\beta_{n,j}^{+}+\delta_{p}\sum_{k=0}^{a-1}\gamma_{n}^{[k]+}\bigg)\\
					z_1^{n}H_{n}(z_1)K^{b}(z_1)z_{2}^b-\bigg(\sum_{\ell=a}^{b}\alpha_{n,\ell}^{+}\chi_{\ell,0}+\sum_{j=1}^{b}\beta_{n,j}^{+}+\sum_{k=0}^{a-1}\gamma_n^{[k]+}\bigg)z_1^{n}K^{b}(z_1)z_{2}^b\bigg]
					\\+K^{b}(z_1)z_{2}^b\sum_{n=a}^{b-1}\bigg[\bigg(\sum_{\ell=a}^{b}\alpha_{n,\ell}^{+}\chi_{\ell,0}+\sum_{j=1}^{b}\beta_{n,j}^{+}+\sum_{k=0}^{a-1}\gamma_n^{[k]+}\bigg)z_2^{n}K^n(z_1)\bigg(z_1^{b}-\big(\chi_{b,0}+F_{b-a+1}(z_1,1)\big)\bigg)\\
					+K^{n}(z_1)\biggl\{\bigg(\sum_{\ell=a}^{b}\alpha_{n,\ell}^{+}\chi_{\ell,0}+\sum_{j=1}^{b}\beta_{n,j}^{+}+\sum_{k=0}^{a-1}\gamma_n^{[k]+}\bigg)\bigg(\chi_{n,0}+F_{n-a+1}(z_1,1)\bigg)\\
					+(1-\delta_{p})\sum_{\ell=0}^{a-1}\sum_{k=0}^{\ell}\gamma_{\ell}^{[k]+}\bigg(\sum_{j=\ell}^{a-1}e_{j,l}g_{n-j}\bigg)-\bigg(\sum_{\ell=a}^{b}\alpha_{n,\ell}^{+}\chi_{\ell,0}+\sum_{j=1}^{b}\beta_{n,j}^{+}+\sum_{k=0}^{a-1}\gamma_n^{[k]+}\bigg)z_1^{n}\biggr\}\bigg]
			\end{aligned}}{z_1^{b}-\big(\chi_{b,0}+F_{b-a+1}(z_1,1)\big)K^{b}(z_1)} \nonumber\\ \label{ch1.eq43}
		\end{eqnarray}
	\end{small}
	where
	\begin{eqnarray*}
		F_i(z_1,z_2)&=&\sum_{y=1}^{a+i-1}\chi_{a+i-1,y}~z_{2}^{y}~T^{y}(z_1); \ \ \  1 \le i \le b-a+1 
	\end{eqnarray*}
	\begin{small}
		\begin{eqnarray}
			\Delta^{+}(z_1,z_2)&=&\frac{\begin{aligned}
					\sum_{n=0}^{a-1}F_{b-a+1}(z_1,z_2)K^{b}(z_1)\bigg[(1-\delta_{p})\sum_{k=0}^{n}\gamma_n^{[k]+}\bigg(\sum_{j=n}^{a-1}e_{j,n}\bigg(g_{b-j}z_{1}^{b}+\sum_{i=b+1-j}^{\infty}g_{i}z_{1}^{i+j}\bigg)\bigg)\\
					+\bigg(\sum_{\ell=a}^{b}\alpha_{n,\ell}^{+}\chi_{\ell,0}+\sum_{j=1}^{b}\beta_{n,j}^{+}+\delta_{p}\sum_{k=0}^{a-1}\gamma_{n}^{[k]+}\bigg)z_1^{n}
					H_{n}(z_1)-\bigg(\sum_{\ell=a}^{b}\alpha_{n,\ell}^{+}\chi_{\ell,0}+\sum_{j=1}^{b}\beta_{n,j}^{+}+\sum_{k=0}^{a-1}\gamma_n^{[k]+}\bigg)z_1^{n}\bigg]
					\\+\sum_{n=a}^{b-1}\bigg[\bigg(\sum_{\ell=a}^{b}\alpha_{n,\ell}^{+}\chi_{\ell,0}+\sum_{j=1}^{b}\beta_{n,j}^{+}+\delta_{p}\sum_{k=0}^{a-1}\gamma_n^{[k]+}+(1-\delta_{p})\sum_{\ell=0}^{a-1}\sum_{k=0}^{\ell}\gamma_{\ell}^{[k]+}\sum_{j=\ell}^{a-1}e_{j,l}g_{n-j}\bigg)\\
					K^{n}(z_1)\bigg(F_{n-a+1}(z_1,z_2)\big(z_1^{b}-\big(\chi_{b,0}+F_{b-a+1}(z_1,1)\big)K^{b}(z_1)\big)\\+\big(\chi_{n,0}+F_{n-a+1}(z_1,1)\big)F_{b-a+1}(z_1,z_2)K^{b}(z_1)\bigg)\\
					-F_{b-a+1}(z_1,z_2)K^{b}(z_1)\bigg(\sum_{\ell=a}^{b}\alpha_{n,\ell}^{+}\chi_{\ell,0}+\sum_{j=1}^{b}\beta_{n,j}^{+}+\sum_{k=0}^{a-1}\gamma_n^{[k]+}\bigg)z_1^{n}\bigg]
			\end{aligned}}{z_1^{b}-\big(\chi_{b,0}+F_{b-a+1}(z_1,1)\big)K^{b}(z_1)} \nn \\
			\label{ch1.eq44}
		\end{eqnarray}
	\end{small}
	\noindent Equations (\ref{ch1.eq43}) and (\ref{ch1.eq44}) provide the bivariate pgfs for the joint queue and server size distributions under FES and SOS, respectively. These constitute the principal contributions of this work, and to the best of the authors’ knowledge, this introduce new results to the queueing literature. The joint probabilities can be derived from these pgfs using partial-fraction decomposition or by expressing them in terms of the roots of the characteristic equation. The pgfs in (\ref{ch1.eq41}) and (\ref{ch1.eq42}) designate for the queue length distribution at service/vacation completion epochs, and also represent new findings.
	\subsection{Extraction of joint probabilities}
	Before continuing with the derivation of the joint probabilities, it is necessary to evaluate the unknown probabilities present in the numerator of the pgf in (\ref{ch1.eq43}). This can be accomplished using the standard methodology adopted in earlier works, such as Karan and Pradhan (2025) \cite{karan2025analysis}. Using this approach, the $2b$ number of unknowns can be reduced to $b$ unknowns through a relation linking $\alpha_n^+$ and $\sum_{k=0}^{\mbox{min}(n,a-1)}\gamma_{n}^{[k]+}$ as follows:
	\begin{eqnarray}
		\sum_{k=0}^{\mbox{min}(n,a-1)}\gamma_{n}^{[k]+}&=& \sum_{j=0}^{n} \zeta_{j}^{n-j}(\alpha^{+}_{j}+ \beta^{+}_{j}),~~0 \leq n \leq a-1, \nn\\
		\sum_{k=0}^{\mbox{min}(n,a-1)}\gamma_{n}^{[k]+} &=& \pi_{n},~~a \leq n \leq b-1,\label{ch1.eq45}
	\end{eqnarray}
	\mbox{where} $\zeta_{0}^{(j)}=\dfrac{h_{0}^{(j)}}{1-\delta_{p} h_{0}^{(j)}} ~~\mbox{and}~~ \zeta_{n}^{(j)}=\dfrac{h_{n}^{(j)}+ \delta_{p} \sum_{i=1}^{n} h_{i}^{(n+j-i)} \zeta_{n-i}^{(j)}}{1-\delta_{p} h_{0}^{(n+j)}}, ~~ 1 \leq n \leq a-1, ~ 0 \le j \le a-1 ~\\ \mbox{such that } n+j\le a-1 ~~
	\mbox{and}~~ \pi_{n}=\sum_{i=0}^{a-1}\bigg(h_{n-i}^{(j)} + \delta_{p} \sum_{j=0}^{a-1-i} \zeta_{j}^{(n-i)} h_{n-i-j}^{(i)}\bigg) (\alpha^{+}_{i}+\beta^{+}_{i}),~a \leq n \leq b-1$.\\
	Furthermore, the stability of the system may be examined using the concept introduced by Abolnikov and Dukhovny (2021) \cite{abolnikov1991markov}. According to this criterion, the corresponding Markov chain is ergodic provided that
	\begin{eqnarray*}
		\dfrac{d}{dz_1}\{z_1^{b}-\big(\chi_{b,0}+F_{b-a+1}(z_1,1)\big)K^{b}(z_1)\}<b
	\end{eqnarray*}
	which leads to
	\begin{eqnarray*}
		\rho=\dfrac{\lambda {\bar{g }} \left(\mu_b+\displaystyle\sum_{y=1}^{b}\chi_{b,y}\mu^O_y\right)}{b}<1.
	\end{eqnarray*}
	Once the service and vacation time distributions are specified, the corresponding pgfs can be expressed as rational functions. The joint distribution is then obtained by collecting the appropriate coefficients of $z_2^i,$ those for $~a\leq i \leq b$ from equation (\ref{ch1.eq43}) and $z_2^i,~1\leq i \leq b$  from equation (\ref{ch1.eq44}). These coefficients appear first below.\\
	We now extract the coefficients of $z_2^{r},$ for $\ (a\le r \le b)$ from the bivariate pgf $\Pi^{+}(z_1,z_2),$ which are listed below.\\
	Coefficient of $z_{2}^{r}, ~a\le r\le b-1$:
	\begin{small}
		\begin{equation}
			\sum_{n=0}^{\infty}\alpha_{n,r}^{+}z_1^{n}=K^{r}(z_1)\bigg(\sum_{\ell=a}^{b}\alpha_{r,\ell}^{+}\chi_{\ell,0}+\sum_{j=1}^{b}\beta_{r,j}^{+}
			+(1-\delta_{p})\sum_{n=0}^{a-1}\sum_{k=0}^{n}\gamma_{r}^{[k]+}\big(\sum_{j=n}^{a-1}e_{j,n}g_{n-j}\big)+\delta_{p}\sum_{k=0}^{a-1}\gamma_r^{[k]+}\bigg) ~~~\label{ch1.eq46}
		\end{equation}
	\end{small}
	Coefficient of $z_{2}^{b}$:
	\begin{small}
		\begin{eqnarray}
			\sum_{n=0}^{\infty}\alpha_{n,b}^{+}z_1^{n}=\frac{\begin{aligned} K^{b}(z_1)\bigg[\sum_{n=0}^{a-1}\bigg(\sum_{\ell=a}^{b}\alpha_{n,\ell}^{+}\chi_{\ell,0}+\sum_{j=1}^{b}\beta_{n,j}^{+}+\delta_{p}\sum_{k=0}^{a-1}\gamma_n^{[k]+}\bigg)z_1^{n}H_{n}(z_1)
					-\bigg(\sum_{\ell=a}^{b}\alpha_{n,\ell}^{+}\chi_{\ell,0}\\
					+\sum_{j=1}^{b}\beta_{n,j}^{+}	+\sum_{k=0}^{a-1}\gamma_n^{[k]+}\bigg)z_1^{n}+(1-\delta_{p})\sum_{k=0}^{n}\gamma_n^{[k]+}\bigg(\sum_{j=n}^{a-1}e_{j,n}\bigg(g_{b-j}z_1^{b}+\sum_{i=b+1-j}^{\infty}g_{i}z_{1}^{i+j}\bigg)\bigg)
					\\
					+\sum_{n=a}^{b-1}\biggl\{K^{n}(z_1)\bigg(\sum_{\ell=a}^{b}\alpha_{n,\ell}^{+}\chi_{\ell,0}+\sum_{j=1}^{b}\beta_{n,j}^{+}+\sum_{k=0}^{a-1}\gamma_n^{[k]+}\bigg)\bigg(\chi_{n,0}+F_{n-a+1}(z_1,1)\bigg)\\
					+(1-\delta_{p})\sum_{\ell=0}^{a-1}\sum_{k=0}^{\ell}\gamma_{\ell}^{[k]+}\bigg(\sum_{j=\ell}^{a-1}e_{j,l}g_{n-j}\bigg)-\bigg(\sum_{\ell=a}^{b}\alpha_{n,\ell}^{+}\chi_{\ell,0}+\sum_{j=1}^{b}\beta_{n,j}^{+}+\sum_{k=0}^{a-1}\gamma_n^{[k]+}\bigg)z_1^{n}\biggr\}
					\bigg]
			\end{aligned}}{z_1^{b}-\big(\chi_{b,0}+F_{b-a+1}(z_1,1)\big)K^{b}(z_1)} \label{ch1.eq47}
		\end{eqnarray}
	\end{small}
	Now, we find the coefficients of $z_2^{r}, \ (1\le r \le b)$ from the bivariate pgf $\Delta^{+}(z_1,z_2)$ and are given below.\\
	Coefficient of $z_{2}^{r}, \ 1\le r \le a $:
	\begin{small}
		\begin{eqnarray}
			\sum_{n=0}^{\infty}\beta_{n,r}^{+}z_1^{n}=\frac{\begin{aligned}
					\sum_{n=0}^{a-1}\chi_{b,r}T^{r}(z_1)K^{b}(z_1)\bigg[(1-\delta_{p})\sum_{k=0}^{n}\gamma_n^{[k]+}\bigg(\sum_{j=n}^{a-1}e_{j,n}\bigg(g_{b-j}z_{1}^{b}+\sum_{i=b+1-j}^{\infty}g_{i}z_{1}^{i+j}\bigg)\bigg)\\
					+\bigg(\sum_{\ell=a}^{b}\alpha_{n,\ell}^{+}\chi_{\ell,0}+\sum_{j=1}^{b}\beta_{n,j}^{+}+\delta_{p}\sum_{k=0}^{a-1}\gamma_{n}^{[k]+}\bigg)z_1^{n}
					H_{n}(z_1)-\bigg(\sum_{\ell=a}^{b}\alpha_{n,\ell}^{+}\chi_{\ell,0}+\sum_{j=1}^{b}\beta_{n,j}^{+}+\sum_{k=0}^{a-1}\gamma_n^{[k]+}\bigg)z_1^{n}\bigg]
					\\+\sum_{n=a}^{b-1}\bigg[\bigg(\sum_{\ell=a}^{b}\alpha_{n,\ell}^{+}\chi_{\ell,0}+\sum_{j=1}^{b}\beta_{n,j}^{+}+\delta_{p}\sum_{k=0}^{a-1}\gamma_n^{[k]+}+(1-\delta_{p})\sum_{\ell=0}^{a-1}\sum_{k=0}^{\ell}\gamma_{\ell}^{[k]+}\sum_{j=\ell}^{a-1}e_{j,l}g_{n-j}\bigg)\\
					K^{n}(z_1)\bigg(\chi_{n,r}T^{r}(z_1)\big(z_1^{b}-\big(\chi_{b,0}+F_{b-a+1}(z_1,1)\big)K^{b}(z_1)\big)\\+\big(\chi_{n,0}+F_{n-a+1}(z_1,1)\big)\chi_{b,r}T^{r}(z_1)K^{b}(z_1)\bigg)\\
					-\chi_{b,r}T^{r}(z_1)K^{b}(z_1)\bigg(\sum_{\ell=a}^{b}\alpha_{n,\ell}^{+}\chi_{\ell,0}+\sum_{j=1}^{b}\beta_{n,j}^{+}+\sum_{k=0}^{a-1}\gamma_n^{[k]+}\bigg)z_1^{n}\bigg]
			\end{aligned}}{z_1^{b}-\big(\chi_{b,0}+F_{b-a+1}(z_1,1)\big)K^{b}(z_1)} \label{ch1.eq48}
		\end{eqnarray}
	\end{small}
	Coefficient of $z_{2}^{r}, \ a+1\le r \le b-1 $:
	\begin{small}
		\begin{eqnarray}
			\sum_{n=0}^{\infty}\beta_{n,r}^{+}z_1^{n}=\frac{\begin{aligned}
					\sum_{n=0}^{a-1}\chi_{b,r}T^{r}(z_1)K^{b}(z_1)\bigg[(1-\delta_{p})\sum_{k=0}^{n}\gamma_n^{[k]+}\bigg(\sum_{j=n}^{a-1}e_{j,n}\bigg(g_{b-j}z_{1}^{b}+\sum_{i=b+1-j}^{\infty}g_{i}z_{1}^{i+j}\bigg)\bigg)\\
					+\bigg(\sum_{\ell=a}^{b}\alpha_{n,\ell}^{+}\chi_{\ell,0}+\sum_{j=1}^{b}\beta_{n,j}^{+}+\delta_{p}\sum_{k=0}^{a-1}\gamma_{n}^{[k]+}\bigg)z_1^{n}
					H_{n}(z_1)-\bigg(\sum_{\ell=a}^{b}\alpha_{n,\ell}^{+}\chi_{\ell,0}+\sum_{j=1}^{b}\beta_{n,j}^{+}+\sum_{k=0}^{a-1}\gamma_n^{[k]+}\bigg)z_1^{n}\bigg]
					\\+\sum_{n=a}^{b-1}\bigg[\bigg(\sum_{\ell=a}^{b}\alpha_{n,\ell}^{+}\chi_{\ell,0}+\sum_{j=1}^{b}\beta_{n,j}^{+}+\delta_{p}\sum_{k=0}^{a-1}\gamma_n^{[k]+}+(1-\delta_{p})\sum_{\ell=0}^{a-1}\sum_{k=0}^{\ell}\gamma_{\ell}^{[k]+}\sum_{j=\ell}^{a-1}e_{j,l}g_{n-j}\bigg)\\
					K^{n}(z_1)\bigg(\big(\chi_{n,0}+F_{n-a+1}(z_1,1)\big)\chi_{b,r}T^{r}(z_1)K^{b}(z_1)\bigg)\\
					-\chi_{b,r}T^{r}(z_1)K^{b}(z_1)\bigg(\sum_{\ell=a}^{b}\alpha_{n,\ell}^{+}\chi_{\ell,0}+\sum_{j=1}^{b}\beta_{n,j}^{+}+\sum_{k=0}^{a-1}\gamma_n^{[k]+}\bigg)z_1^{n}\bigg]\\
					+\sum_{n=a+1}^{b-1}\bigg[\bigg(\sum_{\ell=a}^{b}\alpha_{n,\ell}^{+}\chi_{\ell,0}+\sum_{j=1}^{b}\beta_{n,j}^{+}+\delta_{p}\sum_{k=0}^{a-1}\gamma_n^{[k]+}+(1-\delta_{p})\sum_{\ell=0}^{a-1}\sum_{k=0}^{\ell}\gamma_{\ell}^{[k]+}\sum_{j=\ell}^{a-1}e_{j,l}g_{n-j}\bigg) \\ K^{n}(z_1)\chi_{n,r}T^{r}(z_1)\big(z_1^{b}-\big(\chi_{b,0}+F_{b-a+1}(z_1,1)\big)K^{b}(z_1)\big)\bigg]
			\end{aligned}}{z_1^{b}-\big(\chi_{b,0}+F_{b-a+1}(z_1,1)\big)K^{b}(z_1)} \label{ch1.eq49}
		\end{eqnarray}
	\end{small}
	Coefficient of $z_{2}^{b}$:
	\begin{small}
		\begin{eqnarray}
			\sum_{n=0}^{\infty}\beta_{n,b}^{+}z_1^{n}=\frac{\begin{aligned}
					\sum_{n=0}^{a-1}\chi_{b,b}T^{b}(z_1)K^{b}(z_1)\bigg[(1-\delta_{p})\sum_{k=0}^{n}\gamma_n^{[k]+}\bigg(\sum_{j=n}^{a-1}e_{j,n}\bigg(g_{b-j}z_{1}^{b}+\sum_{i=b+1-j}^{\infty}g_{i}z_{1}^{i+j}\bigg)\bigg)\\
					+\bigg(\sum_{\ell=a}^{b}\alpha_{n,\ell}^{+}\chi_{\ell,0}+\sum_{j=1}^{b}\beta_{n,j}^{+}+\delta_{p}\sum_{k=0}^{a-1}\gamma_{n}^{[k]+}\bigg)z_1^{n}
					H_{n}(z_1)-\bigg(\sum_{\ell=a}^{b}\alpha_{n,\ell}^{+}\chi_{\ell,0}+\sum_{j=1}^{b}\beta_{n,j}^{+}+\sum_{k=0}^{a-1}\gamma_n^{[k]+}\bigg)z_1^{n}\bigg]
					\\+\sum_{n=a}^{b-1}\bigg[\bigg(\sum_{\ell=a}^{b}\alpha_{n,\ell}^{+}\chi_{\ell,0}+\sum_{j=1}^{b}\beta_{n,j}^{+}+\delta_{p}\sum_{k=0}^{a-1}\gamma_n^{[k]+}+(1-\delta_{p})\sum_{\ell=0}^{a-1}\sum_{k=0}^{\ell}\gamma_{\ell}^{[k]+}\sum_{j=\ell}^{a-1}e_{j,l}g_{n-j}\bigg)\\
					K^{n}(z_1)\bigg(\big(\chi_{n,0}+F_{n-a+1}(z_1,1)\big)\chi_{b,b}T^{b}(z_1)K^{b}(z_1)\bigg)\\
					-\chi_{b,b}T^{b}(z_1)K^{b}(z_1)\bigg(\sum_{\ell=a}^{b}\alpha_{n,\ell}^{+}\chi_{\ell,0}+\sum_{j=1}^{b}\beta_{n,j}^{+}+\sum_{k=0}^{a-1}\gamma_n^{[k]+}\bigg)z_1^{n}\bigg]
			\end{aligned}}{z_1^{b}-\big(\chi_{b,0}+F_{b-a+1}(z_1,1)\big)K^{b}(z_1)} \label{ch1.eq50}
		\end{eqnarray}
	\end{small}
	\noindent Based on the expressions above, the joint probabilities may be obtained through partial-fraction expansion by using the roots of the associated denominator and considering both simple and repeated roots, following the approach in Nandy and Pradhan (2023) \cite{nandy2023stationary}. For example, we provide the joint probabilities $\alpha^+_{n,r}$ below:
	\begin{eqnarray}
		\alpha^+_{n,r}&=& \bigg(\sum_{\ell=a}^{b}\alpha_{r,\ell}^{+}\chi_{\ell,0}+\sum_{j=1}^{b}\beta_{r,j}^{+}
		+(1-\delta_{p})\sum_{n=0}^{a-1}\sum_{k=0}^{n}\gamma_{r}^{[k]+}\big(\sum_{j=n}^{a-1}e_{j,n}g_{r-j}\big)+\delta_{p}\sum_{k=0}^{a-1}\gamma_r^{[k]_+}\bigg)\nn \\
		&&~~k^{r}_n, n\geq 0,~a\leq r \leq b-1\label{ch1.eq51}\\
		\alpha^+_{n,b}&=&  \left\{\begin{array}{r@{\mskip\thickmuskip}l}
			&\tau_{n}+\sum\limits_{k=1}^{M_1-b}\frac{c_{k}}{\xi^{n+1}_{k}},\quad 0\leq n\leq L_1-M_1,\\
			&\\
			& \sum\limits_{k=1}^{M_1-b}\frac{c_{k}}{\xi^{n+1}_{k}},\quad n> L_1-M_1~\end{array}\right. \label{ch1.eq52}
	\end{eqnarray}
	where, $\tau_i$ are the resulting coefficient after division of numerator $(\mbox{say,}~N_1(z_1))$ with degree $L_1$, by denominator $(\mbox{say,}~D_1(z_1))$ with degree $M_1$ of (\ref{ch1.eq47}), and $c_{k}=-\dfrac{N_1'(\xi_{k})}{D_1^{\prime}(\xi_{k})},\quad k=1,2,\ldots,M_1-b$, $\xi_k$ being the roots of $D_1(z_1)=0$ lying in $|z_1|>1$.\\
	Likewise, the joint probabilities $\beta^+_{n,r}$ at the completion of SOS can be obtained using equations (\ref{ch1.eq48})–(\ref{ch1.eq50}). In the same manner, the queue length probabilities alone can be extracted from equation (\ref{ch1.eq42}), and these are particularly useful for evaluating tail distributions. From an application standpoint, a key performance measure play a noteworthy role in the efficient utilization of network resources, which is often realized through the exploitation of traffic burstiness using statistical multiplexing and buffering mechanisms. In this regard, the packet loss ratio is a crucial metric, generally determined through tail probabilities, which is estimated using the smallest single root located outside the unit circle.
	\section{Particular cases and matching with existing results}\lb{PC}
	\noindent From the obtained results in Section \ref{SC}, some existing results can be achieved as special cases which are listed below.
	\begin{itemize}
		\item \textbf{$Geo^{X}/G_n^{(a,b)}/1$ queue without SOS:}\\
		In our model, if we consider $p=0$, (i.e., the probability of joining of a single customer to SOS is zero, which leads $\chi_{r,y}=0,~a\leq r \leq b, ~1\leq y \leq b$), and $g_1=1$ and $g_{i}=0, \forall i \ge 2,$ then it reduces to $Geo/G_n^{(a,b)}/1$ queue without SOS. Hence, the reduced corresponding bivariate pgf of queue and server content exactly matches with the one given by Nandy and Pradhan (2021) \cite{nandy2021joint} [P.443, eq. 33].
		
		\item \textbf{$Geo^{X}/G_n^{(a,b)}/1$ queue with SOS with/without joining of whole batch:}\\
		If we substitute the following in our recent model under study, then it reduces to $Geo^{X}/G_n^{(a,b)}/1$ queue with SOS in which the entire batch of size $r,~a\leq r \leq b$ already served in FES join SOS with probability $1-p$ and may not join SOS with probability $p$. All the corresponding results are still not available in the literature, to the best of authors' knowledge.
		\begin{align*}
			\chi_{r,y} =\left\{\begin{array}{r@{\mskip\thickmuskip}l}
				p, ~~& \text{if}~ y=0 \\
				0, ~~& \text{if} ~1 \leq y \leq r-1\\
				1-p, ~~& \text{if} ~y=r
			\end{array}\right.
		\end{align*}
		
		\item \textbf{$Geo^{X}/G_n^{(a,b)}/1$ queue without vacation:}\\
		If we consider the vacation time as zero i.e., $H_{n}(z_1)=H(z_1)=V^{[k]*}(\bar{\lambda}+\lambda G(z_1))=1$, where $n$ and $k$ lies between $0$ to $a-1$, then the joint queue and server size distribution for $Geo/G_n^{(a,b)}/1$ queue without vacation can be
		availed as a special case from the model under investigation. In this case, the system consists of both the FES and SOS. The corresponding theoretical and numerical results are still not available in the literature which can be perceived from our analysis as a special case.
		
		\item \textbf{$Geo/G/1$ queue without vacation and without SOS}\\
		In our current model under investigation, if we substitute $a=b=1$, $K^r(z_1)=K(z_1),~a\leq r \leq b$, $H_{n}(z_1)=H(z_1)=V^{[k]*}(\bar{\lambda}+\lambda G(z_1))=1,~ 0\le n,k \le a-1$, and $\chi_{r,y}=0,~a\leq r \leq b, ~1\leq y \leq b$, then our model reduces to the most basic discrete-time queueing system and the pgf exactly matches with  Alfa (2016) \cite{alfa2016applied} [P.157].
	\end{itemize}
	\section{Stationary joint distribution at arbitrary slot}\lb{RS}
	While the joint distribution at an arbitrary slot may be obtained via its bivariate pgf, the availability of joint distributions at service complete slot enables a more streamlined approach. Using these known results, we derive the arbitrary-slot probabilities, which in turn facilitate the evaluation of several marginal distributions and queueing characteristics in an elegant manner. Hence, we formulate a relationship linking the probabilities at these slots.
	\begin{thm}
		The joint probabilities at service completion and at an arbitrary slot i.e., $(\alpha_{n,r},\alpha_{n,r}^{+},\alpha_n^{+}),$ $(\beta_{n,r}, \beta_{n,r}^{+}, \beta_n^{+})$ and $(\gamma_{n}^{[k]},\gamma_{n}^{[k]+}) $are connected through the following relation: \\
		\begin{eqnarray}
			\vartheta_{0,0}&=&\frac{\gamma_{0}^{[0]+}}{E^{*}},   ~~~\label{ch1.eq53} \\
			\vartheta_{n,0}&=&\frac{\sum_{m=0}^{n}\sum_{k=0}^{m}e_{n,m}\gamma_{m}^{[k]+}}{E^{*}}, \ \ \ \ 1 \le n \le a-1  ~~~\label{ch1.eq54} \\
			\alpha_{0,r}&=&(1-\delta_{p})\sum_{n=0}^{a-1}g_{r-n}\vartheta_{n,0}+\frac{\sum_{m=a}^{b}\alpha_{r,m}^{+}\chi_{m,0}+\sum_{j=1}^{b}\beta_{r,j}^{+}+\sum_{k=0}^{a-1}\gamma_{r}^{[k]+}-\alpha_{0,r}^{+}}{E^{*}}, \nonumber \\
			&&~a \le r \le b \label{ch1.eq55}  \\
			\alpha_{n,r}&=& \sum_{i=1}^{n}g_{i}\alpha_{n-i,r}-\frac{\alpha_{n,r}^{+}}{E^{*}},\ \ \ \ n\ge 1, \  a \le r \le b-1 ~~~\label{ch1.eq56} \\
			\alpha_{n,b}&=&\sum_{i=1}^{n}g_{i}\alpha_{n-i,b}+\frac{\sum_{m=a}^{b}\alpha_{n+b,m}^{+}\chi_{m,0}+\sum_{j=1}^{b}\beta_{n+b,j}^{+}+\sum_{k=0}^{a-1}\gamma_{n+b}^{[k]+}-\alpha_{n,b}^{+}}{E^{*}} \nonumber \\
			&&~+(1-\delta_{p})\sum_{i=0}^{a-1}g_{n+b-i}\vartheta_{i,0} , \ \ \ \ n \ge 1 ~~~\label{ch1.eq57} \\
			\beta_{0,j}&=&\frac{\sum_{m=max(a,j)}^{b}\alpha_{0,m}^{+}\chi_{m,j}-\beta_{0,j}^{+}}{E^{*}}, \ \ \ \ 1 \le j \le b ~~~\label{ch1.eq58} \\
			\beta_{n,j}&=&\sum_{i=1}^{n}g_{i}\beta_{n-i,j}+\frac{\sum_{m=max(a,j)}^{b}\alpha_{n,m}^{+}\chi_{m,j}-\beta_{n,j}^{+}}{E^{*}},\ \ \ \ n\ge 1, \  1 \le j \le b ~~~\label{ch1.eq59} \\
			\gamma_{k}^{[k]}&=&\frac{\sum_{m=a}^{b}\alpha_{k,m}^{+}\chi_{m,0}+\sum_{j=1}^{b}\beta_{k,j}^{+}-\gamma_{k}^{[k]+}+\delta_{p}\sum_{m=0}^{k}\gamma_{k}^{[m]+}}{E^{*}}, \ \ \ \ 0 \le k \le a-1 ~~~\label{ch1.eq60}\\
			\gamma_{n}^{[k]}&=&\sum_{i=1}^{n}g_{i}\gamma_{n-i}^{[k]}-\frac{\gamma_n^{[k]+}}{E^{*}}, \ \ \ \ n \ge a~~~\label{ch1.eq61}
		\end{eqnarray}
		where
		\begin{eqnarray}
			E^{*}&=&\lambda\Lambda+(1-\delta_{p})\sum_{j=0}^{a-1}\sum_{m=0}^{j}\sum_{k=0}^{m}e_{j,m}\gamma_{m}^{[k]+},\label{ch1.eq62}\\
			\Lambda&=&\sum_{n=0}^{a-1}\bigg[(1-\delta_{p})\sum_{k=0}^{n}\gamma_n^{[k]+}\bigg(\sum_{j=n}^{a-1}e_{j,n}\big(\sum_{\ell=a}^{b}g_{\ell-j}\bar{S_\ell}+\sum_{\ell=b+1}^{\infty}g_{\ell-j}\bar{S_b}\big)\bigg)+\sum_{\ell=a}^{b}\alpha_{n,\ell}^{+}\chi_{\ell,0}\bar{V}^{[n]} \nn \\
			&&~+\sum_{j=1}^{b}\beta_{n,j}^{+}\bar{V}^{[n]}+\sum_{\ell=a}^{b}\sum_{m=1}^{\ell}\alpha_{n,\ell}^{+}\chi_{\ell,m}\bar S_{m}^O+\delta_{p}\sum_{k=0}^{a-1}\gamma_{n}^{[k]+}\bar{V}^{[n]}\bigg] \nn\\
			&&~+\sum_{n=a}^{b}\bigg[\sum_{k=0}^{a-1}\gamma_n^{[k]+}\bar S_{n}+\sum_{r=a}^{b}\bigg(\chi_{r,0}\bar S_{n}+\sum_{m=1}^{r}\chi_{r,m}\bar S_{m}^O\bigg)\alpha_{n,r}^{+}
			+\sum_{j=1}^{b}\beta_{n,j}^{+}\bar S_{n}\bigg]  \nn\\
			&&~+\sum_{n=b+1}^{\infty}\bigg[\sum_{k=0}^{a-1}\gamma_n^{[k]+}\bar S_{b}+\sum_{r=a}^{b}\bigg(\chi_{r,0}\bar S_{b}+\sum_{m=1}^{r}\chi_{r,m}\bar S_{m}^O\bigg)\alpha_{n,r}^{+}
			+\sum_{j=1}^{b}\beta_{n,j}^{+}\bar S_{b}\bigg] \label{ch1.eq63}\end{eqnarray}
	\end{thm}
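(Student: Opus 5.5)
The plan is to work directly with the $z$-transformed equations (\ref{ch1.eq12})--(\ref{ch1.eq19}) at $z=1$, together with the boundary equations (\ref{ch1.eq2})--(\ref{ch1.eq3}). In each of them, setting $z=1$ turns $\left(\frac{z-\bar\lambda}{z}\right)$ into $\lambda$ and $S^*_r(1)=S^{*O}_j(1)=V^{[k]*}(1)=1$. Each equation then becomes a linear relation between the arbitrary-slot marginals $\alpha_{n,r}=\alpha^*_{n,r}(1)$, $\beta_{n,j}$, $\gamma^{[k]}_n$ and the "one slot remaining" quantities $\alpha_{n,r}(1)$, $\beta_{n,j}(1)$, $\gamma_n^{[k]}(1)$. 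For example, (\ref{ch1.eq13}) gives
\[
\lambda\alpha_{n,r}=\lambda\sum_{i=1}^n g_i\alpha_{n-i,r}-\Big(\bar\lambda\alpha_{n,r}(1)+\lambda\sum_{i=1}^n g_i\alpha_{n-i,r}(1)\Big).
\]

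Next I would invoke Lemma 5.1, i.e. (\ref{ch1.eq20})--(\ref{ch1.eq25}). These say that every bracket of the form $\bar\lambda x_n(1)+\lambda\sum_i g_i x_{n-i}(1)$ equals $\tau x_n^+$. This is exactly the combination appearing in all the brackets of (\ref{ch1.eq12}), (\ref{ch1.eq14}), (\ref{ch1.eq16}) and (\ref{ch1.eq18}). After the substitution, each equation reads $\lambda x_n=\lambda\sum g_i x_{n-i}+\tau(\text{inflow}^+-x_n^+)$. Dividing by $\lambda$ produces the displayed forms (\ref{ch1.eq55})--(\ref{ch1.eq61}) with $1/E^*=\tau/\lambda$, so $E^*=\lambda/\tau$. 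The $\vartheta$ terms in (\ref{ch1.eq55}) and (\ref{ch1.eq57}) are carried along unchanged from (\ref{ch1.eq12}) and (\ref{ch1.eq14}). For the dormant states I would solve the triangular recursion (\ref{ch1.eq2})--(\ref{ch1.eq3}) with $\delta_p=0$, exactly as in (\ref{ch1.eq29}), using the coefficients $e_{n,m}$ of (\ref{ch1.eq28}). This gives $\lambda\vartheta_{n,0}=\tau\sum_{m\le n}\sum_k e_{n,m}\gamma_m^{[k]+}$, which yields (\ref{ch1.eq53})--(\ref{ch1.eq54}) with the same normalising constant.

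It then remains to identify $E^*$ explicitly. From Lemma 5.2, $\tau=(1-(1-\delta_p)\sum_n\vartheta_{n,0})/\Lambda$. Substituting $\vartheta_{n,0}=\tau\lambda^{-1}\sum_{m,k}e_{n,m}\gamma^{[k]+}_m$ gives
\[
\tau\Lambda=1-(1-\delta_p)\,\frac{\tau}{\lambda}\sum_{j=0}^{a-1}\sum_{m=0}^{j}\sum_{k=0}^{m}e_{j,m}\gamma^{[k]+}_m .
\]
Solving this for $\lambda/\tau$ gives
\[
\frac{\lambda}{\tau}=\lambda\Lambda+(1-\delta_p)\sum_{j}\sum_{m}\sum_{k}e_{j,m}\gamma_m^{[k]+},
\]
which is (\ref{ch1.eq62}), with $\Lambda$ as in (\ref{ch1.eq63}), carried over from (\ref{ch1.eq27}).

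I expect the main obstacle to be bookkeeping rather than anything conceptual. The vacation-completion equations (\ref{ch1.eq17})--(\ref{ch1.eq18}) contain the multiple-vacation feedback term $\delta_p\gamma^{[k]}(1)$, and it has to be converted consistently into $\delta_p\sum_m\gamma_k^{[m]+}$. The ranges of $n$ in (\ref{ch1.eq61}) (stated as $n\ge a$, versus $n\ge k+1$ in (\ref{ch1.eq19})) also have to be reconciled. Finally, the sign conventions must match: in (\ref{ch1.eq56}) and (\ref{ch1.eq61}) the subtracted $x_n^+/E^*$ comes from the $-\bar\lambda x_n(1)-\lambda\sum g_i x_{n-i}(1)$ terms. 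I would check each case individually against Lemma 5.1, taking care that the $n=0$ versions use (\ref{ch1.eq20}), (\ref{ch1.eq22}) and (\ref{ch1.eq24}) rather than the general form.
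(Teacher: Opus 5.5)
Your proposal is correct and is essentially the paper's own argument. You evaluate (\ref{ch1.eq12})--(\ref{ch1.eq19}) at $z=1$ alongside the dormant-state equations (\ref{ch1.eq2})--(\ref{ch1.eq3}), convert every one-slot-remaining bracket into departure-epoch probabilities via Lemma 5.1, unroll the dormant recursion with the coefficients $e_{n,m}$, and pin down the common constant $E^{*}=\lambda/\tau$ through Lemma 5.2, which is exactly the normalisation step (\ref{ch1.eq64})--(\ref{ch1.eq69}). As in the paper, the $\vartheta_{n,0}$ formulas are really the $\delta_p=0$ case, since for $\delta_p=1$ equations (\ref{ch1.eq2})--(\ref{ch1.eq3}) force $\vartheta_{n,0}=0$; the range $n\ge a$ in (\ref{ch1.eq61}) needs no reconciliation, because it lies inside $n\ge k+1$.
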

	\begin{proof}
		Use of (\ref{ch1.eq29}) in (\ref{ch1.eq2}) - (\ref{ch1.eq3}) and then division by $\tau$ and use of (\ref{ch1.eq20}) - (\ref{ch1.eq25}) leads to
		\begin{eqnarray}
			\vartheta_{0,0}&=&\frac{1-(1-\delta_{p})\sum_{n=0}^{a-1}\vartheta_{n,0}}{\lambda\Lambda}\gamma_{0}^{[0]+}~~~\label{ch1.eq64} \\
			\vartheta_{n,0}&=&\frac{1-(1-\delta_{p})\sum_{n=0}^{a-1}\vartheta_{n,0}}{\lambda\Lambda}\sum_{m=0}^{n}\sum_{k=0}^{m}e_{n,m}\gamma_{m}^{[k]+}, \ \ \ 1 \le n \le a-1.~~~\label{ch1.eq65}
		\end{eqnarray}
		Now, dividing (\ref{ch1.eq64}) by (\ref{ch1.eq65}), we get
		\begin{eqnarray}
			\vartheta_{n,0}=\frac{\vartheta_{0,0}}{\gamma_{0}^+}\sum_{m=0}^{n}\sum_{k=0}^{m}e_{n,m}\gamma_{m}^{[k]+},\ \ \ 0 \le n \le a-1.~~~\label{ch1.eq66}
		\end{eqnarray}
		Use of (\ref{ch1.eq66}) in (\ref{ch1.eq64}) and (\ref{ch1.eq65}) yields
		\begin{eqnarray}
			\vartheta_{n,0}=\frac{\sum_{m=0}^{n}\sum_{k=0}^{m}e_{n,m}\gamma_{m}^{[k]+}}{\lambda\Lambda+(1-\delta_{p})\sum_{j=0}^{a-1}\sum_{m=0}^{j}\sum_{k=0}^{m}e_{j,m}\gamma_{m}^{[k]+}},\ \ \ 0 \le n \le a-1.~~~\label{ch1.eq67}
		\end{eqnarray}
		which is the desired result of (\ref{ch1.eq54}).\\
		Putting $z=1$ in (\ref{ch1.eq12}), use of (\ref{ch1.eq29}), and division by $\tau$ gives
		\begin{eqnarray}
			\alpha_{0,r}&=&\frac{1-(1-\delta_{p})\sum_{n=0}^{a-1}\vartheta_{n,0}}{{\lambda\Lambda}}\bigg(\sum_{m=a}^{b}\alpha_{r,m}^{+}\chi_{m,0}+\sum_{j=1}^{b}\beta_{r,j}^{+}
			+\sum_{k=0}^{a-1}\gamma_{r}^{[k]+}-\alpha_{0,r}^{+}\bigg)\nonumber \\
			&&+(1-\delta_{p})\sum_{n=0}^{a-1}g_{r-n}\vartheta_{n,0}, ~~~~~~~a \le r \le b  ~~~\label{ch1.eq68}
		\end{eqnarray}
		As
		\begin{eqnarray}
			1-(1-\delta_{p})\sum_{n=0}^{a-1}\vartheta_{n,0}=\frac{\lambda\Lambda}{\lambda\Lambda+(1-\delta_{p})\sum_{j=0}^{a-1}\sum_{m=0}^{j}\sum_{k=0}^{m}e_{j,m}\gamma_{m}^{[k]+}}, ~~~\label{ch1.eq69}
		\end{eqnarray}
		Using (\ref{ch1.eq69}) in (\ref{ch1.eq67}) and (\ref{ch1.eq68}), we get the required results of (\ref{ch1.eq55}).
		Substituting $z=1$ in (\ref{ch1.eq13}) - (\ref{ch1.eq19}) and then dividing by $\tau,$ using the similar approach, we get our desired results of (\ref{ch1.eq56}) - (\ref{ch1.eq61}).
	\end{proof}
	\section{Performance indices}\lb{PI}
	Performance metrics play an indispensable role from a practical standpoint. Using the steady-state probabilities obtained, we evaluate several significant performance indicators for the proposed model.
	\begin{itemize}
		\item[(i)] \emph{\textbf{Probability distribution of system and queue count:}} To determine the average number of customers in the system or queue, we first need to find the corresponding probability distributions. Let $~\Psi_n^{sys}$ and $\Psi_n^{queue}$
		denote the probability distribution of system and queue count, respectively, which are obtained as:\\
		\begin{align*}
			\Psi_n^{sys} = \left\{\begin{array}{r@{\mskip\thickmuskip}l}
				&\left(1-\delta_{p}\right)\vartheta_{n,0} +\displaystyle\sum_{j=1}^{\mbox{min}(n,b)}\beta_{n-j,j}+\sum_{k=0}^{\mbox{min}(n,a-1)}\gamma_{n}^{[k]} ~~,\hspace{1.0cm}0\leq n \leq a-1\\
				& \displaystyle\sum_{r=a}^{\mbox{min}(n,b)}\alpha_{n-r,r}+\displaystyle\sum_{j=1}^{\mbox{min}(n,b)}\beta_{n-j,j}+\sum_{k=0}^{\mbox{min}(n,a-1)}\gamma_{n}^{[k]}~~,\hspace{1.3cm} n\geq a,\vspace{0.1cm}
			\end{array}\right.
		\end{align*}
		
		\begin{align*}
			\Psi_n^{queue} =\left\{\begin{array}{r@{\mskip\thickmuskip}l}
				& \left(1-\delta_{p}\right)\vartheta_{n,0}+\displaystyle\sum_{r=a}^{b}\alpha_{n,r}+\displaystyle\sum_{j=1}^{b}\beta_{n,j}+\sum_{k=0}^{\mbox{min}(n,a-1)}\gamma_{n}^{[k]}~,\hspace{0.3cm}0\leq n \leq a-1\\
				&\displaystyle\sum_{r=a}^{b}\alpha_{n,r}+\displaystyle\sum_{j=1}^{b}\beta_{n,j}+\sum_{k=0}^{\mbox{min}(n,a-1)}\gamma_{n}^{[k]}~,\hspace{4.0cm} n\geq a.
			\end{array}\right.
		\end{align*}
		\item[(ii)] \emph{\textbf{Expected number of customers present with the server during the FES and SOS busy periods:}} The expected number of customers in the queue and being served in the FES and SOS busy periods are respectively obtained as
		\begin{eqnarray*}
			L_{q}=\sum_{n=0}^{\infty}n\Psi_{n}^{queue},~~~~~~	L_{S}^{FES}=\sum_{r=a}^{b}r \Psi_{r}^{ser},~~~~~~L_{S}^{SOS}=\sum_{l=1}^{b}y \eta_{l}^{ser},
		\end{eqnarray*} 
		where $\Psi_{r}^{ser}=\displaystyle\sum_{n=0}^{\infty}\alpha_{n,r},~a\leq r \leq b,~ ~\eta_{l}^{ser}=\displaystyle\sum_{n=0}^{\infty}\beta_{n,l},~1\leq l \leq b~ $.
		
		\item[(iii)] \emph{\textbf{Expected waiting time in the queue and in the system:}} The expected waiting time of the customers in the queue as well as in the system are calculated as:
		\begin{eqnarray*}
			W_q=\frac{L_q}{\lambda\bar{g}}, ~~~~~~W_s^{FES}=\frac{L_{S}^{FES}}{\lambda\bar{g}}~~~~\mbox{and}~~~W_s^{SOS}=\frac{L_{S}^{SOS}}{\lambda\bar{g}}.
		\end{eqnarray*}	
		\item[(iv)] \emph{\textbf{Energy saving factor:}}
		From an application perspective, the energy saving factor (ESF) denotes the percentage of time the server stays in inactive or in vacation periods (single or multiple), and is mathematically written as:
		\begin{align*}
			ESF = \left\{\begin{array}{r@{\mskip\thickmuskip}l}
				&(\Phi_S+\phi_D)\times100\% ~~,\hspace{0.8cm}\mbox{for single vacation}\\
				& \varphi_M \times 100\%~~,\hspace{1.8cm} \mbox{for multiple vacations},\vspace{0.1cm}\\
			\end{array}\right.
		\end{align*}
		where $\Phi_S=\sum_{k=0}^{a-1}\sum_{n=0}^{\infty}\gamma_{n}^{[k]},~\varphi_M=\sum_{k=0}^{a-1}\sum_{n=0}^{\infty}\gamma_n^{[k]},~\phi_D=\sum_{n=0}^{a-1}\vartheta_{n,0}$ represents the steady-state probabilities for single vacation, multiple vacations, and dormant state, respectively. \\
		\item[(v)] \emph{\textbf{Expected idle period:}}
			Let $\mathcal{I}_1$ and $\mathcal{I}_2$
		be the dormant periods, and $\mathcal{V}_{1}$ and $\mathcal{V}_{2}$
		the vacation periods, triggered after FES and SOS, respectively.\\
		(a) For single vacation: The expected idle period for the single vaction is calculated by E[$\mathcal{I}$]=E[$\mathcal{I}_{1}$] + 	E[$\mathcal{I}_{2}$] + E[$\mathcal{V}_{1}$] + E[$\mathcal{V}_{2}$], where E[$\mathcal{I}_{1}$], E[$\mathcal{I}_{2}$], E[$\mathcal{V}_{1}$] and E[$\mathcal{V}_{2}$]  are defined as \\
		\begin{small}
			\begin{eqnarray}
				E[\mathcal{I}_{1}] &=& \dfrac{\sum_{n=0}^{a-1} \sum_{r=a}^{b} \alpha^{+}_{n,r} \chi_{r,0}\bigg[\sum_{i=0}^{a-1-n}h_{i}^{(n)}(a-n-i)\bigg]}{\lambda \bar{g} \sum_{n=0}^{a-1}\sum_{r=a}^{b} \alpha^{+}_{n,r}} \nn\\
				E[\mathcal{I}_{2}] &=& \dfrac{\sum_{n=0}^{a-1} \sum_{r=a}^{b} \sum_{j=1}^{r} \alpha^{+}_{n,r} \chi_{r,j}\bigg[ \sum_{k=0}^{a-1-n} \sum_{i=0}^{a-1-n-k} t_{k}^{j}h_{i}^{(n)}(a-n-k-i)\bigg]}{\lambda\bar{g} \sum_{n=0}^{a-1}\sum_{r=a}^{b}\alpha^{+}_{n,r}} \nn \\
				E[\mathcal{V}_{1}] &=&\dfrac{\sum_{n=0}^{a-1}\sum_{r=a}^{b}\alpha^{+}_{n,r} \chi_{r,0}\bar{V}^{[n]}}{\sum_{n=0}^{a-1}\sum_{r=a}^{b}\alpha^{+}_{n,r}} \nn\\
				E[\mathcal{V}_{2}] &=& \dfrac{\sum_{n=0}^{a-1}\sum_{r=a}^{b} \sum_{j=1}^{r}\alpha^{+}_{n,r} \chi_{r,j}\bigg[\sum_{i=0}^{a-1-n}t^{j}_{i}\bar{V}^{[n]}\bigg]}{\sum_{n=0}^{a-1}\sum_{r=a}^{b}\alpha^{+}_{n,r}} \nn
			\end{eqnarray}
		\end{small}
		(b) For multiple vacations: The expected idle period for the multiple vacation is calculated by E[$\mathcal{I}$]= E[$\mathcal{V}_{1}$] + E[$\mathcal{V}_{2}$], where E[$\mathcal{V}_{1}$] and E[$\mathcal{V}_{2}$] are defined as 
			\begin{eqnarray}
			E[\mathcal{V}_{1}]&=& \dfrac{\dfrac{\sum_{n=0}^{a-1}\sum_{r=a}^{b}\alpha^{+}_{n,r} \chi_{r,0}\bar{V}^{[n]}}{\sum_{n=0}^{a-1}\sum_{r=a}^{b}\alpha^{+}_{n,r}}}{1 - \sum_{n=0}^{a-1} \sum_{i=0}^{n}\sum_{j=a}^{b}\alpha^{+}_{i,j} \chi_{j,0} h_{n-i}^{(n)}} \nn\eea
			\bea
			E[\mathcal{V}_{2}] &=& \dfrac{ \dfrac{\sum_{n=0}^{a-1}\sum_{r=a}^{b} \sum_{j=1}^{r}\alpha^{+}_{n,r} \chi_{r,j}\bigg[\sum_{i=0}^{a-1-n}t^{j}_{i}\bar{V}^{[n]}\bigg]}{\sum_{n=0}^{a-1}\sum_{r=a}^{b}\alpha^{+}_{n,r}}}{1 - \sum_{n=0}^{a-1} \sum_{j=a}^{b} \sum_{i=1}^{j}\alpha^{+}_{n,j} \chi_{j,i}\bigg[\sum_{k=0}^{a-1-n} \sum_{r=0}^{a-1-n-k}t^{i}_{k}h_{r}^{(n)}\bigg]} \nn
		\end{eqnarray}
		\item[(vi)] \emph{\textbf{Utility of the system:}} Using the concept of alternating renewal process, we can easily calculate the system utility for both FES and SOS and it is given by
		\begin{eqnarray}
		U[\mathcal{B}_{1}] &=& E[\mathcal{I}] \biggl(\dfrac{1- P_{idle}- P^{SOS}_{B}}{P_{idle}}\biggr) \nn\\
		U[\mathcal{B}_{2}] &=& E[\mathcal{I}] \biggl(\dfrac{1- P_{idle}- P^{FES}_{B}}{P_{idle}}\biggr) \nn
		\end{eqnarray}
		where $P_{idle}$  is calculated by $(1-\delta_p)\left[\sum_{n=0}^{a-1}\vartheta_{n,0}+\Phi_S\right]+\delta_p \varphi_M$, $\mathcal{B}_{1}$ and $\mathcal{B}_{2}$ are the time duration when the server is busy in providing the FES and SOS, respectively and $P_B^{FES}=\displaystyle\sum_{r=a}^{b}\Psi_{r}^{ser}$, $P_{B}^{SOS}=\displaystyle\sum_{l=1}^{b}\eta_{l}^{ser}$ designate the busy probability during FES and SOS, respectively. Also, U[$\mathcal{B}$]= U[$\mathcal{B}_{1}$] + U[$\mathcal{B}_{2}$] which is derived by Karan and Pradhan (2025) \cite{karan2025analysis}, recently.\\
		\item[(vii)]  \emph{\textbf{Expected cycle length:}} The length of the cycle is denoted as E[$\mathcal{C}$] and therefore E[$\mathcal{C}$] is the sum of the dormant period, vacation period and system utility for both FES and SOS in single vacation and the sum of the vacation period and system utility for both FES and SOS in multiple vacations.\\
		(a) For single vacation: E[$\mathcal{C}$]=U[$\mathcal{B}_{1}$] + U[$\mathcal{B}_{2}$] + E[$\mathcal{I}_{1}$] + E[$\mathcal{I}_{2}$] + E[$\mathcal{V}_{1}$] + E[$\mathcal{V}_{2}$].\\
		(b) For multiple vacation: E[$\mathcal{C}$]=U[$\mathcal{B}_{1}$] + U[$\mathcal{B}_{2}$] + E[$\mathcal{V}_{1}$] + E[$\mathcal{V}_{2}$].\\
		\item[(viii)] \emph{\textbf{Throughput of the system:}} The maximum throughput in an infinite-buffer queue is calculated by multiplying the effective service rate by the percentage of time the server is accessible to customers.
		\begin{eqnarray*}
			\mbox{Throughput of the system}=\left(P_B^{FES} \times \mu^{FES}\right)+\left(P_B^{SOS} \times \mu^{SOS}\right),
		\end{eqnarray*} 
		where, $\mu^{FES}=\displaystyle\sum_{r=a}^{b}r\mu_r\big/\displaystyle\sum_{r=a}^{b}r$ and $\mu^{SOS}=\displaystyle\sum_{y=1}^{b}y\mu_y^{O}\big/\displaystyle\sum_{y=1}^{b}y$, represent the average service rates during FES and SOS, respectively.
		\end{itemize}
			\section{Numerical illustrations}\lb{NI}
		The aim of this section is to demonstrate the numerical performance of the proposed model and its associated analytical results. All the computed results are displayed either in tabular or graphical form along with detailed discussions or interpretation. We illustrate the numerical example using a discrete phase type (DPH) service time distribution, owing to its ability to encompass a wide class of service distributions as special cases.
\subsection{Example with DPH service time distribution for both phases and negative binomial vacation time distribution}
In IoT cloud stations, batch service stations can be utilized to process a large set of workloads together in order to maximize the throughput. Most of the empirical studies related to wireless telecommunication systems have revealed that the general DPH distribution can be used more adequately to model the real life scenarios in a tractable  analytical frame.
Phase type distributions play a significant role in modeling complex stochastic systems, and are widely applied in areas such as insurance risk analysis for estimating ruin probabilities, as well as reliability and performance evaluation of technical systems. Over the past few decades, they have gained considerable prominence in science and engineering, primarily due to their analytical tractability and several desirable properties: (i) phase type distributions can approximate any non-negative distribution with arbitrary accuracy, making them a highly flexible and versatile class, (ii) they are closed under common operations such as addition, minimum, and maximum, facilitating exact, closed-form solutions in many analyses, (iii) in stochastic modeling, they can be explicitly represented through finite state Markov chains or processes. DPH distribution represents the distribution of the absorption time in a discrete time Markov chain consisting of $(m+1)$ states, where $m$ are transient and one is absorbing.
Keeping the above mentioned importance, in this example, we consider the service distribution to follow DPH distribution for both FES and SOS with parameters displayed in Tables [\ref{tbb1} - \ref{tbb2}]. Here, the FES time follows DPH distribution represented by $(\beta_{r}, \mathcal{T}_{r})$, where $\beta_{r}$ designates the initial probability distribution and it is represented in the form of a row vector with `$m$' components, and $\mathcal{T}_{r}$ is a square matrix of order `$m$' with all non-negative entries lies between $0$ to $1$, which one exhibited in Table [\ref{tbb1}] along with mean service time $(\bar{S_r})$. The SOS time is also considered as DPH and represented by $(\beta^{'}_y, \mathcal{T}^{'}_{y})$ and its value is given in Table [\ref{tbb2}]. Here, we consider the minimum threshold $a=3$, maximum capacity $b=8$, arrival rate $\lambda=0.50$ and the probability of an individual customer opt the SOS time, $`p$'$=0.50$. The vacation time is considered to follow a negative binomial distribution $(\mbox{i.e., NB}(\eta_k,r))$ where  $\eta_k=0.35$, $r=2$ and so that $H_{k}(z_1)=\bigg(\dfrac{2 \eta_{k} (\bar \lambda + \lambda z_{1})}{ 1 - (1 - 2 \eta_{k})(\bar\lambda + \lambda z_1)}\bigg)^2$, where $0 \le k \le a-1$. The joint distributions are shown in Tables [\ref{tb3} - \ref{tb6}].		 
\begin{table}[h!]
	\centering	
	\caption{DPH distribution for different batch-size for FES }\label{tbb1}
	\begin{tiny}
		\begin{tabular}{|c|c|c|c|c|c|c|c|}\hline
			$r$ & $\beta_r$ & $\mathcal{T}_r$ & $\bar{S}_{r}$&$r$&$\beta_r$&$\mathcal{T}_r$&$\bar{S}_{r}$\\\hline
			3 & (0.30,~0.30,~0.40)
			& $\left(\begin{array}{ccc}
				0.40 & 0.20 & 0.20 \\
				0.30 & 0.30 & 0.10 \\
				0.20 & 0.30 & 0.40\\
			\end{array}\right)$
			& 4.912162&6&(0.2, 0.5, 0.3)& $\left(\begin{array}{ccc}
				0.20 & 0.40 & 0.10 \\
				0.10 & 0.60 & 0.20 \\
				0.10 & 0.20 & 0.50 \\	\end{array}\right)$&6.138298\\
			&&&&&&&\\
			4 & (0.40, 0.20, 0.40)
			& $\left(\begin{array}{ccc}
				0.30 & 0.40 & 0.20 \\
				0.30 & 0.20 & 0.30\\
				0.20 & 0.30 & 0.20\\
			\end{array}\right)$
			& 5.083721&7&(0.25, 0.35, 0.40)&$\left(\begin{array}{ccc}
				0.40 & 0.30 & 0.20 \\
				0.20 & 0.40 & 0.20 \\
				0.20 & 0.40 & 0.30 \\
			\end{array}\right)$&7.245454\\
			&&&&&&&\\
			5 & (0.5, 0.1, 0.40)
			& $\left(\begin{array}{ccc}
				0.50 & 0.10 & 0.30 \\
				0.30 & 0.40 & 0.10 \\
				0.10 & 0.20 & 0.50\\
			\end{array}\right)$
			& 5.988636 &8&(0.40, 0.30, 0.30)&$\left(\begin{array}{ccc}
				0.30 & 0.20 & 0.40 \\
				0.30 & 0.40 & 0.20 \\
				0.30 & 0.50 & 0.10 \\
			\end{array}\right)$&10.000000\\\hline	\end{tabular}
	\end{tiny}
\end{table}	

\begin{table}[h!]
	\centering
	\caption{DPH distribution for different batch-size for SOS }\label{tbb2}
	\begin{tiny}	\begin{tabular}{|c|c|c|c|c|c|c|c|}\hline
			$y$ & $\beta^{'}_y$ & $\mathcal{T}^{'}_y$ & $\bar{S}^{O}_{y}$&$y$ & $\beta^{'}_y$ & $\mathcal{T}^{'}_y$ & $\bar{S}^{O}_{y}$\\\hline
			1 & (0.30  0.30  0.40)
			& $\left(\begin{array}{ccc}
				0.10 & 0.10 & 0.20 \\
				0.10 & 0.20 & 0.10 \\
				0.20 & 0.10 & 0.10\\
			\end{array}\right)$
			& 1.666667 &5& (0.40  0.40  0.20)
			&$\left(\begin{array}{ccc}
				0.10 & 0.20 & 0.20 \\
				0.30 & 0.30 & 0.20 \\
				0.20 & 0.20 & 0.10 \\
			\end{array}\right)$&2.615385\\
			&&&&&&&\\
			2 & (0.40 0.50  0.10)
			& $\left(\begin{array}{ccc}
				0.20 & 0.20 & 0.20 \\
				0.20 & 0.10 & 0.10\\
				0.10 & 0.20 & 0.10\\
			\end{array}\right)$
			& 1.915493 &6& (0.40 0.10  0.50)
			&$\left(\begin{array}{ccc}
				0.10 & 0.30 & 0.30 \\
				0.10 & 0.40 & 0.20 \\
				0.10 & 0.20 & 0.30 \\
			\end{array}\right)$&2.907216\\
			&&&&&&&\\
			3 & (0.30  0.40  0.30)
			& $\left(\begin{array}{ccc}
				0.10 & 0.20 & 0.30 \\
				0.20 & 0.20 & 0.10 \\
				0.20 & 0.20 & 0.10\\
			\end{array}\right)$
			& 2.132075 &7& (0.30 0.60  0.10)
			&$\left(\begin{array}{ccc}
				0.30 & 0.40 & 0.20 \\
				0.20 & 0.20 & 0.20 \\
				0.20 & 0.10 & 0.30 \\
			\end{array}\right)$&3.300000\\
			&&&&&&&\\
			4 & (0.20  0.50  0.30)
			& $\left(\begin{array}{ccc}
				0.30 & 0.20 & 0.20 \\
				0.20 & 0.30 & 0.10 \\
				0.10 & 0.20 & 0.20 \\
			\end{array}\right)$
			& 2.503106 &8& (0.40  0.20  0.40)
			&$\left(\begin{array}{ccc}
				0.10 & 0.20 & 0.30 \\
				0.40 & 0.30 & 0.10 \\
				0.20 & 0.20 & 0.50 \\
			\end{array}\right)$&4.534759\\\hline
		\end{tabular}
	\end{tiny}
\end{table}
\begin{table}[h!]
	\centering
	\begin{tiny}
		\caption{ Probability distribution at departure epoch for single  vacation}\lb{tb3}$\vspace{0.03cm}$
		\begin{tabular}{|c|c|c|c|c|c|c|c|c|c|c|} \hline
			$n$&$\alpha^{+}_{n,3}$&$\alpha^{+}_{n,4}$ & $\alpha^{+}_{n,5}$& $\alpha^{+}_{n,6}$& $\alpha^{+}_{n,7}$& $\alpha^{+}_{n,8}$& $\alpha^{+}_{n}$& $\beta^{+}_{n,1}$&$\beta^{+}_{n,2}$&$\beta^{+}_{n,3}$\\\hline
			0&0.029708&0.023698&0.008285&0.010191&0.007123&0.007407&0.086412&0.011629&0.014807&0.009920\\
			1&0.008681&0.008235&0.002856&0.003125&0.002163&0.007744&0.032804&0.005818&0.007899&0.005843\\
			2&0.021684&0.021018&0.007241&0.007858&0.005433&0.011883&0.075113&0.014906&0.020091&0.014432\\
			3&0.006908&0.008808&0.002806&0.002741&0.001868&0.010997&0.034128&0.007202&0.010480&0.008436\\
			4&0.009479&0.012664&0.003946&0.003819&0.002598&0.011474&0.043980&0.010111&0.014672&0.011573\\
			5&0.004238&0.007271&0.002027&0.001867&\textbf{0.001257}&0.010182&0.026842&0.005499&0.008539&0.007464\\
			10&0.001010&0.003583&0.000715&0.000572&0.000376&0.005294&0.011550&0.002075&0.003534&0.003483\\
			20&0.000031&0.000555&0.000053&0.000031&0.000019&0.000877&0.001566&0.000233&0.000433&0.000475\\
			50&0.000000&0.000002&0.000000&0.000000&0.000000&0.000003&0.000005&0.000000&0.000001&0.000002\\
			70&0.000000&0.000000&0.000000&0.000000&0.000000&0.000000&0.000000&0.000000&0.000000&0.000000\\
			$\geq100$&0.000000&0.000000&0.000000&0.000000&0.000000&0.000000&0.000000&0.0000000&0.000000&0.000000\\\hline
		\end{tabular}
	\end{tiny}
\end{table}	
\noindent Here, \textbf{0.001257} denotes the probability that immediately after a departure, there are $5$ customers in the system and $7$ number of customers will be served by FES, under single vacation policy at departure epoch. 
\newpage
\begin{table}[h!]
	\centering
	\begin{tiny}
		\begin{tabular}{|c|c|c|c|c|c|c|c|c|c|c|} \hline
			$n$&$\beta^{+}_{n,4}$&$\beta^{+}_{n,5}$ & $\beta^{+}_{n,6}$& $\beta^{+}_{n,7}$& $\beta^{+}_{n,8}$& $\beta^{+}_{n}$& $\gamma^{[0]_+}_{n}$&$\gamma^{[1]_+}_{n}$&$\gamma^{[2]_+}_{n}$&$\gamma^{+}_{n}$\\\hline
			0&0.004427&0.001886&0.000607&0.000119&0.000009&0.043404&0.018091&0.000000&0.000000&0.018091\\
			1&0.003105&0.001573&0.000597&0.000137&0.000013&0.024985&0.004897&0.014415&0.000000&0.019312\\
			2&0.007199&0.003399&0.001226&0.000272&0.000026&0.061551&0.012122&0.002851&0.042007&0.056980\\
			3&0.005021&0.002748&0.001119&0.000275&0.000028&0.035309&0.003346&0.006850&0.006547&0.016743\\
			4&0.006617&0.003473&0.001389&0.000339&0.000035&0.048209&0.004495&0.000935&0.015454&0.020884\\
			5&0.004924&0.002868&0.001239&0.000320&0.000035&0.030888&0.001748&0.001171&0.000838&0.003757\\
			10&0.002636&0.001663&0.000782&0.000218&0.000028&0.014419&0.000283&0.000011&0.000004&0.000298\\
			20&0.000402&0.000274&0.000139&0.000041&0.000006&0.004496&0.000003&0.000000&0.000000&0.000003\\
			50&0.000001&0.000000&0.000000&0.000000&0.000000&0.000004&0.000000&0.000000&0.000000&0.000000\\
			70&0.000000&0.000000&0.000000&0.000000&0.000000&0.000000&0.000000&0.000000&0.000000&0.000000\\
			$\geq100$&0.000000&0.000000&0.000000&0.000000&0.000000&0.000000&0.000000&0.000000&0.000000&0.000000\\\hline
		\end{tabular}	
	\end{tiny}
\end{table}
\begin{table}[h!]
	\centering
	\begin{tiny}
		\caption{ Probability distribution at arbitrary  epoch for single  vacation}\lb{tb4}$\vspace{0.03cm}$		
		\begin{tabular}{|c|c|c|c|c|c|c|c|c|c|c|} \hline
			$n$&$\vartheta_{n,0}$&$\alpha_{n,3}$ & $\alpha_{n,4}$& $\alpha_{n,5}$& $\alpha_{n,6}$& $\alpha_{n,7}$& $\alpha_{n,8}$& $\beta_{n,1}$&$\beta_{n,2}$&$\beta_{n,3}$\\\hline
			0&0.013126&0.046714&0.073691&0.019944&0.019211&0.130501&0.011381&0.006027&0.008835&0.006573\\
			1&0.017950&0.007716&0.016132&0.003911&0.003496&0.002345&0.010416&0.002240&0.003573&0.003028\\
			2&0.055916&0.019281&0.041174&0.009880&0.008795&0.005897&0.013441&0.005629&0.008866&0.007205\\
			3&&0.006174&0.017254&0.003666&0.003097&0.002056&0.011964&0.002337&0.004030&0.003758\\
			4&&0.008472&0.024809&0.005153&0.004314&0.002859&0.011955&0.003251&0.005559&0.005028\\
			5&&0.003788&0.014245&0.002641&0.002107&0.001385&0.010418&0.001731&0.003159&0.003175\\
			10&&0.000903&0.007020&0.000931&0.000646&0.000414&0.005158&0.000655&0.0001295&0.001443\\
			20&&0.000028&0.001086&0.000069&0.000035&0.000021&0.000840&0.000075&0.000161&0.000197\\
			50&&0.000000&0.000004&0.000000&0.000000&0.000000&0.000003&0.000000&0.000000&0.000000\\
			70&&0.000000&0.000000&0.000000&0.000000&0.000000&0.000006&0.000000&0.000000&0.000000\\
			$\geq100$&&0.000000&0.000000&0.000000&0.000000&0.000000&0.000000&0.0000000&0.000000&0.000000\\\hline
		\end{tabular}
	\end{tiny}
\end{table}	
\begin{table}[h!]
	\centering
	\begin{tiny}
		\begin{tabular}{|c|c|c|c|c|c|c|c|c|c|c|} \hline
			$n$& $\beta_{n,4}$&$\beta_{n,5}$ & $\beta_{n,6}$& $\beta_{n,7}$& $\beta_{n,8}$& $\gamma^{[0]}_{n}$&$\gamma^{[1]}_{n}$&$\gamma^{[2]}_{n}$&$\Psi_{n}^{queue}$& $\Psi_{n}^{sys}$\\\hline
			0&0.003418&0.001536&0.000545&0.000122&0.000014&0.022502&0.000000&0.000000&0.246688&0.062698\\
			1&0.001966&0.001083&0.000466&0.000125&0.000017&0.003198&0.008965&0.000000&0.086627&0.042611\\
			2&0.004305&0.002166&0.000880&0.000225&0.000030&0.007915&0.000621&0.017416&0.209642&0.111174\\
			3&0.002760&0.001632&0.000756&0.000215&0.000031&0.002185&0.001491&0.000475&0.063882&0.066385\\
			4&0.003503&0.001965&0.000892&0.000252&0.000037&0.002935&0.000203&0.001121&0.082311&0.106154\\
			5&0.002557&0.001590&0.000778&0.000232&0.000036&0.001142&0.000255&0.000061&0.049300&0.070072\\
			10&0.001306&0.000866&0.000458&0.000146&0.000026&0.000185&0.000002&0.000000&0.021455&0.046736\\
			20&0.000196&0.000140&0.000079&0.000027&0.000005&0.000002&0.000000&0.000000&0.002962&0.007492\\
			50&0.000000&0.000000&0.000000&0.000000&0.000000&0.000000&0.000000&0.000000&0.000011&0.000026\\
			70&0.000000&0.000000&0.000000&0.000000&0.000000&0.000000&0.000000&0.000000&0.000000&0.000000\\
			$\geq100$&0.000000&0.000000&0.000000&0.000000&0.000000&0.000000&0.000000&0.000000&0.000000&0.000000\\\hline
		\end{tabular}	
	\end{tiny}
\end{table}
\begin{table}[h!]
	\centering
	\begin{tiny}
		\caption{ Probability distribution at departure epoch for multiple vacations}\lb{tb5}$\vspace{0.03cm}$
		\begin{tabular}{|c|c|c|c|c|c|c|c|c|c|c|} \hline
			$n$&$\alpha^{+}_{n,3}$&$\alpha^{+}_{n,4}$ & $\alpha^{+}_{n,5}$& $\alpha^{+}_{n,6}$& $\alpha^{+}_{n,7}$& $\alpha^{+}_{n,8}$& $\alpha^{+}_{n}$& $\beta^{+}_{n,1}$&$\beta^{+}_{n,2}$&$\beta^{+}_{n,3}$\\\hline
			0&0.021629&0.017939&0.007577&0.009022&0.005888&0.006040&0.062659&0.008829&0.011494&0.008000\\
			1&0.006320&0.006233&0.002612&0.002767&0.001788&0.006258&0.025978&0.004429&0.006149&0.004713\\
			2&0.015787&0.015909&0.006622&0.006956&0.004491&0.009604&0.059369&0.011343&0.015630&0.011639\\
			3&0.005029&0.006667&0.002566&0.002426&0.001544&0.008834&0.027066&0.005499&0.008177&0.006797\\
			4&0.006901&0.009586&0.003608&0.003381&0.002148&0.009203&0.034827&0.007719&0.011442&0.009322\\
			5&0.003085&0.005504&0.001853&0.001653&0.001039&0.008136&0.021270&0.004211&0.006674&0.006000\\
			10&0.000735&0.002713&0.000653&0.000507&0.000311&0.004185&0.009140&0.001596&0.002761&0.002776\\
			20&0.000023&0.000419&0.000048&0.000027&0.000016&0.000682&0.001215&0.000179&0.000335&0.000371\\
			50&0.000000&0.000002&0.000000&0.000000&0.000000&0.000002&0.000004&0.000000&0.000001&0.000001\\
			70&0.000000&0.000000&0.000000&0.000000&0.000000&0.000000&0.000002&0.000000&0.000000&0.000000\\
			$\geq100$&0.000000&0.000000&0.000000&0.000000&0.000000&0.000000&0.000000&0.0000000&0.000000&0.000000\\\hline
		\end{tabular}
	\end{tiny}
\end{table}	
\begin{table}[h!]
	\centering
	\begin{tiny}
		\begin{tabular}{|c|c|c|c|c|c|c|c|c|c|c|} \hline
			$n$& $\beta^{+}_{n,4}$&$\beta^{+}_{n,5}$ & $\beta^{+}_{n,6}$& $\beta^{+}_{n,7}$& $\beta^{+}_{n,8}$& $\beta^{+}_{n}$& $\gamma^{[0]_+}_{n}$&$\gamma^{[1]_+}_{n}$&$\gamma^{[2]_+}_{n}$&$\gamma^{+}_{n}$\\\hline
			0&0.003721&0.001587&0.000502&0.000098&0.000008&0.034230&0.022466&0.000000&0.000000&0.022466\\
			1&0.002584&0.001304&0.000487&0.000111&0.000011&0.019788&0.006081&0.031747&0.000000&0.037828\\
			2&0.006012&0.002832&0.001004&0.000221&0.000021&0.048702&0.015053&0.006279&0.128430&0.149762\\
			3&0.004149&0.002259&0.000908&0.000221&0.000023&0.028033&0.004155&0.015087&0.020015&0.039257\\
			4&0.005477&0.002861&0.001128&0.000273&0.000028&0.038250&0.005582&0.002059&0.047248&0.054892\\
			5&0.004038&0.002339&0.001000&0.000257&0.000028&0.024547&0.002171&0.002579&0.002562&0.007312\\
			10&0.002127&0.001335&0.000623&0.000173&0.000022&0.011413&0.000352&0.000024&0.000013&0.000389\\
			20&0.000316&0.000215&0.000108&0.000032&0.000004&0.001560&0.000004&0.000000&0.000000&0.000004\\
			50&0.000001&0.000000&0.000000&0.000000&0.000000&0.000003&0.000000&0.000000&0.000000&0.000000\\
			70&0.000000&0.000000&0.000000&0.000000&0.000000&0.000000&0.000000&0.000000&0.000000&0.000000\\
			$\geq100$&0.000000&0.000000&0.000000&0.000000&0.000000&0.000000&0.000000&0.000000&0.000000&0.000000\\\hline
		\end{tabular}	
	\end{tiny}
\end{table}
\newpage
\begin{table}[h!]
	\centering
	\begin{tiny}
		\caption{ Probability distribution at arbitrary  epoch for multiple  vacation}\lb{tb6}$\vspace{0.03cm}$		
		\begin{tabular}{|c|c|c|c|c|c|c|c|c|c|c|} \hline
			$n$&$\alpha_{n,3}$&$\alpha_{n,4}$ & $\alpha_{n,5}$& $\alpha_{n,6}$& $\alpha_{n,7}$& $\alpha_{n,8}$& $\beta_{n,1}$&$\beta_{n,2}$&$\beta_{n,3}$& $\beta_{n,4}$\\\hline
			0&0.042530&0.069756&0.022808&0.021266&0.013490&0.011606&0.005722&0.008577&0.006629&0.003593\\
			1&0.007025&0.015271&0.004472&0.003870&0.002424&0.010512&0.002134&0.003481&0.003054&0.002041\\
			2&0.017554&0.038975&0.011298&0.009736&0.006096&0.013553&0.005360&0.008630&0.007266&0.004490\\
			3&0.005621&0.016333&0.004192&0.003428&0.002125&0.011999&0.002234&0.003936&0.003785&0.002844\\
			4&0.007713&0.023484&0.005893&0.004776&0.002956&0.011971&0.003107&0.005425&0.005062&0.003617\\
			5&0.003449&0.013484&0.003021&\textbf{0.002333}&0.001431&0.010394&0.001659&0.003089&0.003189&0.002616\\
			10&0.000822&0.006645&0.001065&0.000715&0.000428&0.005091&0.000630&0.001265&0.001436&0.001315\\
			20&0.000025&0.001028&0.000079&0.000039&0.000022&0.000816&0.000072&0.000155&0.000192&0.000193\\
			50&0.000000&0.000004&0.000000&0.000000&0.000000&0.000003&0.000000&0.000000&0.000000&0.000000\\
			70&0.000000&0.000000&0.000000&0.000000&0.000000&0.000000&0.000000&0.000000&0.000000&0.000000\\
			$\geq100$&0.000000&0.000000&0.000000&0.000000&0.000000&0.000000&0.0000000&0.000000&0.000000&0.000000\\\hline
		\end{tabular}
	\end{tiny}
\end{table}	
\noindent Here, \textbf{0.002333} denotes the probability that immediately after a departure, there are $5$ customers in the system and $6$ number of customers will be served by FES, under single vacation policy at arbitrary epoch.
\begin{table}[h!]
	\centering
	\begin{tiny}
		\begin{tabular}{|c|c|c|c|c|c|c|c|c|c|} \hline
			$n$&$\beta_{n,5}$ & $\beta_{n,6}$& $\beta_{n,7}$& $\beta_{n,8}$&  $\gamma^{[0]}_{n}$&$\gamma^{[1]}_{n}$&$\gamma^{[2]}_{n}$&$\Psi_{n}^{queue}$&$\Psi_{n}^{sys}$\\\hline
			0&0.001617&0.000564&0.000124&0.000014&0.034944&0.000000&0.000000&0.243240&0.061784\\
			1&0.001120&0.000476&0.000126&0.000017&0.004966&0.024690&0.000000&0.085680&0.042105\\
			2&0.002253&0.000901&0.000229&0.000030&0.012292&0.001709&0.066588&0.206961&0.109748\\
			3&0.001673&0.000766&0.000217&0.000031&0.003393&0.004107&0.001816&0.068500&0.067333\\
			4&0.002019&0.000905&0.000253&0.000037&0.004558&0.000560&0.004287&0.086625&0.106613\\
			5&0.001617&0.000783&0.000232&0.000036&0.001773&0.000702&0.000232&0.050042&0.071562\\
			10&0.000868&0.000456&0.000145&0.000026&0.000287&0.000006&0.000001&0.021203&0.046780\\
			20&0.000137&0.000077&0.000026&0.000005&0.000004&0.000000&0.000000&0.002870&0.007343\\
			50&0.000000&0.000000&0.000000&0.000000&0.000003&0.000000&0.000000&0.000010&0.000025\\
			70&0.000000&0.000000&0.000000&0.000000&0.000000&0.000000&0.000000&0.000000&0.000000\\
			$\geq100$&0.000000&0.000000&0.000000&0.000000&0.000000&0.000000&0.000000&0.000000&0.000000\\\hline
		\end{tabular}	
	\end{tiny}
\end{table}
\section{Sensitivity analysis}
This section is dedicated to showcase the impact of different key system parameters, including arrival rate, FES and SOS rate, vacation rate on significant performance metrics such as system throughput, energy saving factor, expected length of queue, system utility, expected cycle length and expected idle period. We present a graphical depiction of the trade-offs among these performance indices. Every parameters have been meticulously changed to evaluate its impact, ensuring the system remains in a steady state condition throughout the experiment. This method enables a thorough analysis of parameter sensitivities and their individual contributions to the system's overall performance, so that decision on managerial implications can be taken in a user-friendly manner.
\subsection{Impact of service rates on the trade-off between system throughput and energy saving factor} \lb{AA}
\noindent \hspace*{0.3cm} In this case, the values of parameters such as the minimum threshold value $(a)$ and maximum threshold $(b)$ are taken as $4$ and $10$, respectively, and the arrival rate ($\lambda$) varies from $[0.15, 0.175, 0.20, 0.225, 0.25, 0.275, 0.30]$. The FES time is considered as follow deterministic distribution with rate of service of batch of size  $`r$' where $4 \leq r \leq 10$ and mean service times are $\frac{1}{\mu_r}=(r+1)$ units i.e.,  $K^{r}(z_1)=(\bar{\lambda}+\lambda G(z_1))^{r+1}$, respectively where $G(z_1)=(0.4z_1+0.3z_1^2+0.3z_1^3) $. The vacation time follows geometric distribution with rate $\eta_{k}=(k+1)0.25$ where $k=0$ to $a-1$, and also the SOS time is considered as negative binomial distribution with rate of service of a batch of size $`i$' is $\mu_i^O=\frac{\mu}{i+2}$, where $\mu$ varies from $0.95(0.05)1.45$ and $~1\le i \le 10.$ The trade-off between system throughput and energy saving factor is displayed in figures [\ref{fig:subfig2A} - \ref{fig:subfig2F}] for various service rates chosen in such a way that the system stability is preserved throughout the experiment. \\
From Figures [\ref{fig:subfig2A} - \ref{fig:subfig2F}], we observe the following:
\begin{itemize}
	\item Under both single and multiple vacation rules, the system throughput increases while the energy saving factor decreases as the value of $\mu$ increases. This phenomenon occurs because a higher service rate reduces the mean service time, which enables the server to finish the tasks faster and be accessible to new clients sooner. This reduces the amount of time that energy saving modes can be maintained because the server stays active for a greater percentage of the time. At the same time, more clients may be served in a given amount of time due to the faster service rates, which improves system throughput. Moreover, from figures [\ref{fig:subfig2A}–\ref{fig:subfig2F}], we conclude the optimal value of $\mu$ (approximately) such that it optimized the throughput performance and energy efficiency across various arrival rates for both single and multiple vacation policy.
	\item The system becomes overcrowded as the arrival rate increases because more customers arrive in a shorter amount of time. Because of this, the system will interact with users for a longer period of time, which will increase system throughput and decrease the energy saving factor.
\end{itemize}
\begin{figure}[ht!]
	\centering
	\begin{minipage}{0.35\textwidth}
		\hspace*{-0.2cm}\includegraphics[width=1.45\textwidth]{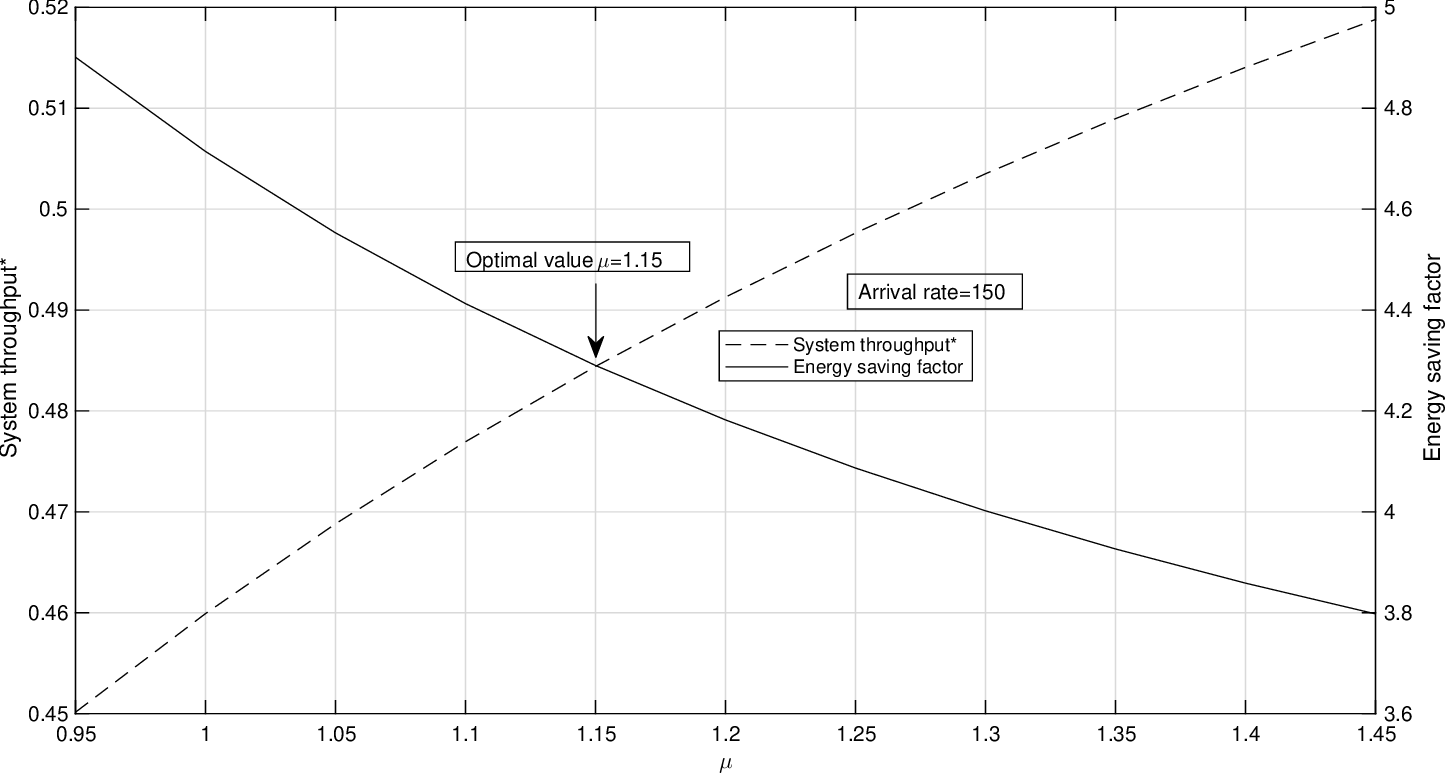}
		\subcaption{For single vacation}
		\label{fig:subfig2A}
	\end{minipage}	
	\hfill
	\begin{minipage}{0.35\textwidth}
		\hspace*{-2.0cm}\includegraphics[width=1.45\textwidth]{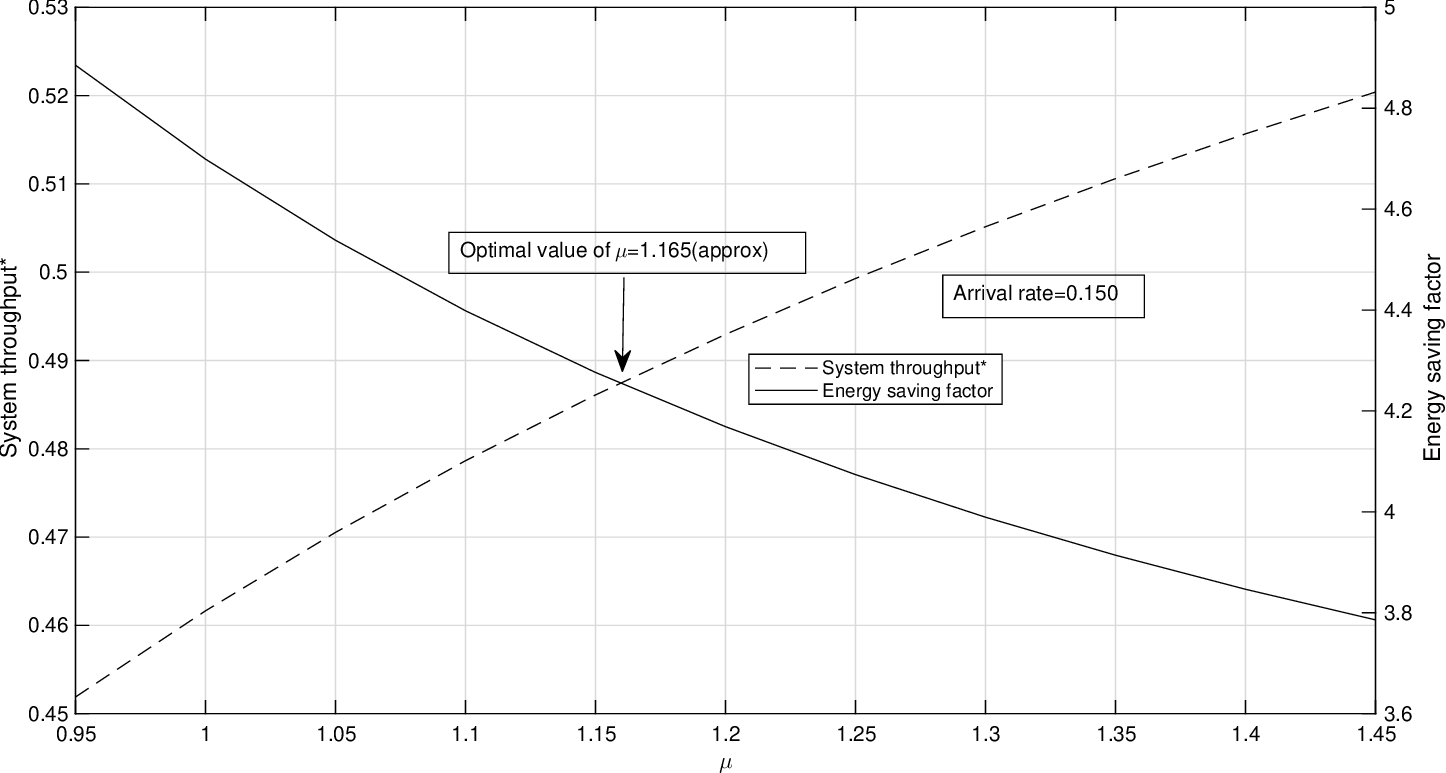}
		\subcaption{For multiple vacations}
		\label{fig:subfig2B}
	\end{minipage}
	\begin{minipage}{0.35\textwidth}
		\hspace*{-0.2cm}\includegraphics[width=1.45\textwidth]{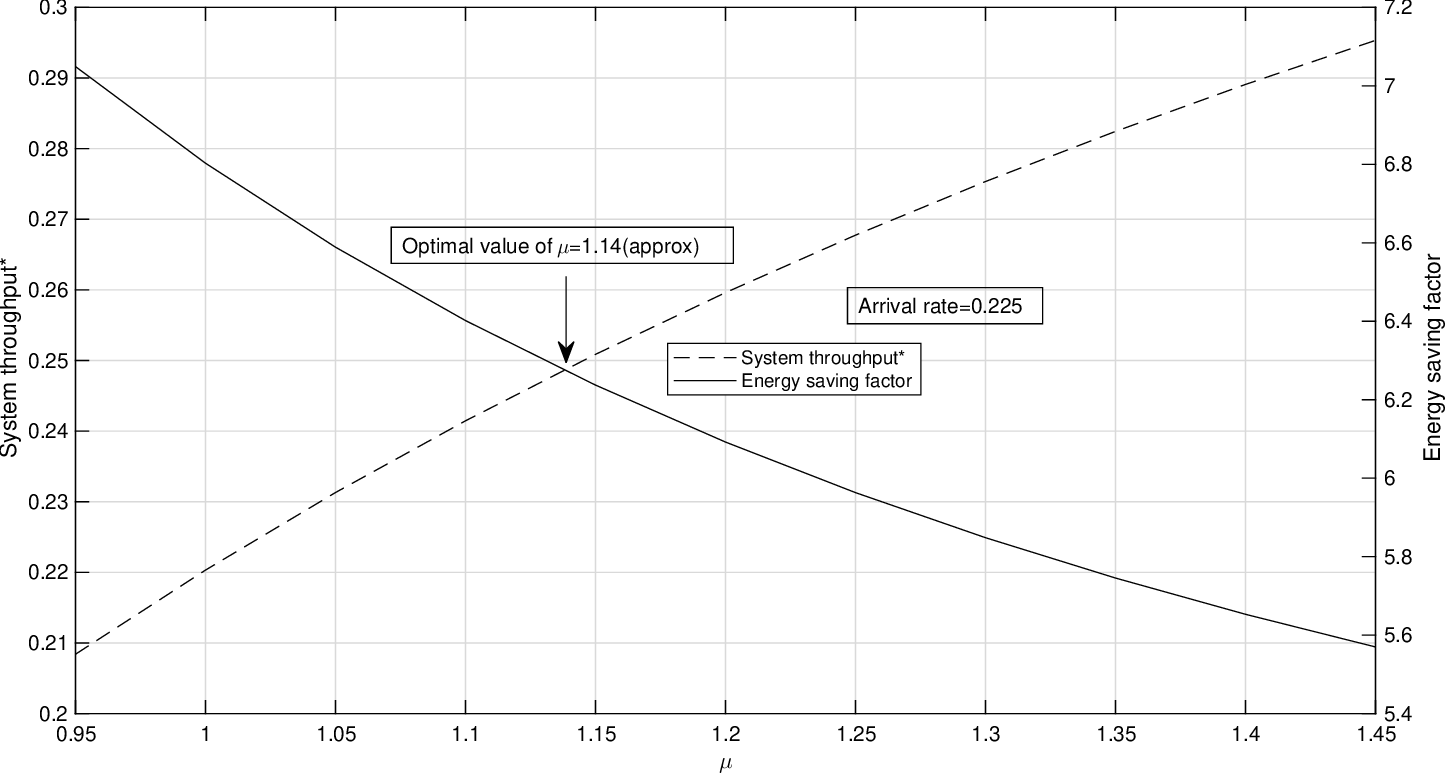}
		\subcaption{For single vacation}
		\label{fig:subfig2C}
	\end{minipage}	
	\hfill
	\begin{minipage}{0.35\textwidth}
		\hspace*{-2.0cm}\includegraphics[width=1.45\textwidth]{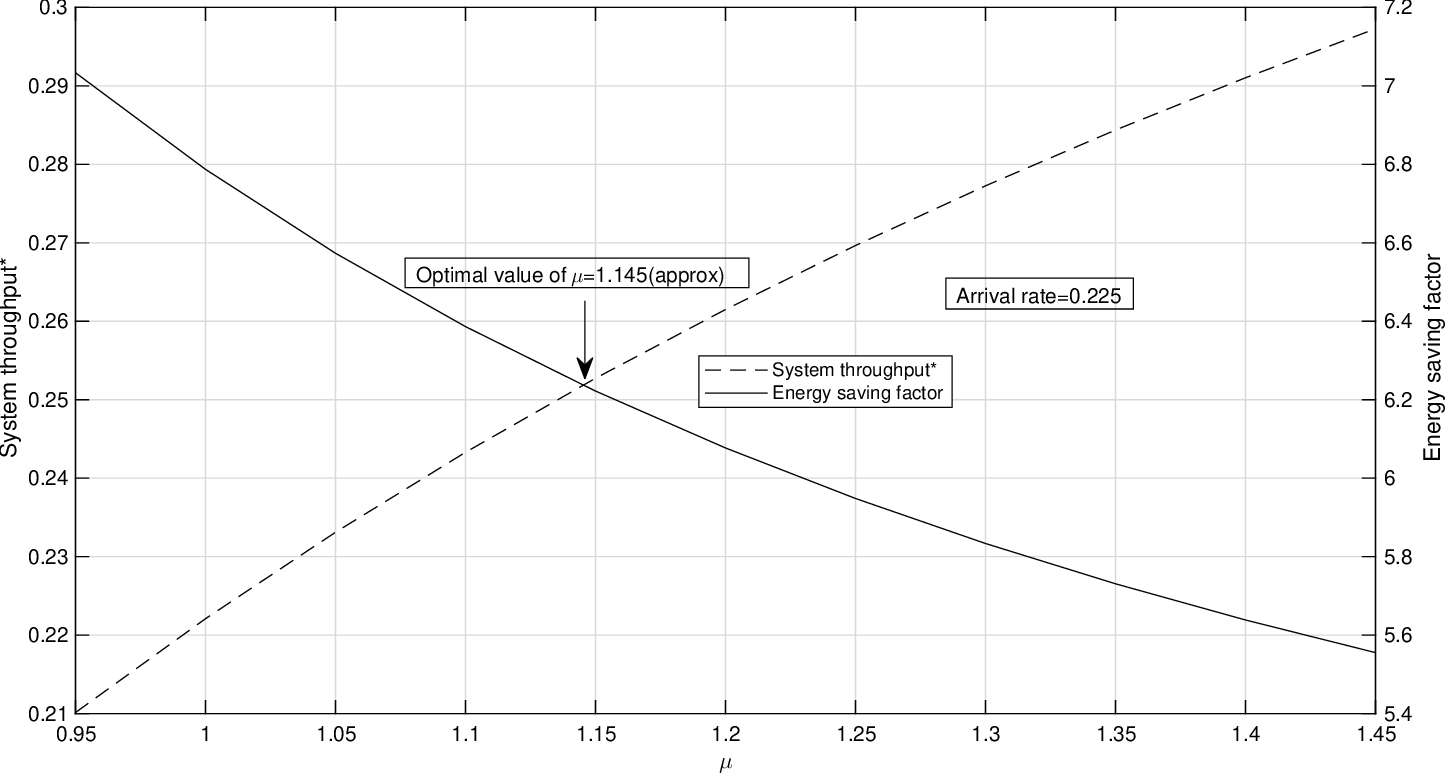}
		\subcaption{For multiple vacations}
		\label{fig:subfig2D}
	\end{minipage}
	\begin{minipage}{0.35\textwidth}
		\hspace*{-0.2cm}\includegraphics[width=1.45\textwidth]{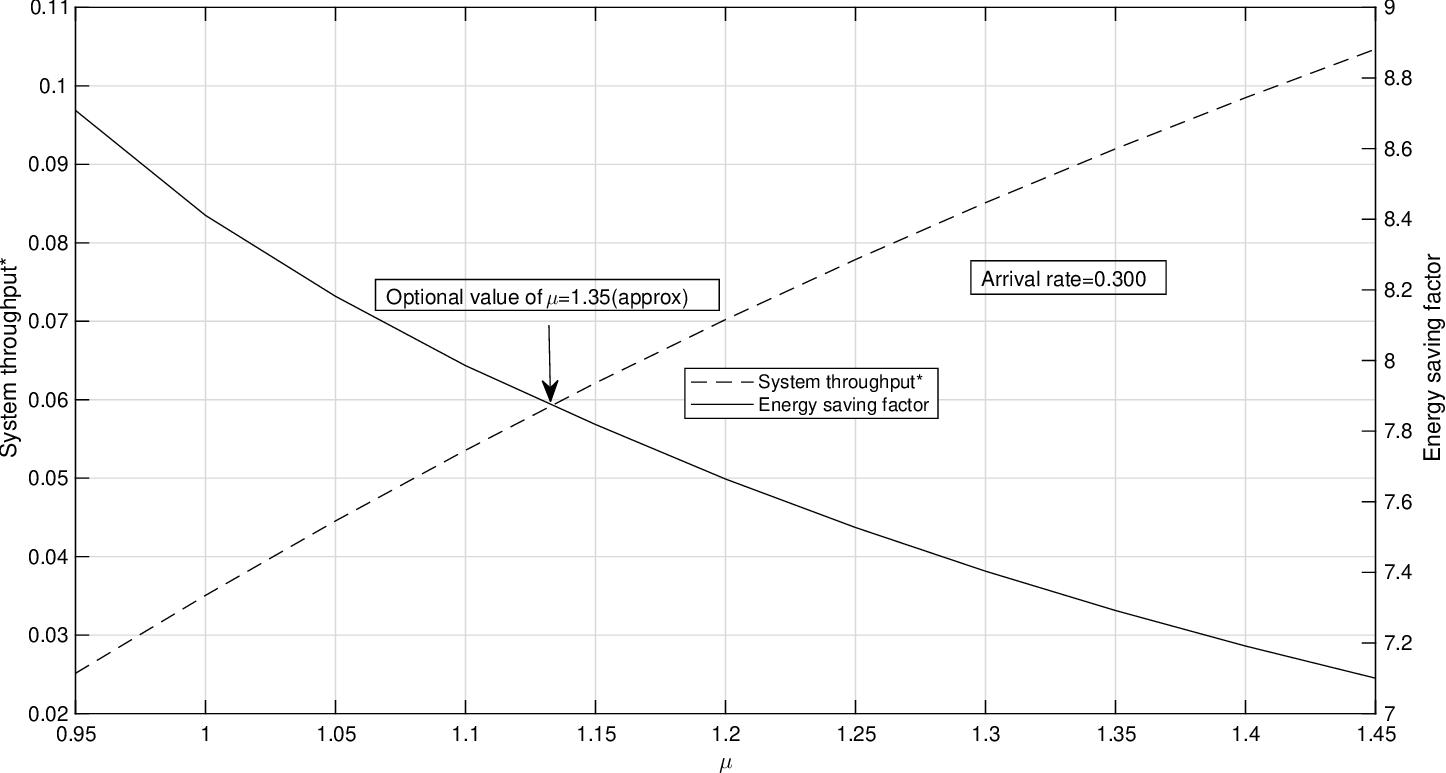}
		\subcaption{For single vacation}
		\label{fig:subfig2E}
	\end{minipage}	
	\hfill
	\begin{minipage}{0.35\textwidth}
		\hspace*{-2.0cm}\includegraphics[width=1.45\textwidth]{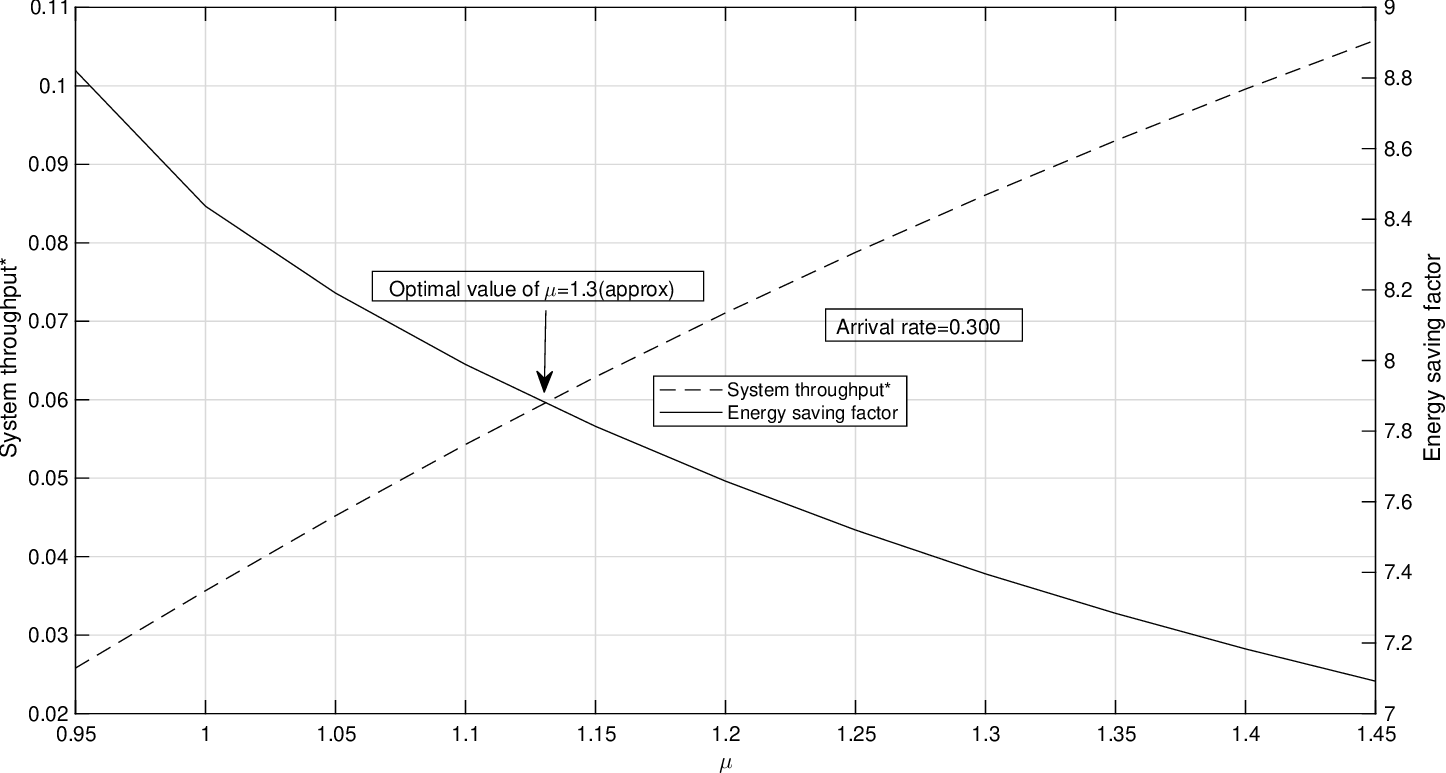}
		\subcaption{For multiple vacations}
		\label{fig:subfig2F}
	\end{minipage}
	\caption{Impact of service rates on the trade-off between system throughput and energy saving factor.}
	\label{fig:2}
\end{figure}
\subsection{Impact of vacation rates on the expected queue length and system utility for different FES time distribution}
The effect of the vacation rates $\eta_{k}$  varies from $0.025(0.025)0.225$ on the expected queue length is examined for different FES taken as deterministic(Det), geometric(Geo), negative binomial (NB) and general DPH distribution while keeping the SOS time distribution fixed as geometric distribution with rate of service of a batch of size $`i$' is $\mu^{O}_{i}=\dfrac{1.85}{i+2}$,  $1\leq i \leq 10$.\\
\begin{itemize}
	\item From figures [\ref{fig:subfig3A} - \ref{fig:subfig3B}], we observe that the expected queue length decreases consistently as the vacation rate increases, which is indicative of a reduction in the average vacation duration and subsequent enhancement of server's availability. The Det FES time distribution consistently produces the smallest expected queue length, while Geo, NB, and DPH FES time distribution having progressively larger queue length. From this, we conclude that if the service provider choose the service time as Det distribution then there is a possibility to reduce the length of queue.
	\item From figures [\ref{fig:subfig4A} - \ref{fig:subfig4B}], we observe that the system utility increases consistently as the vacation rate increases. DPH is taken as FES time distribution and it is consistently produces the smallest system utility and NB produces the highest system utility.
\end{itemize}
\begin{figure}[ht!]
	\centering
	\begin{minipage}{0.45\textwidth}
		\hspace*{0.0cm}\includegraphics[width=1.65\textwidth]{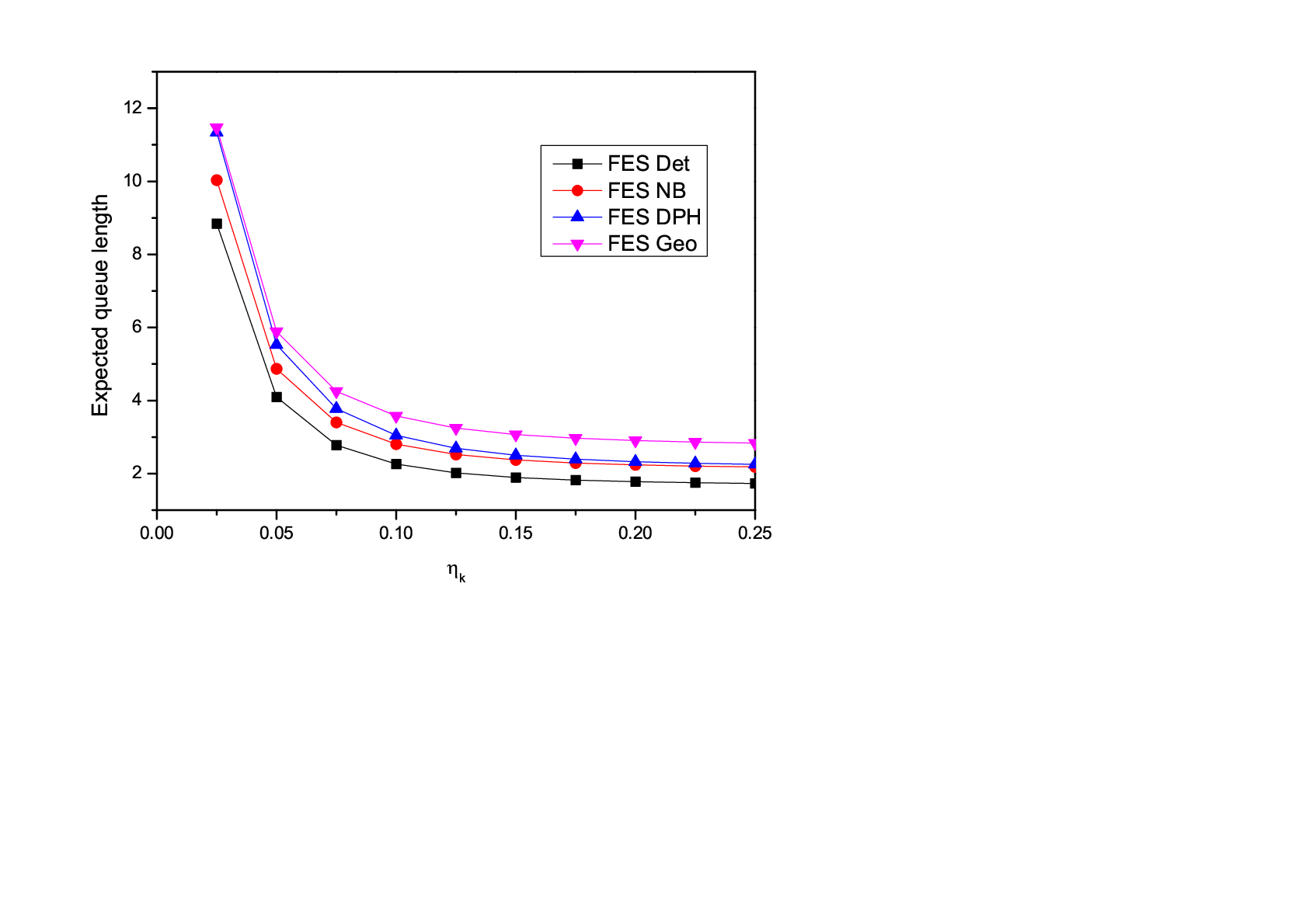}\vspace{-2.7cm}
		\subcaption{For single vacation}
		\label{fig:subfig3A}
	\end{minipage}	
	\hfill
	\begin{minipage}{0.45\textwidth}
		\hspace*{-0.5cm}\includegraphics[width=1.65\textwidth]{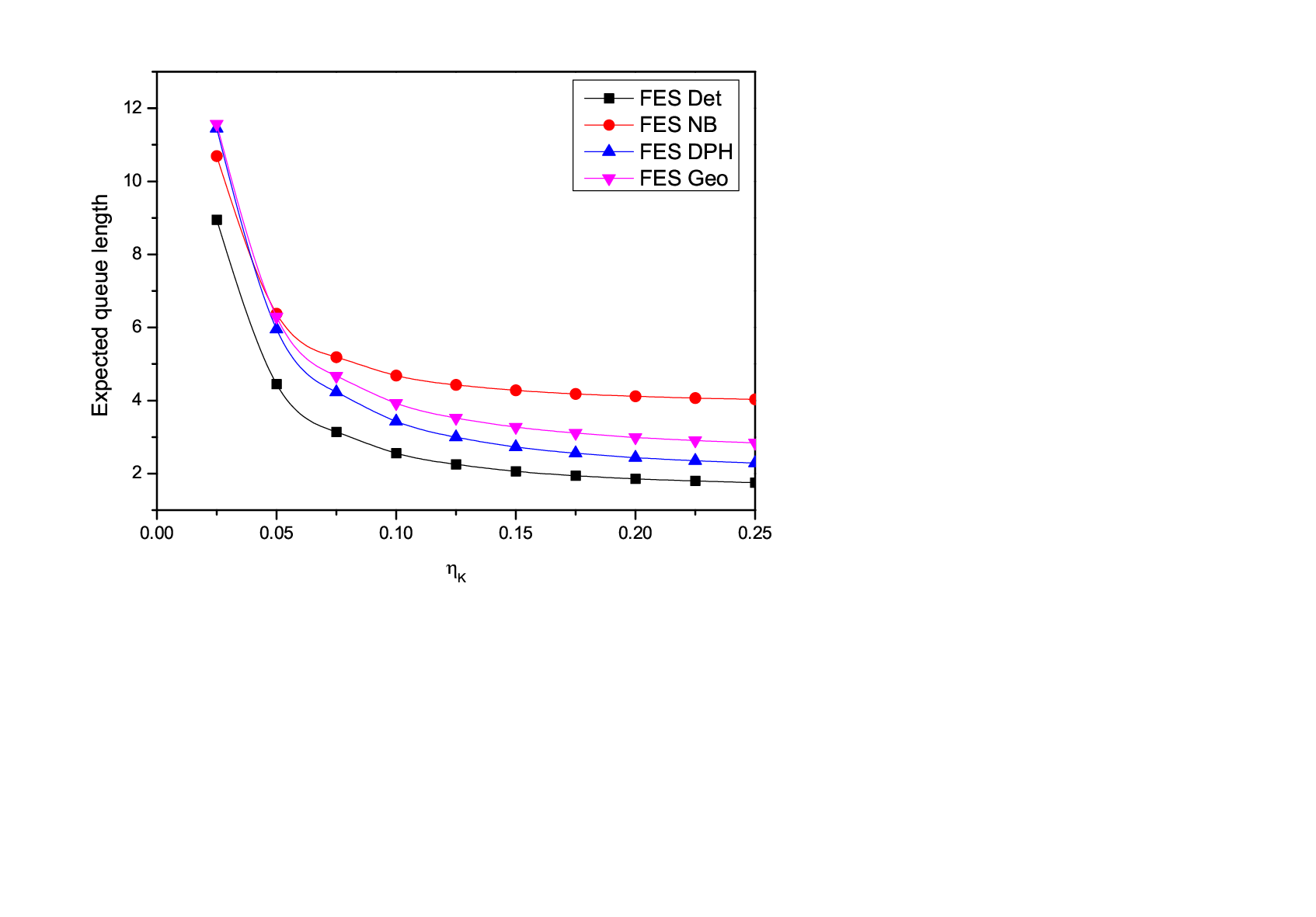}\vspace{-2.7cm}
		\subcaption{For multiple vacations~~~~~~~~~~~~~~}
		\label{fig:subfig3B}
	\end{minipage}
	\caption{Impact of vacation rates on the expected queue length and system utility for different FES time distribution.}
	\label{fig:3}
\end{figure} 
\begin{figure}[ht!]
	\centering
	\begin{minipage}{0.45\textwidth}
		\hspace*{0.0cm}\includegraphics[width=1.65\textwidth]{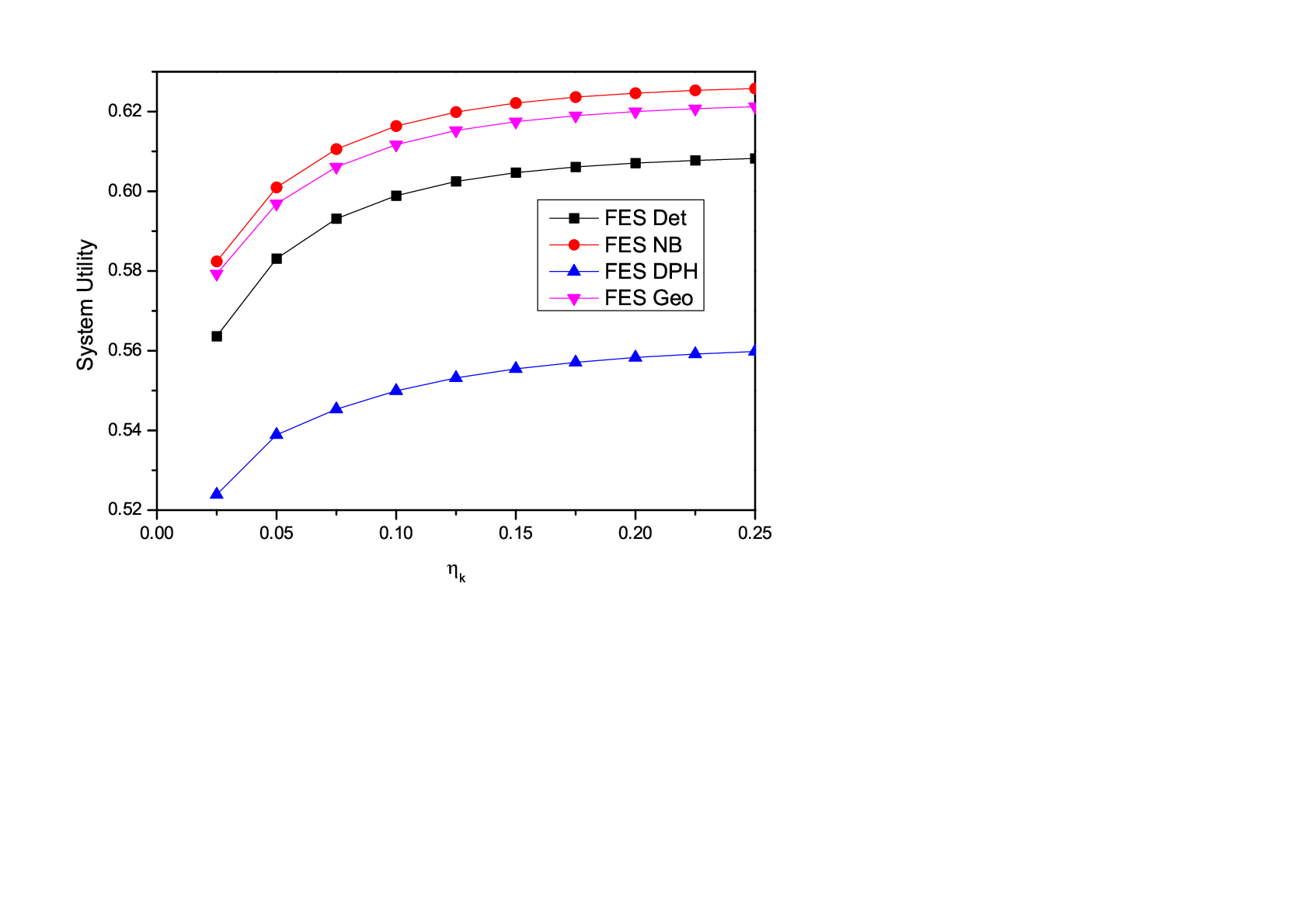}\vspace{-2.7cm}
		\subcaption{For single vacation}
		\label{fig:subfig4A}
	\end{minipage}	
	\hfill
	\begin{minipage}{0.45\textwidth}
		\hspace*{-0.5cm}\includegraphics[width=1.65\textwidth]{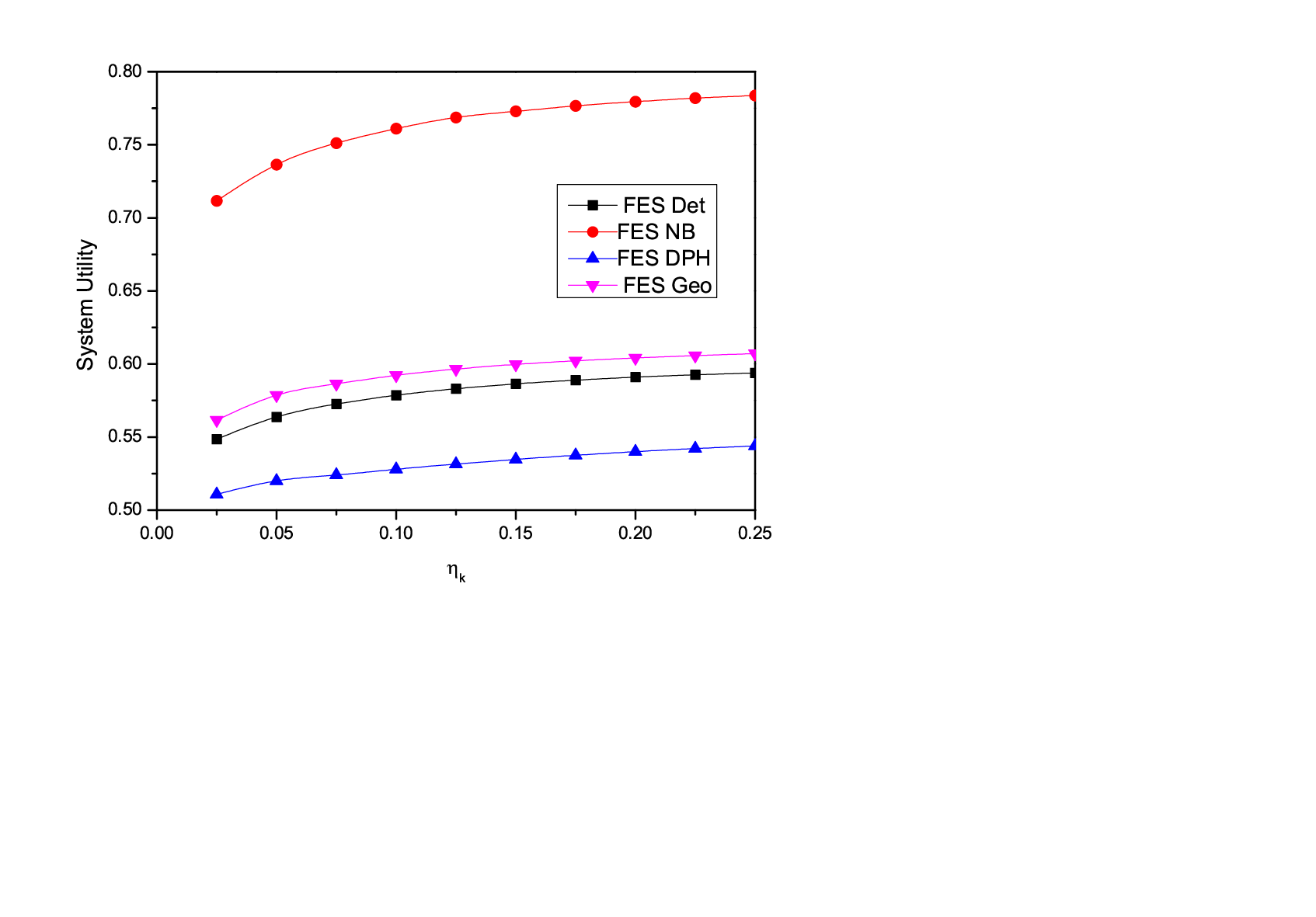}\vspace{-2.7cm}
		\subcaption{For multiple vacations}
		\label{fig:subfig4B}
	\end{minipage}
	\caption{Impact of vacation rates on the system utility for different FES time distribution.}
	\label{fig:4}
\end{figure}
\subsection{Impact of vacation rates on the expected queue length and system utility for different SOS time distribution}
The effect of the vacation rate $\eta_{k}$ from $0.025(0.025)0.225$ on the expected queue length is examined for different SOS time distribution taken as Det, Geo, NB and DPH distributions while keeping the FES time distribution fixed which follows  geometric distribution with the rate of service of a batch of size $`i$' is $\mu_{i}=\dfrac{0.95}{i+1}$,  $4\leq i \leq 10$.\\
\begin{itemize}
	\item From figures [\ref{fig:subfig5A} - \ref{fig:subfig5B}], we observe that the expected queue length decreases consistently as the vacation rate increases. The Geo and DPH distribution for SOS time distribution produces the smallest expected queue length in the case of single and multiple vacations, respectively.
	\item From figures [\ref{fig:subfig6A} - \ref{fig:subfig6B}], we observe that the system utility increases consistently as the vacation rate increases. The deterministic distribution taken as SOS consistently produces the highest system utility among these four different distribution as the vacation rate $\eta_{k}$ increases. 
\end{itemize}
\begin{figure}[ht!]
	\centering
	\begin{minipage}{0.45\textwidth}
		\hspace*{0.0cm}\includegraphics[width=1.60\textwidth]{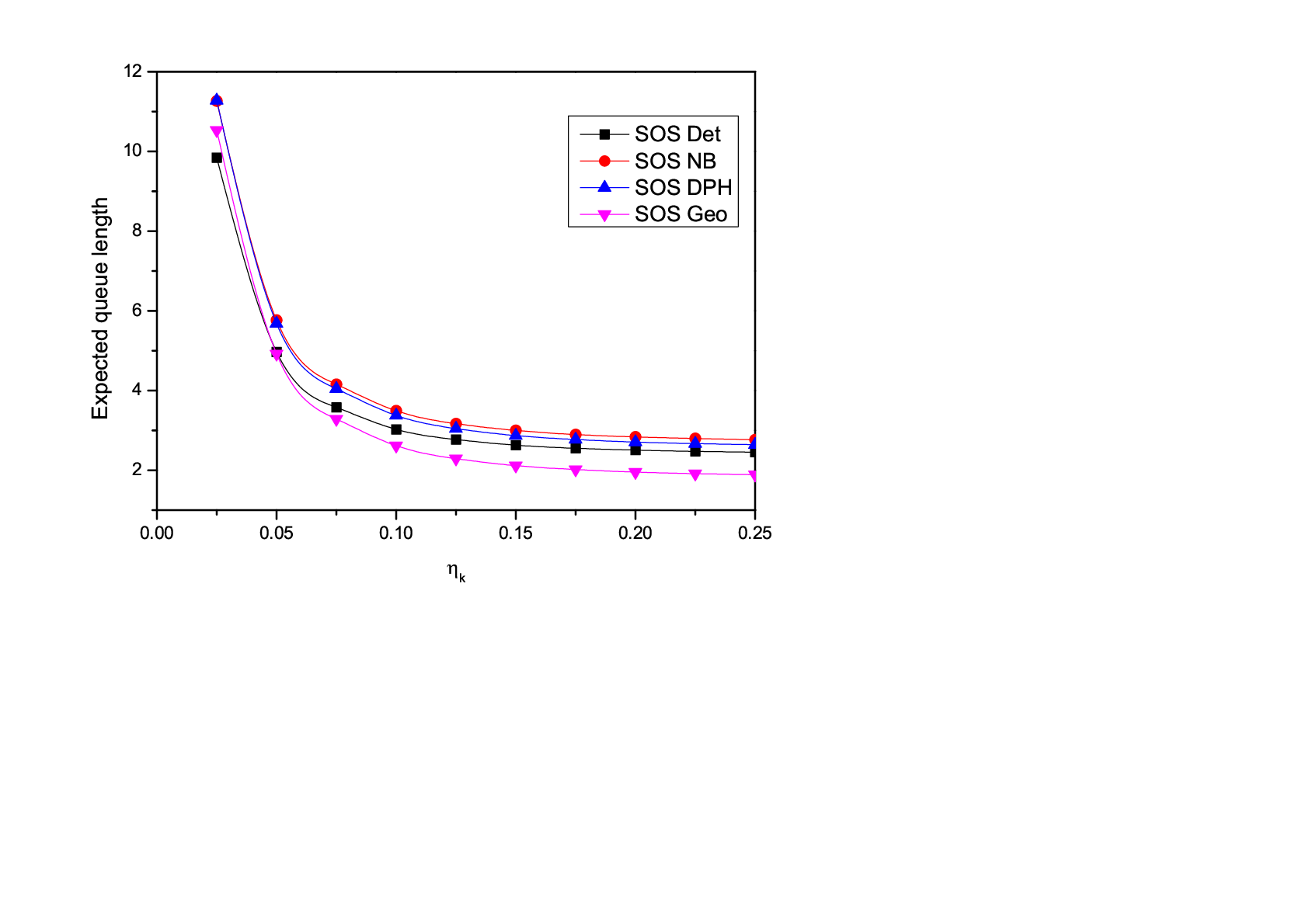}\vspace{-2.7cm}
		\subcaption{For single vacation}
		\label{fig:subfig5A}
	\end{minipage}	
	\hfill
	\begin{minipage}{0.45\textwidth}
		\hspace*{-0.5cm}\includegraphics[width=1.60\textwidth]{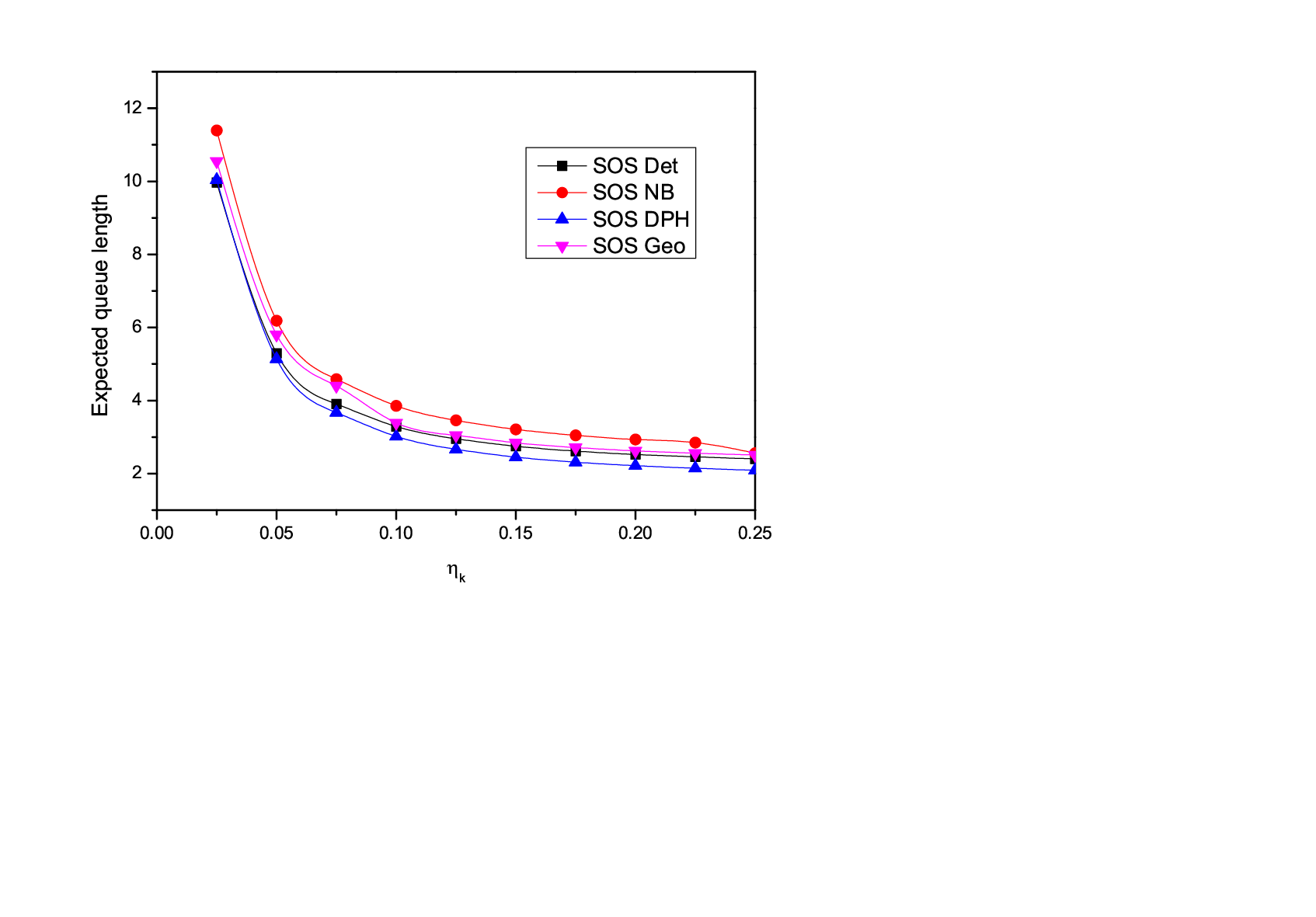}\vspace{-2.7cm}
		\subcaption{For multiple vacations}
		\label{fig:subfig5B}
	\end{minipage}
	\caption{Impact of vacation rate on the expected  length of queue for different SOS and with fixed FES.}
	\label{fig:5}
\end{figure}
\begin{figure}[h!]
	\centering
	\begin{minipage}{0.45\textwidth}
		\hspace*{0.0cm}\includegraphics[width=1.65\textwidth]{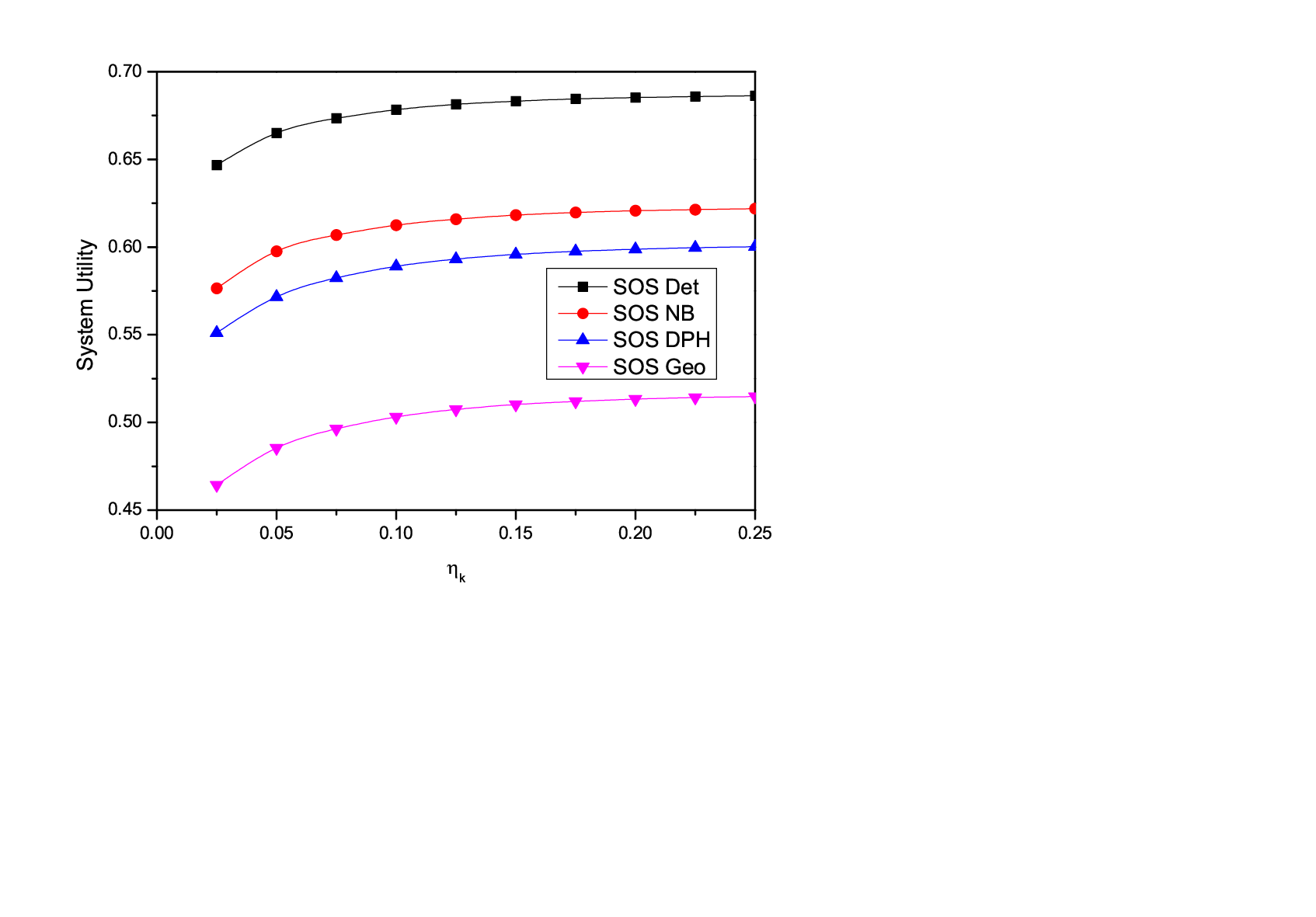}\vspace{-2.7cm}
		\subcaption{For single vacation}
		\label{fig:subfig6A}
	\end{minipage}	
	\hfill
	\begin{minipage}{0.45\textwidth}
		\hspace*{-0.5cm}\includegraphics[width=1.65\textwidth]{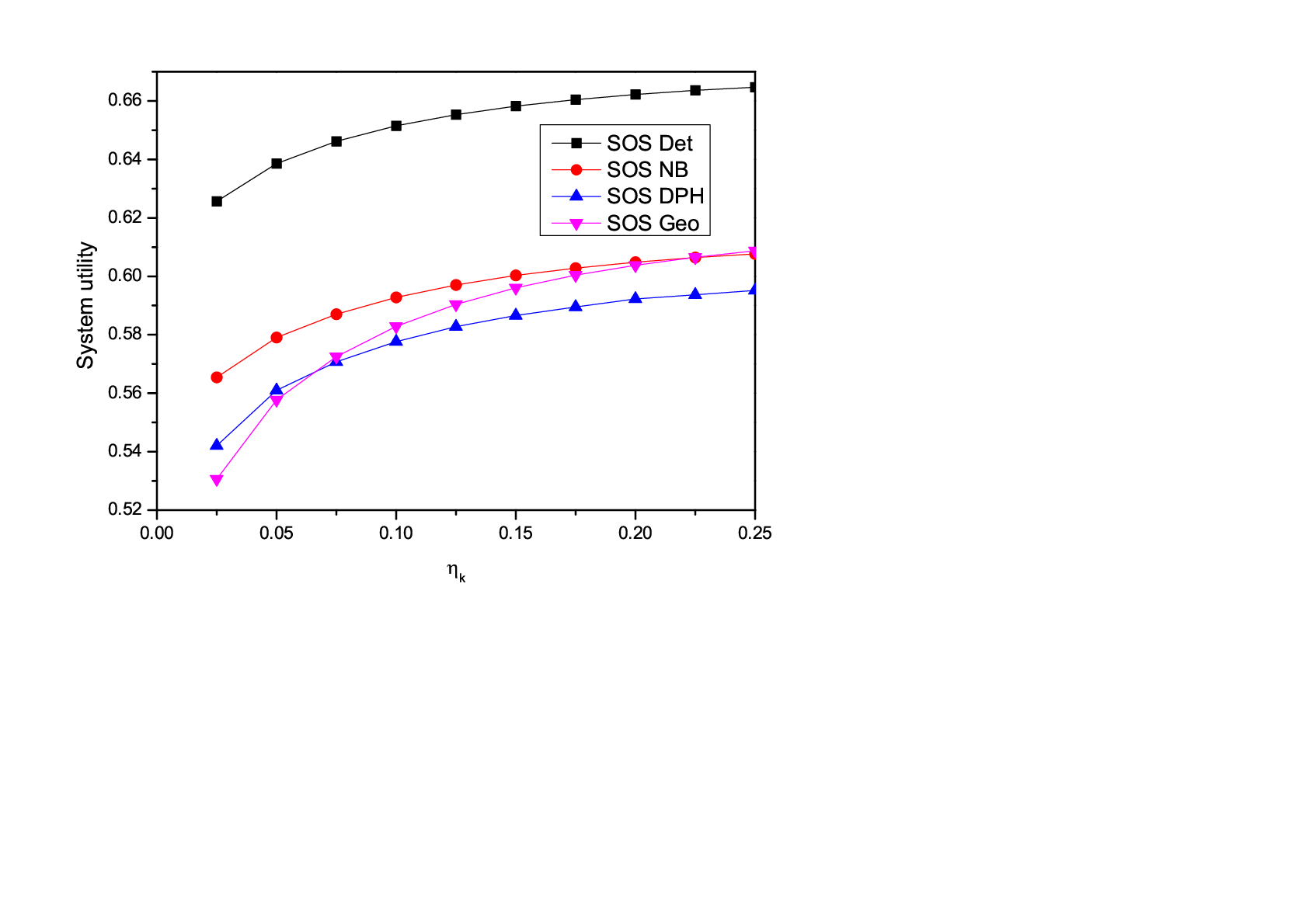}\vspace{-2.7cm}
		\subcaption{For multiple vacations}
		\label{fig:subfig6B}
	\end{minipage}
	\caption{Impact of vacation rate on the system utility for different SOS and with fixed FES.}
	\label{fig:6}
\end{figure}
\subsection{Impact of service rates on expected idle period and expected cycle length as arrival rate varies}
In this subsection, our main objective is to investigate the impact of service rates on the two magnificent performance measures like expected idle period and expected cycle length with the dynamical change of arrival rates. Here FES time, SOS time and vacation time, all follow the same distribution as considered in Subsection [\ref{AA}]. All the values of systems parameter like $`a$', $`b$' and $`\lambda$' are also same as mention in Subsection [\ref{AA}].
Figures [\ref{fig:7} - \ref{fig:8}], respectively, display the dynamic change of the values of expected idle period and expected cycle length for the variation of the values of $\mu$ and $\lambda$. From Figures [\ref{fig:7} - \ref{fig:8}], we conclude the following observations:  
\begin{itemize}
	\item When the service rates are fixed, a higher arrival rate consistently leads to the decreased expected idle period in the system. This results matches with the real life scenarios, as a higher arrival rate produces more customers entering into the system, keeping the server busy for more duration i.e., with less idle time. 
	\item On the other hand, the opposite scenario is observed when the service rates vary while the arrival rate remains fixed. In this case, the expected idle period increases. 
	\item  Also, it is interesting to note that the expected idle period is lower when the system operates under multiple vacations policy as compared to single vacation policy. This observation also lies within expectation, as in multiple vacations scenario, the server frequently goes for vacation if the required number of customers is not present in the queue at the end of each vacation duration. During this period, only the sensor component remains active, monitoring the queue length at each vacation's conclusion. Conversely, under a single vacation policy, at the end of vacation period, the server is still present and waiting for minimum threshold to be accumulated. The similar observations are made while we vary arrival rates and investigate the impact of service rates on expected cycle length, which is exhibited in [\ref{fig:8}]. 
\end{itemize}
\begin{figure}[!htb]
	\noindent\begin{minipage}{.5\textwidth}
		\centering
		\hspace*{-1.2cm}\includegraphics[scale=0.43]{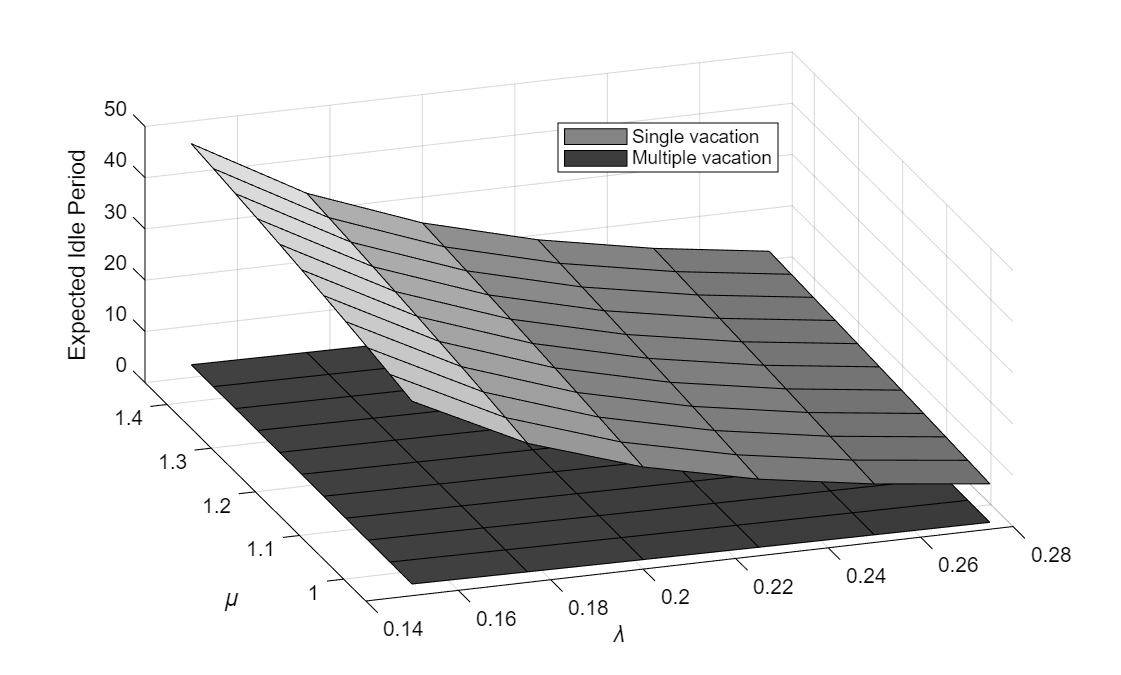}
		\captionof{figure}{\\ Expected idle period versus arrival rate \\ versus service rate.}\label{fig:7}
	\end{minipage}%
	\begin{minipage}{.5\textwidth}
		\centering
		\hspace*{-1.0cm}\includegraphics[scale=0.45]{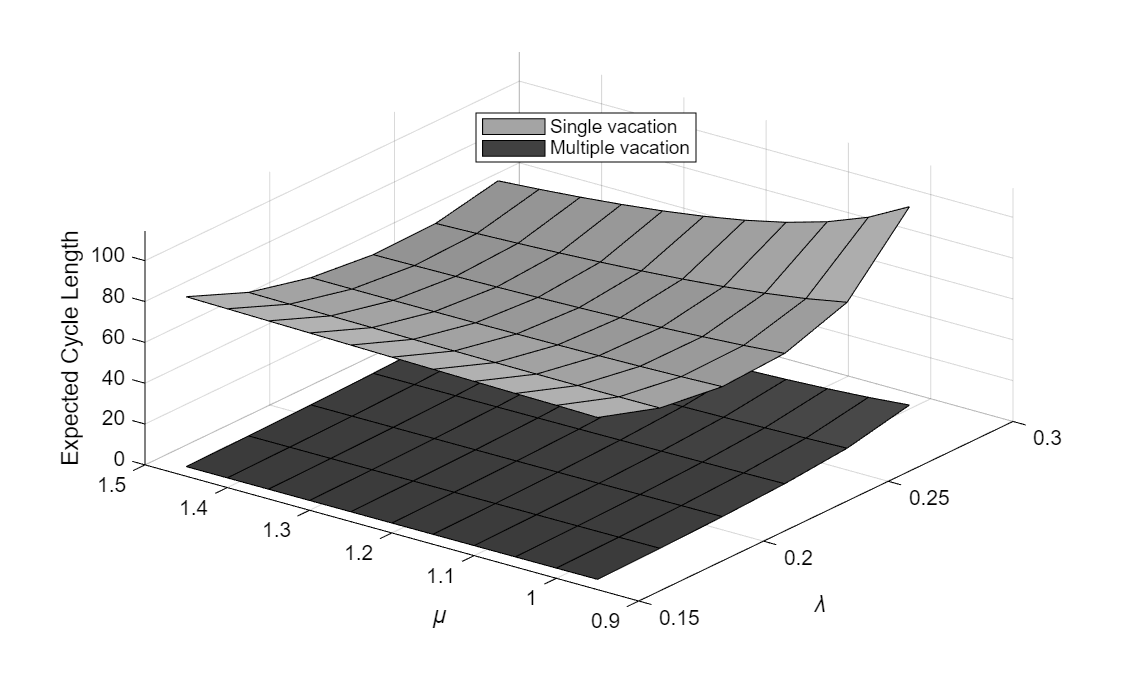}
		\captionof{figure}{\\ Expected cycle length versus arrival rate versus service rate.} \label{fig:8}
	\end{minipage}
\end{figure}
\section{Conclusion and future scopes}
In this article, we have modeled an infinite capacity discrete time vacation queue with the incorporation of group arrival, queue length dependent vacation, first essential and second optional service. The bivariate probability generating functions for stationary joint probability at both phases have been derived in a very simple and systematic way which avoids the construction of complex transition probability matrix. The developed queueing models, associated results and performance indices are very much beneficial in the concept of modeling of cloud computing, wireless telecommunication networks, health care systems, banking service systems etc. It also demonstrate the computational procedure for the readers and users. We have also included a vast amount of graphical observation in order to carry out the sensitivity check of the model parameters to the performance indices. In future, it will be a interesting research work to perform the aspects of modeling analysis and managerial implications of the queueing system with lag correlations taking places among inter arrivals which we may observe in various practical circumstances.  
\pagebreak


\end{document}